\newcommand{\para}[1]  {\vspace{0mm}\noindent{\textbf{#1}}}
\newcommand{\etal}{{et al.}}
\newcommand{\eg}{{e.g.}}
\newcommand{\wrt}{{w.r.t.}}
\newcommand{\ie}{{i.e.}}
\newcommand{\update}[1]{{\textcolor{black}{#1}}}
\newcommand {\mm}[1]        {\ifmmode{#1}\else{\mbox{\(#1\)}}\fi}
\newcommand{\onevec}[1]     {\mm{\mathbf{1}_{#1}}}
\newcommand{\Rspace}        {\mm{\mathbb{R}}}
\newcommand{\Mspace}        {\mm{\mathbb{M}}}
\newcommand{\lca}        {\mm{\mathrm{lca}}}
\newcommand{\parent}        {\mm{\mathrm{parent}}}
\newcommand{\Ccal}        {\mm{\mathcal{C}}}
\newcommand{\HC}        {\mm{\mathsf{Heated\,Cylinder}}}
\newcommand{\IF}        {\mm{\mathsf{Ionization\,Front}}}
\newcommand{\VS}        {\mm{\mathsf{Vortex\,Street}}}
\newcommand{\IS}        {\mm{\mathsf{Isabel}}}
\newcommand{\CL}        {\mm{\mathsf{Cloud}}}
\newcommand{\VF}        {\mm{\mathsf{Viscous\,Fingering}}}
\newcommand{\UCF}        {\mm{\mathsf{Unsteady\,Cylinder\,Flow}}}
\newcommand{\R} {\mathbb{R}}
\DeclareMathAlphabet{\mathcal}{OMS}{cmsy}{m}{n}
\DeclareMathDelimiter{(}{\mathopen} {operators}{"28}{largesymbols}{"00}
\DeclareMathDelimiter{)}{\mathclose}{operators}{"29}{largesymbols}{"01}
\LetLtxMacro{\originaleqref}{\eqref}
\renewcommand{\eqref}{Eq.~\originaleqref}
\newtheorem{theorem}{Theorem}
\newtheorem{corollary}{Corollary}
\newtheorem*{corollary*}{Corollary}
\newtheorem*{theorem*}{Theorem}
\newtheorem{lemma}{Lemma}
\newtheorem{proposition}{Proposition}
\newtheorem{remark}{Remark}
\newcommand{\argmax}{\mathrm{argmax}}
\begin{document}

\title{Flexible and Probabilistic Topology Tracking with Partial Optimal Transport}

\author{
Mingzhe Li, 
Xinyuan Yan, 
Lin Yan, 
Tom Needham,
Bei Wang
\IEEEcompsocitemizethanks{\IEEEcompsocthanksitem 
M. Li, X. Yan, and B. Wang are with the University of Utah, Salt Lake City, UT, 84112. \protect\\
E-mails: mingzhe.li@utah.edu, \{xinyuan.yan, beiwang\}@sci.utah.edu
\IEEEcompsocthanksitem 
L. Yan is with the Iowa State University,  Ames, IA, 50011.\protect\\
E-mail: linyan@iastate.edu
\IEEEcompsocthanksitem 
T. Needham is with the Florida State University, Tallahassee, FL, 32306.\protect\\
E-mail: tneedham@fsu.edu}
\thanks{Manuscript received April 19, 2005; revised August 26, 2015.}}

\markboth{Journal of \LaTeX\ Class Files,~Vol.~14, No.~8, August~2015}%
{Li \MakeLowercase{\textit{et al.}}: Flexible Topological Tracking}

\IEEEtitleabstractindextext{
\begin{abstract}
In this paper, we present a flexible and probabilistic framework for tracking topological features in time-varying scalar fields using merge trees and partial optimal transport.
Merge trees are topological descriptors that record the evolution of connected components in the sublevel sets of scalar fields.
We present a new technique for modeling and comparing merge trees using tools from partial optimal transport. 
In particular, we model a merge tree as a measure network, that is, a network equipped with a probability distribution, and define a notion of distance on the space of merge trees inspired by partial optimal transport. Such a distance offers a new and flexible perspective for encoding intrinsic and extrinsic information in the comparative measures of merge trees. 
More importantly, it gives rise to a partial matching between topological features in time-varying data, thus enabling flexible topology tracking for scientific simulations. 
Furthermore, such partial matching may be interpreted as probabilistic coupling between features at adjacent time steps, which gives rise to probabilistic tracking graphs. 
We derive a stability result for our distance and provide numerous experiments indicating the efficacy of our framework in extracting meaningful feature tracks.

\end{abstract}
\begin{IEEEkeywords}
Merge trees, feature tracking, optimal transport, topological data analysis, topology in visualization
\end{IEEEkeywords}
}

\maketitle

\IEEEdisplaynontitleabstractindextext

\IEEEpeerreviewmaketitle

\IEEEraisesectionheading{\section{Introduction}
\label{sec:introduction}}

\IEEEPARstart{F}{eature} extraction and tracking for time-varying data play an important role in scientific visualization. 
Over the past two decades, topology-based techniques have been successfully applied to study the evolution of features of interest, which is at the core of many scientific applications, including   combustion~\cite{BremerWeberTierny2011}, climatology~\cite{EngelkeMasoodBeran2021}, and astronomy~\cite{FriesenAlmgrenLukic2016}. 
In particular, topology-based techniques utilize topological descriptors such as persistence diagrams and merge trees for feature extraction and tracking in scalar field data;   see~\cite{HeineLeitteHlawitschka2016,YanMasoodSridharamurthy2021} for surveys.  

In this paper, we present a novel and flexible framework for tracking features (\ie, critical points) in time-varying scalar fields by combining merge trees with partial optimal transport. 
Merge trees are topological descriptors that record the evolution of connected components in the sublevel sets of scalar fields. 

The theory of optimal transport studies distances between probability distributions. 
In its simplest form, it studies the transportation problem of moving a pile of dirt (i.e., a probability distribution) to a target pile (i.e., another probability distribution) with minimum cost. 
The classic Wasserstein distance in optimal transport  is thus called the Earth mover's distance. 
Whereas classic optimal transport preserves the total amount of dirt (i.e., the total mass) to be transported, partial optimal transport requires a fraction of the total mass to be transported. The contributions of this paper include: 
\begin{itemize}
\item We present a new technique for modeling and comparing merge trees using tools from partial optimal transport. In particular, we model a merge tree as a measure network (that is, a network equipped with a probability distribution) and define a partial fused Gromov-Wasserstein distance between a pair of merge trees. 
\item We show that such a distance offers a new and flexible way to encode intrinsic and extrinsic information in the comparative measures of merge trees. We also derive a stability result for our distance under a restrictive setting.  
\item Most importantly, we demonstrate via extensive experiments that such a distance gives rise to a partial matching between topological features in time-varying data, thus enabling flexible topology tracking for scientific simulations. 
\item Finally, the partial optimal transport provides a probabilistic coupling between features at adjacent time steps, which are then visualized by weighted tracks from probabilistic tracking graphs. 
\end{itemize}
Furthermore, our implementation is open source\footnote{ 
\url{https://github.com/tdavislab/GWMT}}, and comes with a video that demonstrates the probabilistic tracking graph. 
 	
\para{Overview.} 
After reviewing related work on optimal transport and topology-based feature tracking in~\autoref{sec:related-work}, we review the technical background of merge trees, measure networks, and various distances used in (partial) optimal transport in~\autoref{sec:background}. 
We then describe our novel feature-tracking framework in \autoref{sec:method}. In particular, we introduce a new distance---\emph{partial fused Gromov-Wasserstein distance}---in~\autoref{sec:pfgw} and describe its theoretical properties (\autoref{sec:stability-overview}).  
We demonstrate the utility of our framework with extensive experiments and comparisons with the state-of-the-art (\autoref{sec:results}).   
A direct consequence of our framework is that it enables richer representations of tracking graphs, referred to as \emph{probabilistic tracking graphs}, for which we give a visual demonstration in~\autoref{sec:tracking-graphs}.

\section{Related Work}
\label{sec:related-work}

\para{Optimal transport and Gromov-Wasserstein distance.} 
This paper builds upon the \emph{Gromov-Wasserstein (GW) distance}, a tool from optimal transport for deriving probabilistic correspondences between nodes of different networks. 
Specifically, we use GW distance to study \emph{merge trees}, which are topological descriptors of scalar fields; see~\autoref{sec:background} for formal definitions. 
The GW distance was introduced by M\'{e}moli~\cite{Memoli2007,Memoli2011} as a way to compare metric measure spaces (\ie, compact metric spaces endowed with probability measures), with a view to shape analysis applications. More recently, this framework was extended to allow comparisons between networks endowed with kernel functions that are not necessarily metrics \cite{PeyreCuturiSolomon2016,ChowdhuryMemoli2019}. 
The GW distance has become an important tool in machine learning applications, such as graph matching and partitioning \cite{xu2019gromov,xu2019scalable,chowdhury2021generalized}, natural language processing \cite{Alvarez-MelisJaakkola2018}, and alignment of  multiomics data \cite{demetci2020gromov}. 

A number of recent works have focused specifically on applications of GW distance to merge trees. 
Combining a Riemannian interpretation of GW distance developed in \cite{Sturm2012,ChowdhuryNeedham2020} with matrix sketching techniques, Li \etal \cite{LiPalandeYan2023} introduced a pipeline for finding structural representatives among a set of merge trees.
 In \cite{curry2022decorated}, GW techniques were combined with theory developed in \cite{GasparovicMunchOudot2019} in order to give an estimate of an \emph{interleaving distance} on the space of merge trees. 
 Theoretical properties of a refined generalization of GW distance between merge tree-like objects called \emph{ultra dissimilarity spaces} were studied in \cite{MemoliMunkWan2023}. 
 
In this paper, we present a novel distance between merge trees, called the \emph{partial fused Gromov-Wasserstein  (pFGW) distance}, which is built upon variants of the GW pipeline, including the Fused Gromov-Wasserstein distance~\cite{VayerChapelFlamary2020} and partial optimal transport~\cite{ChapelAlayaGasso2020}.  

\para{Merge tree comparisons.} 
A number of recent works have studied the distances between merge trees or, more generally, Reeb graphs. 
For instance, the functional distortion distance~\cite{bauer2014measuring}, the interleaving distance~\cite{morozov2013interleaving,de2016categorified}, the Gromov-Hausdorff distance~\cite{AgarwalFoxNath2018}, the Reeb graph edit distance~\cite{Bauer2018b, DiFabio2016}, the merge tree matching distance~\cite{Bollen2023}, and the distance based on branch decomposition~\cite{BeketayevYeliussizovMorozov2014} are equipped with some desirable theoretical properties, including stability; see~\cite{YanMasoodSridharamurthy2021, bollen2022reeb} for surveys. 
Our work provides a new stability result of the GW distance between merge trees with theoretical justifications.
 
\para{Topology-based feature tracking.} 
Topological techniques have been used for feature extraction and tracking in scalar fields~\cite{YanMasoodSridharamurthy2021} and vector fields~\cite{PostVrolijkHauser2003, BujackYanHotz2020}. 
 
Topology has been used to track features for time-varying scalar fields by solving an explicit correspondence problem. 
A number of topological descriptors have been used for feature tracking, including persistence diagrams, merge trees, contour trees, Reeb graphs, extremum graphs, and Morse complexes; see~\cite[Sec.~7.1]{YanMasoodSridharamurthy2021} for a survey.
Recently, persistence diagrams and an extension of the Wasserstein metric have been used to perform topology tracking~\cite{SolerPlainchaultConche2018, SolerPetitfrereDarche2019}. 
A metric on the space of merge trees was recently introduced~\cite{PontVidalDelon2021} based on the $L_2$-Wasserstein distance between extremum persistence diagrams.  
Yan {\etal}~\cite{YanMasoodRasheed2022} performed geometry-aware comparisons of merge trees using labeled interleaving distances.
Their framework uses a labeling step to find a correspondence between the critical points of two merge trees, and integrates geometric information of the data domain in the labeling process~\cite{YanMasoodRasheed2022}. 
Instead, our distance computation utilizes information from the data domain within the distances themselves. 

Our pFGW distance applies to any task involving merge tree comparisons; in this paper, we focus on feature tracking in time-varying scalar fields using merge trees. 
A popular approach to obtain the correspondence between features is to compute the overlap between regions or volumes surrounding the features. For instance, Lukasczyk \etal~captured the evolution of superlevel set components~\cite{LukasczykWeberMaciejewski2017, lukasczyk2019dynamic} based on the overlaps between  their corresponding regions.
Saikia \etal~\cite{saikia2017global, saikia2017fast} presented a strategy for topological feature tracking with merge trees called Global Feature Tracking (GFT). 
Their strategy determines the similarity of subregions segmented by merge trees at adjacent time steps, based on the overlap size between two regions, and the similarity between histograms of scalar values within each region. 
In GFT, the information of a critical point includes its subtree, whereas our work considers the relation between every pair of critical points in the merge tree. 
Furthermore, GFT uses the segmentation of scalar fields to compare the overlapping subtree regions, which can be memory-consuming. 

Recent works~\cite{SolerPlainchaultConche2018, SolerPetitfrereDarche2019,PontVidalDelon2021} have utilized persistence diagrams for feature tracking. 
Alternatively, these approaches could be considered as solving an assignment problem using branch decompositions of merge trees. 
Such assignment problems are closely related to (partial) optimal transport~\cite{divol2021understanding}. 

In particular, Soler \etal~introduced the Lifted Wasserstein Matcher (LWM)~\cite{SolerPlainchaultConche2018} framework, where features are tracked based on the optimal matching between persistence diagrams under the Wasserstein distance. 
The cost of matching a pair of points in the persistence diagram is a weighted linear combination of: (a) the geometric distances between the extrema involved in the persistence pairs, and (b) the differences between the birth and death coordinates of the points in the diagram. 
The cost of matching a point to its diagonal projection (causing disappearances and appearances of features) is a weighted linear combination of: (a)  the geometric distance between the critical points associated with the point in the diagram, and (b) the birth and death coordinates of the point.  
Both LWM and pFGW approaches make use of geometric locations of critical points. LWM encodes topological information via birth and death coordinates of the points in the persistence diagram, whereas pFGW encodes topological constraints on the matched critical points via their relations within merge trees.
Whereas LWM solves an assignment problem deterministically, pFGW gives rise to probabilistic matching between features.
Another interesting feature of our approach is that we are able to derive a stability result (Theorem \ref{thm:stability-main}), which has so far not been established for some of the other methods (\eg, \cite{PontVidalDelon2021}) in the literature.

Although this paper focuses on feature tracking in scalar fields, we review feature tracking in vector fields briefly, which  also aims to associate features from one time step to the next, and to detect topological events.  
Helman and Hesselink~\cite{HelmanHesselink1989, helman1990surface} tracked critical points in vector fields over time, and Wischgoll \etal~\cite{wischgoll2001tracking} tracked closed streamlines and detected bifurcations.
Tricoche \etal~\cite{tricoche2001topology, tricoche2002topology} provided critical point tracking using spacetime grids. 
Theisel and Seidel~\cite{theisel2003feature} introduced  Feature Flow Fields (FFF), followed by stable~\cite{weinkauf2011stable} and combinatorial~\cite{reininghaus2012efficient} variants.
See~\cite[Sec.~4.1]{BujackYanHotz2020} for a survey. 

\para{Feature tracking graphs} have been used to visualize the evolution (i.e.,~births, deaths, merging and splitting) of topological features over time (e.g.,~\cite{LukasczykAldrichSteptoe2017, widanagamaachchi2012interactive}). 
A probabilistic tracking graph may arise when the edges in the graph are equipped with weights that correspond to the amount of spatial overlap between the connected features. 
In our setting, we introduce a different notion of a probabilistic feature tracking graph, which uses the coupling probabilities between features across time steps. 
These probabilities are derived based on the locations of critical points as well as their structural relations captured by the merge trees.

\section{Technical Background}
\label{sec:background}

We combine ingredients from diverse areas: topology in visualization, optimal transport, and measure theory. 
We first review the merge tree of a scalar field in topology-based visualization (\autoref{sec:mt}). 
We then introduce concepts from optimal transport and measure theory, including measure networks  (\autoref{sec:measure-networks}), Wasserstein distance, Gromov-Wasserstein (GW) distance, and fused Gromov-Wasserstein (FGW) distance (\autoref{sec:fgw}). 
We further discuss the partial Wasserstein and  partial GW distances within partial optimal transport, which set up the foundation for our new partial FGW distance (\autoref{sec:pfgw}). 

\subsection{Merge Trees}
\label{sec:mt}

Let $f: \Mspace \to \Rspace$ be a scalar field defined on the domain of interest $\Mspace$, where $\Mspace$ can be a manifold or a subset of $\Rspace^d$.
For our experiments, $\Mspace \subset \Rspace^2$ or $\Rspace^3$. 
Merge trees capture the connectivity among the \emph{sublevel sets} of $f$, \ie, $\Mspace_a = f^{-1}(-\infty, a]$. 
Formally, two points $x, y \in \Mspace$ are considered to be \emph{equivalent}, denoted by $x \sim y$, if $f(x) = f(y)=a$, and $x$ and $y$ belong to the same connected component of a sublevel set $\Mspace_a$. 
The \emph{merge tree}, $T(\Mspace, f) = \Mspace/{\sim}$, is the quotient space obtained by gluing together points in $\Mspace$ that are equivalent under the relation $\sim$; see~\autoref{fig:mt} for an example. 

\begin{figure}[!ht]
    \centering
    \includegraphics[width=0.98\columnwidth]{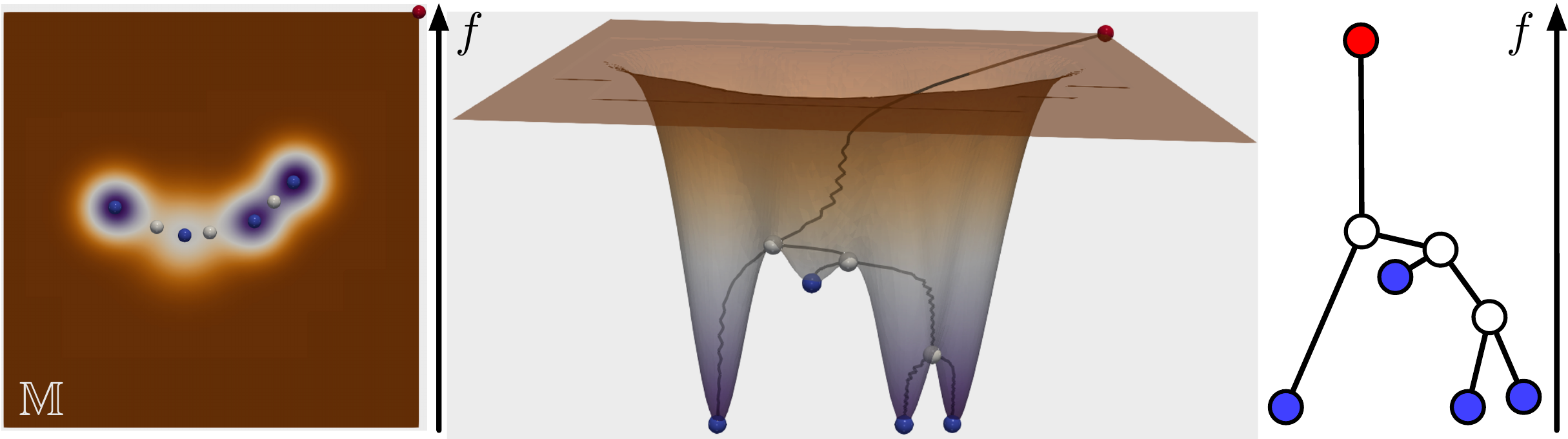}
    \vspace{-2mm}
    \caption{An example of a merge tree from a height field $f: \Mspace \to \Rspace$ defined on a 2D domain. From left to right: (a) 2D scalar field visualization with local minima in blue, saddles in white, and local maxima in red; (b) a merge tree embedded in the graph of the scalar field; and (c) an abstract (straight-line) visualization of a merge tree as a rooted tree equipped with the height function.} 
    \label{fig:mt}
\end{figure}

The construction of a merge tree for a given $f:\Mspace \to \Rspace$ is described procedurally as follows: we sweep the function value $a$ from $-\infty$ to $\infty$, and we create a new branch originating at a leaf node for each local minimum of $f$. 
As $a$ increases, such a branch is extended as its corresponding component in $\Mspace_a$ grows until it merges with another branch at a saddle point. 
Assuming $\Mspace$ is connected and $f$ achieves a unique global maximum, then all branches eventually merge into a single component, which corresponds to the root of the tree. 
For a given merge tree, leaves, internal nodes, and root node represent the minima, merging saddles, and global maximum of $f$, respectively. 
\autoref{fig:mt} displays a height function $f:\Mspace \subset \Rspace^2 \to \Rspace$ in (a),  together with its corresponding merge tree embedded in the graph of the scalar field, \ie,~$\{(x,f(x)):x\in \Mspace\}$ in (b).  
Abstractly, a merge tree $T$ is a rooted tree equipped with $f$ restricted to its node set, $f: V \to \Rspace$, as shown in (c). 

\subsection{Measure Networks}
\label{sec:measure-networks}

A finite graph $G$ may be represented as a \emph{measure network}~\cite{ChowdhuryMemoli2019} using a triple $(V, p, W)$: $V$ is the set of $n$ nodes in the graph, $p$ is a probability measure supported on the nodes of $G$, and $W \in \R^{|V| \times |V|}$ is a matrix that encodes  relational information between the nodes. 
For example, $W$ may be a weighted adjacency matrix~\cite{xu2019scalable}, a graph Laplacian~\cite{ChowdhuryNeedham2020}, or a matrix of graph distances~\cite{hendrikson2016using}. 
Without prior knowledge about $G$, $p$ is typically taken to be uniform; that is, $p(x) = 1/n$, for each $x \in V$. 
We represent $p$ as a vector of size $n$, $p = \frac{1}{n} \onevec{n}$, where $\onevec{n} = (1,1,\dots, 1)^{T} \in \Rspace^{n}$. In the following sections, we slightly abuse the  notation and identify a graph $G$ with a particular choice of measure network representation $(V,p,W)$.

A measure network $G = (V, p, W)$ may be equipped with additional information on its nodes, namely, the \emph{node attributes}. 
That is, we associate each node $x \in V$ with an attribute $a$ in some attribute space---a metric space denoted as $(A, d_{A})$.
Possible node attributes include labels on the nodes or information derived from the data domain from which $G$ arises.

\subsection{Wasserstein and Gromov-Wasserstein  Distance}
\label{sec:fgw}

Let $G_{1}=(V_1, p_{1}, W_{1})$ and $G_{2}=(V_2, p_{2}, W_{2})$ be a pair of measure networks with $n_{1}$ and $n_2$ nodes, respectively. 
Let $[n]$ denote the set $\{1,2,\dots,n\}$ and suppose that $V_1=\{x_i\}_{i \in [n_1]}$ and $V_2=\{y_j\}_{j \in [n_2]}$. 
A \emph{coupling} between probability measures $p_{1}$ and $p_{2}$ is a joint probability measure on $V_1 \times V_2$ whose marginals agree with $p_1$ and $p_2$. 
That is, a coupling is represented as an $n_{1}\times n_{2}$ non-negative matrix $C$ such that $C\onevec{n_{2}} = p_{1}$ and $C^{T}\onevec{n_{1}} = p_{2}$. 
The set of all such couplings is denoted as $\Ccal$, that is, 
\begin{align}
\label{eq:couplings}
\Ccal = \Ccal(p_1, p_2) = \{C \in \Rspace_{+}^{n_1 \times n_2} \mid C\onevec{n_{2}} = p_{1}, C^{T}\onevec{n_{1}} = p_{2}\}. 
\end{align}

\para{Wasserstein distance.}
Classical optimal transport theory compares probability measures in terms of the Wasserstein distance. 
Given a pair of measure networks $G_1=(V_1, p_1, W_1)$ and $G_2=(V_2,  p_2, W_2)$, where nodes $x_i \in V_1$ and $y_j \in V_2$ are equipped with attributes $a_i$ and $b_j$ within the same attribute space, we define their \emph{$q$-th Wasserstein distance} based on distances between node attributes to be  
\begin{align}
d^{W}_q(G_{1},G_{2}) = \min_{C\in\Ccal} \left(\sum_{i,j} d_A(a_i,b_j)^q C_{i,j}\right)^{1/q}.
\label{eq:w-distance}
\end{align}
We refer to $d_A(a_i, b_j)$ as the \emph{attribute distance} between nodes $x_i \in V_1$ and $y_j \in V_2$. 
The Wasserstein distance aims to minimize the weighted sum of attribute distance between matched nodes. 
The minimizers in~\eqref{eq:w-distance} are referred to as \emph{optimal couplings}. 

\para{GW distance} was introduced by M\'{e}moli as a way to compare metric measure spaces~\cite{Memoli2007,Memoli2011}.  Chowdhury and M\'{e}moli~\cite{ChowdhuryMemoli2019} showed that a generalized GW distance is a metric on the space of \emph{measure networks}. 
The key idea behind the GW distance is to find a \emph{probabilistic matching} between a pair of measure networks by searching the convex set of couplings of the probability measures defined on the networks. 
Following~\cite{ChowdhuryMemoli2019}, the \emph{$q$-th GW distance} between two measure networks is defined as 
\begin{align}
d^{GW}_{q}(G_{1},G_{2}) = \frac{1}{2} \min_{C\in\Ccal} \left(\sum_{ i,j,k,l}  |W_1(i,k) - W_{2}(j,l)|^q C_{i,j}C_{k,l} \right)^{1/q}.
\label{eq:gw}
\end{align}
The term $|W_1(i,k) - W_{2}(j,l)|$ is considered as the \emph{distortion} of matching pairs of nodes $(x_i, x_k)$ in $G_1$ with $(y_j,y_l)$ in $G_2$. 

\para{FGW distance.} 
Vayer {\etal} introduced the fused Gromov-Wasserstein (FGW) distance between attributed graphs and other structured objects~\cite{VayerCourtyTavenard2019,VayerChapelFlamary2020}. 
We describe their framework in the setting of measure networks.  
The FGW distance is a trade-off between the Wasserstein distance in~\eqref{eq:w-distance} and the GW distance in~\eqref{eq:gw}. 
For $q \in [1, \infty)$ and a trade-off parameter $\alpha \in [0,1]$, the \emph{FGW distance}  between attributed measure networks $G_{1}$ and $G_{2}$ is defined (following \cite{VayerCourtyTavenard2019}) as  
\begin{align}
& d^{FGW}_q(G_{1},G_{2}) = \nonumber\\
& \min_{C\in\Ccal}  \sum_{ i,j,k,l} [(1-\alpha) d_A(a_i,b_j)^q 
+ \alpha |W_1(i,k) - W_{2}(j,l))|^q] C_{i,j}C_{k,l}.
\label{eq:fgw}
\end{align}
Here, $C$ is considered as a soft assignment matrix, and $\alpha$ gives a trade-off between labels and structures. 
As shown in~\autoref{sec:method}, \eqref{eq:fgw} plays an important role in encoding both intrinsic and extrinsic information for merge tree comparisons. 

The FGW distance enjoys a number of desirable properties (see~\cite{VayerCourtyTavenard2019} and its supplementary material, as well as~\cite{VayerChapelFlamary2020}). 
Specifically, it interpolates between the Wasserstein distance on the labels and GW distances on the structures:  
\begin{theorem}\cite[Theorem 3.1]{VayerCourtyTavenard2019}
As $\alpha \to 0$, the FGW distance recovers the Wasserstein distance, 
\begin{align}
\lim_{\alpha \to 0} d^{FGW}_q = (d^{W}_q)^q. 
\end{align}
As $\alpha \to 1$, the FGW distance recovers the GW distance  (ignoring the constant factor in~\eqref{eq:gw}), 
\begin{align}
\lim_{\alpha \to 1} d^{FGW}_q = (d^{GW}_q)^q. 
\end{align}
\end{theorem}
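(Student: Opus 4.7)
The plan is to exploit the coupling constraint, which will immediately split the FGW objective into a Wasserstein piece plus a GW piece. First, I would observe that every $C \in \Ccal$ satisfies $\sum_{k,l} C_{k,l} = 1$ (the total mass of the joint distribution). Since the term $d_A(a_i,b_j)^q$ in the FGW objective does not depend on $(k,l)$,
\begin{align*}
\sum_{i,j,k,l} d_A(a_i,b_j)^q\, C_{i,j} C_{k,l} = \sum_{i,j} d_A(a_i,b_j)^q\, C_{i,j}.
\end{align*}
Substituting into the definition of $d^{FGW}_q$ gives
\begin{align*}
d^{FGW}_q(G_1,G_2) = \min_{C \in \Ccal}\bigl[(1-\alpha) F_W(C) + \alpha F_{GW}(C)\bigr],
\end{align*}
where $F_W(C) := \sum_{i,j} d_A(a_i,b_j)^q C_{i,j}$ and $F_{GW}(C) := \sum_{i,j,k,l} |W_1(i,k)-W_2(j,l)|^q C_{i,j} C_{k,l}$ are precisely the objectives defining $(d^W_q)^q$ and (up to the $1/2^q$ constant ignored in the statement) $(d^{GW}_q)^q$.

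The two limit claims then reduce to a simple fact: the minimum of a convex combination of two continuous functions on a compact set converges to the minimum of each endpoint as the weight tends to $1$. I would justify this via a one-line uniform convergence argument. Since the coupling polytope $\Ccal$ is compact in $\Rspace^{n_1 \times n_2}$ and $F_W, F_{GW}$ are continuous (in fact polynomial) in $C$, the quantity $K := \sup_{C \in \Ccal} |F_{GW}(C) - F_W(C)|$ is finite. For any $C \in \Ccal$,
\begin{align*}
\bigl|(1-\alpha) F_W(C) + \alpha F_{GW}(C) - F_W(C)\bigr| = \alpha\, |F_{GW}(C) - F_W(C)| \leq \alpha K.
\end{align*}
Taking the minimum over $C$ on both sides and invoking the elementary inequality $|\min f - \min g| \leq \sup |f-g|$ yields $|d^{FGW}_q - (d^W_q)^q| \leq \alpha K$, which drives the first limit. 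The $\alpha \to 1$ case is symmetric: bound $|(1-\alpha) F_W + \alpha F_{GW} - F_{GW}| \leq (1-\alpha) K$ and repeat.

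There is no substantive obstacle here; the argument is essentially a two-step unpacking. The only points that require any care are (i) verifying that the marginal identity $\sum_{k,l} C_{k,l} = 1$ cleanly decouples the ``Wasserstein'' block from the coupling-quadratic $GW$ block, and (ii) matching conventions with the $\frac{1}{2}$ prefactor and $1/q$-th power present in the definition of $d^{GW}_q$ but absent from the raw FGW objective — the qualifier \emph{``ignoring the constant factor''} in the theorem statement is precisely what reconciles this.
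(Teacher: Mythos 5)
The paper does not prove this statement itself; it imports it by citation as Theorem 3.1 of Vayer \etal, so there is no in-paper argument to compare against. Your proof is correct and complete: the marginal identity $\sum_{k,l}C_{k,l}=1$ does cleanly collapse the Wasserstein block, the resulting convex-combination form together with compactness of $\Ccal$ and the bound $|\min f - \min g|\leq \sup|f-g|$ gives uniform (hence limiting) convergence to each endpoint, and you correctly flag that the ``ignoring the constant factor'' qualifier is what absorbs the $\tfrac{1}{2}$ prefactor and the $1/q$-th power in the definition of $d^{GW}_q$. This is essentially the standard argument one finds in the cited reference.
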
 
Furthermore, $d^{FGW}_q$ defines a metric for $q = 1$ and a semimetric for $q \geq 2$ (\ie, the triangular inequality is relaxed by a factor $2^{q-1}$)~\cite[Theorem 3.2]{VayerCourtyTavenard2019}. 

For the remainder of the paper, we work with $d^{FGW}_q$ for $q=2$. 
For easy reference, we have
\begin{align}
& d^{FGW}_2(G_{1},G_{2}) = \nonumber \\
& \min_{C\in\Ccal}  \sum_{ i,j,k,l} [(1-\alpha) d_A(a_i,b_j)^2 
+ \alpha |W_1(i,k) - W_{2}(j,l))|^2] C_{i,j}C_{k,l}.
\label{eq:fgw2}
\end{align}
The choice of $q=2$ is justified for computational reasons: given two measure networks with $n_1$ and $n_2$ nodes, respectively, we can simplify the computation of the tensor product involved in the evaluation of the GW loss from 
$\mathcal{O}(n_1^2 n_2^2)$ to $\mathcal{O}(n_1 n_2^2 + n_1^2 n_2)$ when considering $q=2$~\cite{PeyreCuturiSolomon2016}.

\subsection{Partial Wasserstein and Partial GW Distances}
\label{sec:pgw}

Our final ingredient comes from partial optimal transport (see, \eg, \cite{figalli2010optimal,benamou2015iterative,chizat2018scaling}). 
We review the framework of Chapel \etal~\cite{ChapelAlayaGasso2020} that studies partial Wasserstein and partial GW distances.  
Notations are simplified in our setting of measure networks. 
Partial optimal transport is appropriate in the setting of feature tracking, when we need to account for mass changes due to the appearances and disappearances of features.   

\para{Partial Wasserstein distance.}
Partial optimal transport focuses on transporting a fraction $0 \leq m \leq 1$ of the mass as cheaply as possible~\cite{ChapelAlayaGasso2020}. 
The set of admissible couplings is defined to be 
\begin{align}
\label{eq:pw-couplings}
&\Ccal_m = \Ccal_m(p_1, p_2) \nonumber \\
&= \{C \in \Rspace_{+}^{n_1 \times n_2} \mid C\onevec{n_{2}} \leq p_{1}, C^{T}\onevec{n_{1}} \leq p_{2}, \mathbf{1}_{n_{1}}^T C \onevec{n_{2}} = m \},
\end{align}
and the \emph{partial $q$-Wasserstein distance} is defined as 
\begin{align}
d^{pW}_q(G_{1},G_{2}) = \min_{C\in\Ccal_m} \left(\sum_{i,j} d_A(a_i,b_j)^q C_{i,j}\right)^{1/q}.
\label{eq:pw-distance}
\end{align}
A main difference between partial Wasserstein distance and Wasserstein distance is that we replace the equalities in~\eqref{eq:couplings} with inequalities in~\eqref{eq:pw-couplings} to account for ``partial mass transport''.

\para{Partial GW distance.}
In a similar fashion, given the set of admissible couplings $\Ccal_m$, the \emph{partial q-GW distance} is defined as
\begin{align}
d^{pGW}_{q}(G_{1},G_{2}) = \frac{1}{2} \min_{C\in\Ccal_m} \left(\sum_{ i,j,k,l} |W_1(i,k) - W_{2}(j,l)|^q C_{i,j}C_{k,l} \right)^{1/q}.
\label{eq:pgw}
\end{align}

\section{Method}
\label{sec:method} 

We now describe our novel framework that performs feature tracking with partial optimal transport. 
We first introduce a new, partial Fused Gromov-Wasserstein (pFGW) distance between a pair of measure networks (\autoref{sec:pfgw}). 
We then model and compare merge trees as measure networks (\autoref{sec:model}).
The pFGW distance gives rise to a partial matching between topological features (\ie, critical points) in merge trees, thus enabling flexible topology tracking for time-varying data (\autoref{sec:tracking}). 

\subsection{Partial Fused Gromov-Wasserstein Distance} 
\label{sec:pfgw} 
For topology-based feature tracking, oftentimes features (\ie, critical points) will appear and disappear in time-varying data. Features that appear at time $t$ do not need to be matched with features at time $t-1$; similarly, features that disappear at time $t$ do not need to be matched with features at time $t+1$. 
Therefore, we need to introduce a partial Fused Gromov-Wasserstein (pFGW) distance for feature tracking to handle the appearances and disappearances of multiple features across time.  

The \emph{pFGW distance} is defined based on the set of admissible couplings $\Ccal_m$ in \eqref{eq:pw-couplings} and the FGW distance in \eqref{eq:fgw}. 
Given a pair of measure networks $G_1$ and $G_2$, 
formally, we have 
\begin{align}
& d^{pFGW}_q(G_{1},G_{2}) = \nonumber\\
& \min_{C\in\Ccal_m}  \sum_{ i,j,k,l} [(1-\alpha) d_A(a_i,b_j)^q 
+ \alpha |W_1(i,k) - W_{2}(j,l))|^q] C_{i,j}C_{k,l}.
\label{eq:pfgw}
\end{align}
Notice that the newly defined pFGW distance is not too different from the FGW distance, except that it is more flexible by allowing a $m$ fraction of the total mass to be transported.  
In practice, we set $q=2$ and work with $d^{pFGW}_2$.

We remark that a related distance was recently introduced in~\cite{thual2022aligning} and applied to brain anatomy alignment. The difference between the two distances is that \cite{thual2022aligning} employs a different notion of partial optimal transport (rather, \emph{unbalanced} optimal transport), where the coupling set is expanded to all joint probability measures and disagreement of marginals is penalized by Kullback-Liebler (KL) divergence. In~\cite{thual2022aligning}, instead of choosing the amount of mass to be preserved, one must tune the relative weight of the KL regularization term.

\para{Computing pFGW distance.}
Computing the pFGW distance is a slight modification of the FGW computation in~\cite{VayerChapelFlamary2020} with ingredients of the Frank-Wolfe optimization algorithm~\cite{FrankWolfe1956} for partial GW computation~\cite{ChapelAlayaGasso2020}. 
On a high-level, computing the partial Wasserstein and the partial GW distances relies on adding dummy nodes in the transportation plan and allowing such dummy nodes to ``absorb'' a fraction of the mass during transportation.  
With these dummy nodes added onto the marginals, the Frank-Wolfe algorithm then solves an iterative first-order optimization for constrained convex optimization. 
Our implementation is based on a minor modification of the code for the FGW framework in~\cite{VayerCourtyTavenard2019} (\url{https://github.com/tvayer/FGW}) with components from the partial optimal transport solvers, part of the open-source Python library for optimal transport~\cite{flamary2021pot} 
(\url{https://pythonot.github.io/gen_modules/ot.partial.html}). 

\subsection{Modeling Merge Trees as Measure Networks}
\label{sec:model}

Unless otherwise specified, we represent a merge tree $T$ as an attributed measure network $(V, p, W)$ for the remainder of this paper, where the attributes, weight matrix $W$, and probability measure $p$ are defined below. 

Given a merge tree $T = (V, p, W)$, information that is typically topological and intrinsic to a merge tree, such as tree distances, may be encoded via the weight matrix $W$ and the probability measure $p$ (\autoref{sec:intrinsic}). 
Information that is extrinsic to a merge tree may be encoded via the \emph{node labels} $(A, d_A)$. 
Extrinsic information is typically geometrical or statistical, and arises from the data domain, such as the coordinates of the critical points of $f: \Mspace \to \Rspace$ (that give rise to the merge tree), function values $f$ restricted to the set of nodes $V$, and prior knowledge (such as labels) associated with nodes in a measure network. 

We discuss various strategies that encode extrinsic and intrinsic information for merge tree comparisons. 
The key takeaway is that the pFGW distance we build upon  provides a flexible framework that encodes geometric and topological information for comparative analysis of merge trees. 

\subsubsection{Encoding Intrinsic Information}
\label{sec:intrinsic}
A merge tree $T$ is represented using a triple $(V, p, W)$. 
Information intrinsic to $T$ may be encoded via $p$ and $W$ as we now describe.

\para{Encoding edge information.}  
Recall that a merge tree $T$ is a tree equipped with a function $f: V \to \Rspace$ defined on its nodes $V$. 
To encode the information of $f$, we explore a \emph{shortest path strategy}. 
Recall that each node $x$ in $T$ is associated with a scalar value $f(x)$. For $x,x' \in V$, we define $W(x,x')$ as follows: we associate the weight $W(x,x') = |f(x) - f(x')|$ with each pair of adjacent nodes; for nonadjacent nodes, $W(x,x')$ is the sum of the edge weights along the unique shortest path in $T$  from $x$ to $x'$. 
By construction, the shortest path between two nodes goes through their lowest common ancestor in $T$. That is, an \emph{ancestor} of a node $x$ in $T$ is any node $v$ such that there exists a path from $x$ to $v$ where $f$-values are non-decreasing along the path. The \emph{lowest common ancestor} of two nodes $x,x'$, denoted $\lca(x,x')$, is the common ancestor of $x$ and $x'$ with the lowest $f$-value.

We explore an additional strategy by encoding the function values  of the lowest common ancestors among pairs of nodes, referred to as the \emph{lowest common ancestor strategy}. Using this strategy, we define $W(x,x') = f(\lca(x,x'))$ for $x, x' \in V$. For a given ordering of vertices, $W$ is also known as the \emph{induced ultra matrix} of a merge tree~\cite{GasparovicMunchOudot2019}. 

\para{Encoding node information.}
Without prior knowledge, we may define $p$ as a uniform measure, \ie, $p = \frac{1}{|V|} \onevec{|V|}$. 
This \emph{uniform strategy} means that all nodes in the merge trees are considered to be equally important during merge tree comparison and matching.  

On the other hand, $p$ could be made more general by giving higher weights to nodes deemed more important by an application. 
For example, we may assign each node $x \in V$ an importance value that is proportional to the functional difference to its \emph{parent node}, $\parent(x)$, which is the unique neighbor $x'$ of $x$ in $T$ with $f(x') \geq f(x)$. That is, we set $p(x) \propto (f(\parent(x)) - f(x))$. Such assignment is referred to as the \emph{parent strategy}.

\subsubsection{Encoding Extrinsic Information}
\label{sec:extrinsic}

Extrinsic information that typically arises from the geometry of the data domain may be encoded via the attribute space $(A, d_A)$ and attribute distance $d_A$ in~\eqref{eq:fgw}. 
For a node in the merge tree, the assigned attribute may be a high-dimensional vector or a categorical label.
Given a pair of merge trees $T_1 = (V_1, p_1, W_1)$ and $T_2 = (V_2,  p_2, W_2)$, nodes $x_i \in V_1$ and $y_j \in V_2$ are equipped with attributes $a_i$ and $b_j$ from the same attribute space $(A, d_A)$.  

These attributes may be coordinates associated with critical points in the data domain. 
Specifically, assume attribute $a_i = (x^1_i, x^2_i) \in V_1$ is associated with a critical point of $f_1: \Mspace \to \Rspace$ in the data domain $\Mspace$ with coordinates $(x^1_i, x^2_i)$ (assuming $\Mspace \subset \R^2$), whereas attribute  $b_j = (y^1_j, y^2_j) \in V_2$ corresponds to a critical point of $f_2: \Mspace \to \Rspace$ with coordinates $(y^1_j, y^2_j)$. 
 We define $d_A$ to be the Euclidean distance between $a_i$ and $b_j$, $d_A(a_i, b_j) = \sqrt{(x^1_i - y^1_j)^2 + (x^2_i - y^2_j)^2}$. 
This definition is referred to as the \emph{coordinates strategy}. 
This strategy is a natural choice because a core method for critical point tracking is often based on their Euclidean distance proximity. 

Another useful node attribute is the type (category) of critical points (\eg, local maximum, local minimum, and saddle). Assuming attributes $a_i$ and $b_j$ capture the categories of critical points $x_i$ from $f_1$ and $y_j$ from $f_2$ respectively, we may define the \emph{category distance} between $a_i$ and $b_j$ as 
\begin{equation}
d_A(a_i, b_j) = \left\{    
\begin{array}{rcl}
0 & & a_{i} = b_{j}; \\
1 & & a_{i} \neq b_{j}.
\end{array} \right.
\label{eq:penalty}
\end{equation}
That is, the distance between categories is $1$ if the categories do not match and $0$ if they do. 
This is referred to as the \emph{category strategy}. 
In practice, we combine the above two distances to form the attribute distance, referred to as the \emph{combined strategy.}

 \subsubsection{Simple Examples}
 \label{sec:example}
In~\autoref{fig:pfgw-gaussian}, we show a simple example of using pFGW distance for critical point matching between a pair of merge trees $T_1$ and $T_2$.
These merge trees arise from slightly different mixtures of Gaussian functions $f_1$ and $f_2$ in 2D; see (a) and (c), respectively. 
As shown in (a), $T_1$ and $T_2$ are structurally similar: $T_1$ contains $10$ critical points, and $T_2$ has $8$ critical points with a pair of critical points removed (see the region enclosed by the red box). 
Here, we apply a \emph{uniform strategy} to $p$.  
We set $m=0.8$, because 2 of 10 nodes in $T_1$ no longer exist in $T_2$. 
After computing the pFGW distance, the $10 \times 8$ coupling matrix $C$ is shown in (d) and visualized in (b).  
An entry $C(i, j)$ in the coupling matrix indicates the probability of a node $i \in T_1$ being matched to node $j \in T_2$. 
In particular, rows $C(2, \cdot)$ and $C(3, \cdot)$ (in a red box) are both zero, indicating that no partners in $T_2$ are matched with nodes $2$ and $3$ in $T_1$.  
Furthermore, nodes in $T_2$ are colored by its most probable partner in $T_1$, which aligns well with our intuition that all nodes with the same property should be matched to each other. 
In this example, each node in $T_1$ has a unique partner in $T_2$; however, in practice, a node may be coupled with multiple nodes with nonzero probabilities, as shown in the next example.  

\begin{figure}[!ht]
    \centering
    \includegraphics[width=0.98\columnwidth]{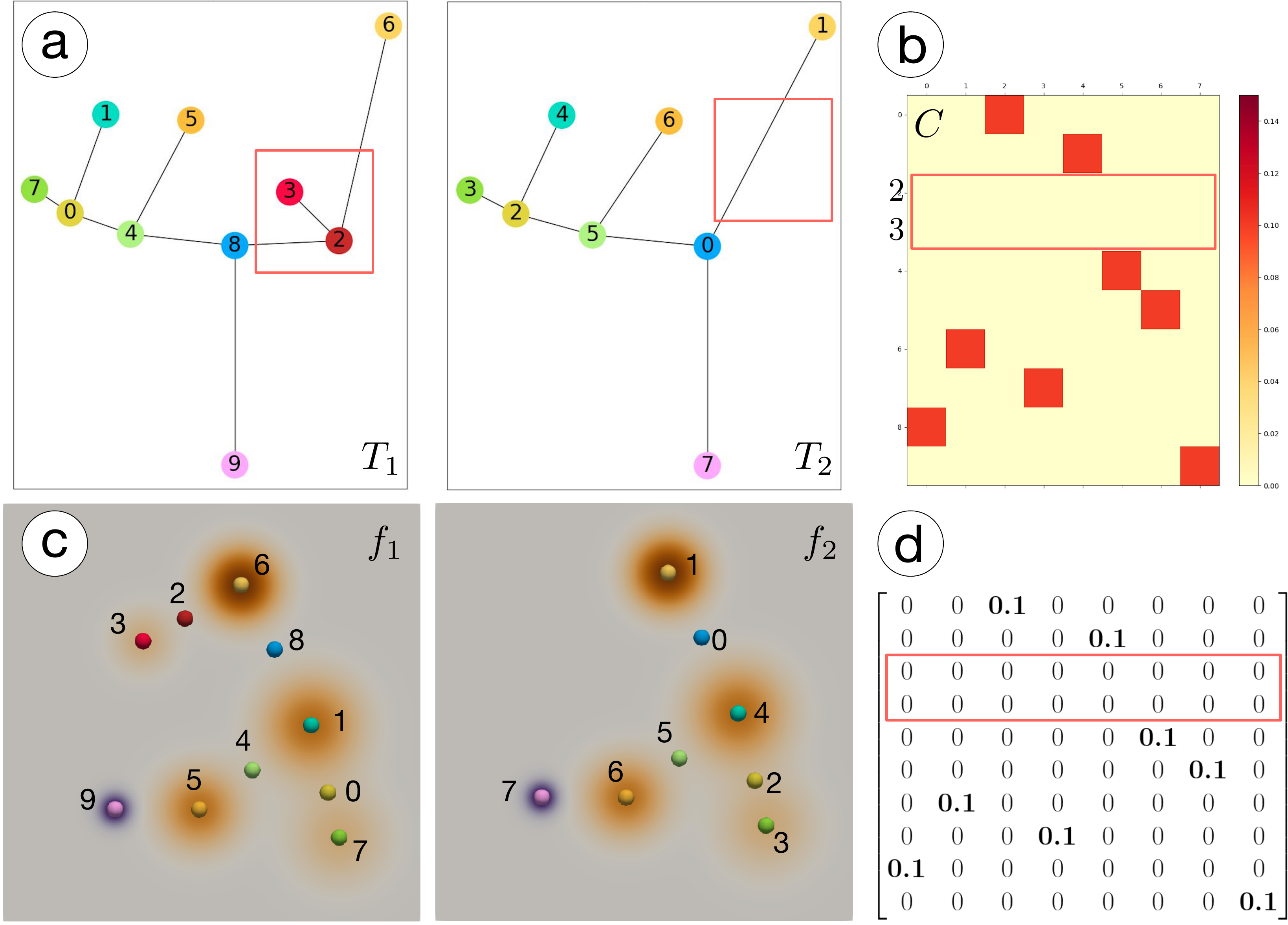}
    \vspace{-2mm}
    \caption{Partial optimal matching using pFGW, $m=0.8$. (a) Merge trees that arise from mixtures of Gaussian functions in (c). The coupling matrix (d) is visualized with a heat map in (b).}
    \label{fig:pfgw-gaussian}
\end{figure}

We provide another example in~\autoref{fig:pfgw-gaussian2} to demonstrate probabilistic matching with our framework. 
As shown in~(c), $f_1$ is a mixture of four positive and one negative Gaussian functions. 
In $f_2$, a positive Gaussian function on top is split into two Gaussian functions, resulting in two local maxima and one saddle point. 
Merge trees $T_1$ and $T_2$ in (a) describe the topology of scalar fields $f_1$ and $f_2$, respectively. 
Notice that the topological change in $T_2$ (enclosed by a red box) highlights the feature splitting event in $f_2$. 
Rather than enforcing a one-to-one correspondence between the critical points, our pFGW framework allows probabilistic matching among them. 
As shown in (b) and (d), the coupling matrix $C$ contains multiple rows and columns with more than one nonzero entry.
For example, the row $C(3, \cdot)$ (red box) has two nonzero entries, namely $0.025$ at $C(3,1)$ and $0.01$ at $C(3,4)$, see (d), 
which indicates that node $3$ in $T_1$ can be matched to both node $1$ and $4$ in $T_2$ with varying probabilities. 
Such a matching is probable due to the feature splitting event. 
As node $4$ in $T_2$ is closer to node $3$ in $T_1$ (than node $1$ in $T_2$), $C(3,4)$ has a higher coupling probability than $C(3,1)$. 

\begin{figure}[!ht]
    \centering
    \includegraphics[width=0.98\columnwidth]{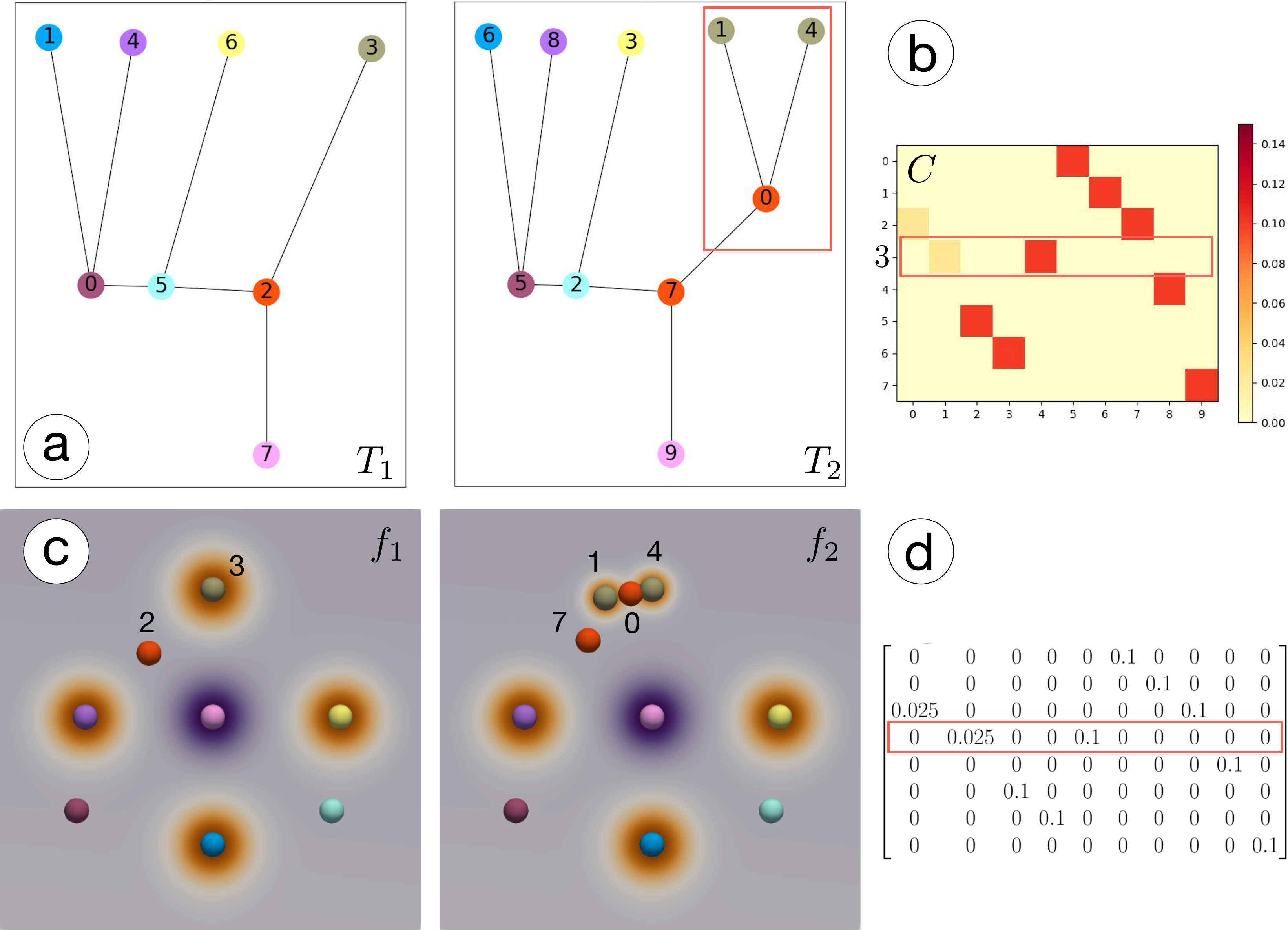}
    \caption{Partial optimal matching using pFGW, $m = 0.85$. (a) Merge trees that arise from mixtures of Gaussian functions in (c). The coupling matrix (d) is visualized with a heat map in (b).}
    \label{fig:pfgw-gaussian2}
\end{figure}

\subsection{Flexible Topology Tracking}
\label{sec:tracking}

By modeling merge trees as measure networks (\autoref{sec:model}) and introducing a new pFGW distance based on partial optimal transport (\autoref{sec:pfgw}), we are ready to describe our topology tracking framework in~\autoref{sec:framework} and discuss its flexibility in~\autoref{sec:flexibility}.

\subsubsection{Tracking Framework}
\label{sec:framework}
Our topology tracking framework consists of three steps. 

\para{1. Feature detection.}
First, we compute a merge tree for each time step. We use the algorithm implemented in TTK~\cite{TiernyFavelierLevine2018, gueunet2017task, lukasczyk2020localized}. 
Each merge tree contains local minima, saddles, and a global maximum (assuming there is a unique global maximum). 
When the data is noisy, we apply persistent simplification~\cite{EdelsbrunnerLetscherZomorodian2002} to remove pairs of critical points with low persistence, in order to retain significant features in the domain for tracking purposes.  

\para{2. Feature matching.}
Second, we utilize our pFGW framework for feature matching across adjacent time steps. 
Let $T_1$ and $T_2$ be two merge trees computed at time steps $t$ and $t+1$, respectively. We then model them as measure networks $T_1=(V_1, p_1, W_1), T_2=(V_2, p_2, W_2)$ and apply the pFGW framework described in~\autoref{sec:pfgw} to match critical points from $T_1$ with $T_2$. 

We utilize a conservative \emph{bijective matching strategy}. 
Based on the optimal coupling $C$, a node $x \in V_1$ may be coupled (matched) with multiple nodes in $V_2$. We will choose $x' \in V_2$, which has the highest matching probability with $x$ (referred to as the most probable partner). 
Similarly, for $x' \in V_2$, we will choose its most probable partner $x'' \in V_1$. 
If $x = x''$, then $x$ and $x'$ are matched to form a trajectory. 

\para{3. Trajectory extraction.}
Trajectories are constructed by connecting successively matched critical points. 
For any two adjacent time steps $t$ and $t+1$, if a node $x$ at time $t$ is matched with a node $x'$ at time $t+1$, then a segment is constructed connecting $x$ and $x'$ in the spacetime domain. 
If a node $x$ at time $t$ is ignored (\ie, matched to the dummy node) during the partial optimal transport, then the current trajectory terminates. 
If a node $x'$ at time $t+1$ is ignored during the partial optimal transport, it is considered as a new feature, and a new trajectory begins. 

\subsubsection{A Discussion on Flexibility}
\label{sec:flexibility}
Modeling a merge tree $T$ as a measure network $T = (V, p, W)$ and its associated pFGW distance offers great flexibility in the comparative analysis of merge trees. 
The flexibility is reflected via a number of parameters. 

First, parameters $W$ and $p$ allow various strategies for encoding intrinsic and extrinsic information of a merge tree, including the \emph{shortest path} and \emph{lowest common ancestor} strategies for encoding edge information; \emph{uniform} and \emph{parent} strategies for encoding node information; \emph{coordinates}, \emph{category}, and their \emph{combined} strategy for encoding geometric information from the data domain. 

Second, parameter $\alpha$ from~\eqref{eq:pfgw} strikes a balance in considering intrinsic information (via the GW distance) and extrinsic information (via the Wasserstein distance) for merge tree comparisons. 

Third, parameter $m$ from~\eqref{eq:pfgw} allows partial mass transport to accommodate the appearances and disappearances of features.

\section{A New Stability Result} 
\label{sec:stability-overview}

We now state a new theoretical stability result involving the GW distance, which shows that a small change in the function data produces a small change in merge tree representations, as measured by the GW distance; see the supplementary material  for a detailed proof and some experimental validation of \autoref{thm:stability-main}. 

Let $X$ be a finite, connected geometric simplicial complex with vertex set $V$. 
Let $f:X \to \R$ be a function obtained by starting with a function $f: V \to \R$ on the vertex set and extending linearly over higher dimensional simplices. 
Let $p$ be a probability distribution over the vertex set $V$. We will assume that $p$ is \emph{balanced}, in the sense that for any $u,v,w \in V$, we have $p(u) \cdot p(v) \leq p(w)$; this property holds for the uniform distribution, for example. We then define the measure network representation of merge tree of $f$ to be $G_f=(V,p,W_f)$, with $W_f$ defined based on the least common ancestor strategy. We also define a family of weighted norms on the space of functions $f: V \to \R$ by 
\[
\|f\|_{L^q(p)} := \left(\sum_{v \in V} |f(v)|^q p(v)\right)^{1/q}.
\]
We can now state our theorem.
\begin{theorem}
\label{thm:stability-main}
Let $f,g:X \to \R$ be functions defined as above and let $p$ be a balanced probability distribution. Then
\begin{equation*}
d^{GW}_q(G_f,G_g) \leq \frac{1}{2} |V|^{2/q} \|f - g\|_{L^q(p)}.
\end{equation*}
\end{theorem}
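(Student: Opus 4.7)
My plan is to bound $d^{GW}_q(G_f, G_g)$ by evaluating the GW objective on a single, natural coupling and then controlling the resulting cost pointwise. Since $G_f$ and $G_g$ share the vertex set $V$ and probability measure $p$, the diagonal coupling $C(u,v) = p(u)\delta_{u,v}$ lies in $\Ccal(p,p)$. Substituting it into the definition of the GW distance collapses the quadruple sum to a double sum, giving
\begin{equation*}
d^{GW}_q(G_f, G_g) \leq \frac{1}{2} \Bigl(\sum_{u,v \in V} |W_f(u,v) - W_g(u,v)|^q \, p(u)\,p(v) \Bigr)^{1/q}.
\end{equation*}

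The core ingredient is then a pointwise stability statement for the LCA-weight matrix: $|W_f(u,v) - W_g(u,v)| \leq \|f - g\|_{\infty}$, with the sup-norm taken over $V$. To see this, I would invoke the classical ``mountain pass'' identification
\begin{equation*}
W_f(u,v) \;=\; f(\lca(u,v)) \;=\; \min_{\gamma : u \to v} \, \max_{w \in \gamma} f(w),
\end{equation*}
where $\gamma$ ranges over paths in the $1$-skeleton of $X$; this reduction uses the fact that connectedness of sublevel sets of a piecewise linear function on a simplicial complex is already witnessed by $1$-dimensional paths. Given this identification, if $\gamma^*$ attains the minimum for $f$, then $W_g(u,v) \leq \max_{w \in \gamma^*} g(w) \leq \max_{w \in \gamma^*} f(w) + \|f-g\|_\infty = W_f(u,v) + \|f-g\|_\infty$, and symmetrically the other way. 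Plugging this back and using $\sum_{u,v} p(u)p(v) = 1$ yields the clean intermediate estimate $d^{GW}_q(G_f, G_g) \leq \tfrac{1}{2}\|f-g\|_\infty$.

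The final step converts this sup-norm bound into the weighted $L^q$ norm, which is precisely where the balanced hypothesis does its work. Summing $p(u) p(v) \leq p(w)$ over all $u,v \in V$ gives $1 \leq |V|^2 p(w)$, hence $p(w) \geq |V|^{-2}$ for every $w \in V$. Letting $v^* \in V$ attain $\|f - g\|_\infty$, we obtain
\begin{equation*}
\|f-g\|_{L^q(p)}^q \;\geq\; p(v^*)\,|f(v^*) - g(v^*)|^q \;\geq\; |V|^{-2}\,\|f-g\|_\infty^q,
\end{equation*}
so $\|f - g\|_\infty \leq |V|^{2/q}\,\|f-g\|_{L^q(p)}$, and chaining with the previous display yields the theorem.

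The delicate step is the pointwise stability lemma in the second paragraph: one must carefully justify the min-max characterization of $W_f(u,v)$ for a piecewise linear function on a simplicial complex, and in particular argue that restricting to $1$-skeleton paths does not change the value. Once that geometric fact is in hand, the rest is routine algebraic bookkeeping, with the balanced hypothesis entering cleanly to convert a uniform pointwise bound into a weighted-norm bound.
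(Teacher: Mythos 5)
Your proof is correct, and it shares the paper's two core ingredients -- evaluating the GW objective on the identity coupling, and the min--max (``mountain pass'') characterization of the LCA matrix over $1$-skeleton paths, which is exactly the paper's Lemma~1 and does require the careful piecewise-linear argument you flag. Where you genuinely diverge is in how the measure enters. The paper never passes through the unweighted sup-norm: it keeps the weight $(p(v)p(w))^{1/q}$ attached to each matrix entry, and uses balancedness \emph{pointwise} to transfer that weight onto the specific vertex $x_k^\ast$ on the optimal path where the difference $|W_f(v,w)-W_g(v,w)|$ is witnessed, i.e.\ $(p(v)p(w))^{1/q}\leq p(x_k^\ast)^{1/q}$, before relating the weighted sup-norm to $\|f-g\|_{L^q(p)}$. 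You instead prove the clean intermediate bound $d^{GW}_q\leq\tfrac12\|f-g\|_\infty$ with no weights at all, and then use only the consequence $\min_v p(v)\geq |V|^{-2}$ of balancedness to convert the sup-norm into the weighted $L^q$ norm. This buys you something: your argument proves the theorem under the strictly weaker hypothesis $\min_v p(v)\geq|V|^{-2}$ (e.g.\ $p=(0.7,0.1,0.1,0.1)$ satisfies this but is not balanced), and it specializes immediately to the improved constant $\tfrac12|V|^{1/q}$ for uniform $p$ -- indeed it gives $\tfrac12(\min_v p(v))^{-1/q}$ in general. What the paper's weightier route buys in exchange is a tighter pointwise handle on $|(W_f-W_g)(v,w)|\,(p(v)p(w))^{1/q}$, which is what its asymptotic-tightness example for non-uniform balanced measures is built around, and which feeds directly into the shortest-path corollary.
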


We also show in the supplementary material that the Lipschitz constant $\frac{1}{2}|V|^{2/q}$ is asymptotically tight for general probability measures. When the measure is uniform, the constant can be improved to $\frac{1}{2}|V|^{1/q}$. Finally, we have the following corollary, which treats the shortest path strategy for encoding a merge tree as a measure network.

\begin{corollary}\label{cor:stability}
Let $f,g:X \to \R$ be functions defined as above and let $p$ be a balanced probability distribution. Let $G_f$ (respectively, $G_g$) denote the representation of the merge tree $T_f$ (respectively, $T_g$) defined by the shortest path strategy. Then
\begin{equation*}
d^{GW}_q(G_f,G_g) \leq \left(|V|^{2/q} + 2\right) \|f - g\|_{L^q(p)}.
\end{equation*}
\end{corollary}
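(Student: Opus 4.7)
The plan is to reduce the shortest-path (SP) case to the lowest-common-ancestor (LCA) case via a clean pointwise identity, and then to invoke Theorem~\ref{thm:stability-main}. First, I would establish the identity: for any merge tree $T_h$ and any pair of vertices $u,v$, the unique path from $u$ to $v$ in $T_h$ ascends monotonically in $h$-value to $\mathrm{lca}_h(u,v)$ and then descends monotonically to $v$, so the sum of absolute $h$-differences along consecutive edges telescopes and
\[
W^{SP}_h(u,v) \;=\; 2\,h(\mathrm{lca}_h(u,v)) - h(u) - h(v) \;=\; 2\,W^{LCA}_h(u,v) - h(u) - h(v).
\]
Writing this identity for $h = f$ and $h = g$, subtracting, and applying the triangle inequality yields
\[
|W^{SP}_f(u,v) - W^{SP}_g(u,v)| \;\leq\; 2\,|W^{LCA}_f(u,v) - W^{LCA}_g(u,v)| + |f(u)-g(u)| + |f(v)-g(v)|.
\]

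Because $G_f$ and $G_g$ share the vertex set $V$ and measure $p$, the diagonal coupling $C_{ij} = p(i)\delta_{ij}$ is admissible and, by definition of $d^{GW}_q$, gives the bound
\[
d^{GW}_q(G_f,G_g) \;\leq\; \tfrac{1}{2}\Big(\sum_{u,v} |W^{SP}_f(u,v) - W^{SP}_g(u,v)|^q\, p(u)\, p(v)\Big)^{1/q}.
\]
Plugging the pointwise estimate above into this $L^q(p \otimes p)$ norm and applying Minkowski's inequality to the three summands, while using $\sum_v p(v) = 1$ to reduce each of the pure $|f-g|$ terms to $\|f-g\|_{L^q(p)}$, yields
\[
d^{GW}_q(G_f, G_g) \;\leq\; \Big(\sum_{u,v} |W^{LCA}_f(u,v) - W^{LCA}_g(u,v)|^q\, p(u)\, p(v)\Big)^{1/q} + \|f-g\|_{L^q(p)}.
\]
The proof of Theorem~\ref{thm:stability-main} in the supplement proceeds precisely by bounding the $L^q(p \otimes p)$ norm on the right under the balanced hypothesis, with the diagonal coupling realizing the $\tfrac{1}{2}$ factor in the definition of $d^{GW}_q$; this delivers $\big(\sum |W^{LCA}_f - W^{LCA}_g|^q\, p(u)\, p(v)\big)^{1/q} \leq |V|^{2/q}\|f-g\|_{L^q(p)}$. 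Combining gives $d^{GW}_q(G_f, G_g) \leq (|V|^{2/q} + 1)\|f-g\|_{L^q(p)}$, which is dominated by the stated $(|V|^{2/q} + 2)\|f-g\|_{L^q(p)}$.

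The only nontrivial step is the telescoping identity in the first paragraph, which requires care when $u, v \in V$ are not themselves critical nodes of $T_h$: one must view them as points in the quotient $X/{\sim}$ so that the merge-tree path between them and its decomposition at $\mathrm{lca}_h(u,v)$ are unambiguous. Monotonicity of $h$ along ascending and descending subtree branches then gives the telescoping collapse. All remaining steps are a routine Minkowski split and a direct appeal to the main theorem, with the mild $+1$ versus $+2$ slack in the constant absorbed harmlessly.
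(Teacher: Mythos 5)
Your proof is correct and follows essentially the same route as the paper's: the telescoping identity $W^{SP}_h(u,v) = 2\,W^{LCA}_h(u,v) - h(u) - h(v)$, the diagonal coupling, a Minkowski split, and the intermediate bound $\|W^{LCA}_f - W^{LCA}_g\|_{L^q(p\times p)} \leq |V|^{2/q}\|f-g\|_{L^q(p)}$ extracted from the proof of Theorem~\ref{thm:stability-main}. Your more careful tracking of the $\tfrac{1}{2}$ factor in the definition of $d^{GW}_q$ even yields the slightly sharper constant $|V|^{2/q}+1$, which is dominated by the stated $|V|^{2/q}+2$.
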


We now briefly comment on the structure of this stability result, and, in particular, on its dependency on $|V|$. There are several metrics on the space of merge trees, or more generally, Reeb graphs, which enjoy stability results of the same form, but which are apparently stronger in that they do not depend on the combinatorics of the domain (i.e., such that the Lipschitz constant is absolute). This is the case, for example, for the functional distortion distance~\cite{bauer2014measuring}, interleaving distance ~\cite{morozov2013interleaving,de2016categorified}, merge tree matching distance ~\cite{Bollen2023}, and the Reeb graph edit distance~\cite{Bauer2018b, DiFabio2016}. However, these metrics are all $L^\infty$-type distances, and the most appropriate comparison to our result would involve taking $q \rightarrow \infty$, in which case the dependency on $|V|$ vanishes. This behavior is comparable to the recent $p$-Wasserstein stability result of~\cite{skraba2020wasserstein} in the context of $L^p$-type distances between persistence diagrams~\cite{cohen2007stability,chazal2009gromov}. 
\section{Experiments}
\label{sec:results}

We demonstrate the utility of our framework with five 2D datasets and two 3D datasets.  
For each dataset, we also compare against two state-of-the-art approaches.  
In particular, we demonstrate the strengths of our framework using two complex datasets---{\CL} and {\VF}---in tracking a large number of features.

\subsection{Datasets Overview}
\label{sec:datasets}

The {\HC} dataset is a simulation of a 2D flow generated by a heated cylinder using the Boussinesq approximation~\cite{gerrisflowsolver,GuntherGrossTheisel2017}. 
The simulation was done with a Gerris flow solver. It shows a time-varying turbulent plume containing numerous small vortices that, in part, rotate around each other. 
We generate a set of merge trees from the magnitude of the velocity fields based on $31$ time steps ($600$-$630$ from the original $2000$ time steps).
These time steps describe the evolution of small vortices.

The {\UCF} dataset is a 2D unsteady cylinder flow. 
This synthetic vector field was created by Jung, Tel, and Ziemniak~\cite{JungTelZiemniak93} and serves as a basic model of a von-K\'arman vortex street generation. 
We use the first 499 time steps in the dataset, and use merge trees computed for the velocity magnitude field that primarily capture the behavior of local maxima, saddles, and a global minimum. 
Both {\HC} and {\UCF} datasets are available via the Computer Graphics Laboratory~\cite{cgl}. 

The  {\VS} dataset is the classic 2D von K\'arman vortex street dataset coming from the simulation of a viscous 2D flow around a cylinder. 
It contains vortices moving with almost constant speed to the right, except directly in the wake of the obstacle, where they accelerate. 
We model vorticity magnitude as scalar fields, and track the evolution of local maxima over time. 

The {\IF} dataset comes from the 2008 IEEE Visualization Design Contest~\cite{sciVis2008}. 
It simulates the propagation of an ionization front instability.
The simulation is done with 3D radiation hydrodynamical calculations of ionization front instabilities in which multi-frequency radiative transfer is coupled to the primordial chemistry of eight species~\cite{WhalenNorman2008}. 
We use the density to generate merge trees from the 2D slices near the center of the simulation volume for 123 time steps, which correspond to steps 11-133 from the original 200 time steps. 
These time steps show the density over time as the instability progresses toward the right.

The {\CL} dataset shows the cloud optical thickness retrieved via the Daytime Cloud Optical and Microphysical Properties Algorithm (DCOMP) by Walther and Heidinger ~\cite{Walther2012daytime}, processed by Chatterjee \etal~\cite{ChatterjeeSchnittBigalke2023}. This 2D dataset has been used previously for cloud tracking~\cite{Nouri2019}. We focus on the data sampled every 10 minutes from 10:50 to 16:50 on Feb 2, 2020 within the region of $10.82^{\circ}$N - $15.88^{\circ}$N, $49.19^{\circ}$W - $42.51^{\circ}$W. 
We use this dataset to demonstrate the utility of our pFGW framework on tracking a large number of features.

The {\IS} dataset is a collection of 3D volumes simulating the wind velocity magnitude of the Isabel hurricane. 
We use this dataset to demonstrate the ability of our method to track features in 3D scientific datasets. 
We use $12$ time steps that depict the key events of the hurricane (formation, drift, and landfall): time steps $2$ to $5$, $30$ to $33$, and $45$ to $48$.
This 3D dataset is acquired from the Climate Data Gateway
at NCAR~\cite{NCAR}.

The {\VF} dataset comes from the 2016 IEEE Scientific Visualization contest~\cite{sciVis2016}. This ensemble dataset simulates transient fluid flows coming from the mixing process of salts solving into a cylinder of water. During this process, the structures of increased salt concentration values are called the viscous fingers. Previous works~\cite{LukasczykWeberMaciejewski2017, lukasczyk2019dynamic} have used the superlevel set components of the concentration field for tracking the viscous fingers. Here, we use the  local maxima of the concentration field to track the viscous fingers. We select the data from the  first run of the ensemble (with a smoothing length of $0.44$), which contains $120$ time steps.
 
\subsection{Heated Cylinder Dataset}
\label{sec:heated-flow}
We first use the {\HC} dataset to demonstrate in detail our parameter tuning process in~\autoref{sec:tuning}. 
We then showcase the tracking results based on partial optimal transport in~\autoref{sec:hc-tracking-results}. 
Finally, we compare against previous approaches in~\autoref{sec:hc-compare}. 

\subsubsection{Parameter Tuning}
\label{sec:tuning}

\para{Evaluation metrics.}
To evaluate the quality of the extracted trajectories, we aim to reduce two types of artifacts during parameter tuning: \emph{oversegmentations} where a single trajectory is unnecessarily segmented into subtrajectories; and \emph{mismatches} between critical points that appear as zigzag patterns connecting (often faraway) critical points from adjacent time steps.

We introduce two metrics to evaluate these artifacts quantitatively: first, the \emph{number of trajectories}, denoted as $N$; and second, the maximum Euclidean distance between matched critical points across time (referred to as the \emph{maximum matched distance} for simplicity), denoted as $L$.  

There are two types of parameters in our framework: the preprocessing parameter $\epsilon$ that is used to de-noise the input data; and the in-processing parameters $W$, $p$, $\alpha$, and $m$ for feature tracking.  

\para{Preprocessing parameter tuning.} 
Persistence simplification is considered a preprocessing step for data de-noising. 
Let $\epsilon \in [0,1]$ denote the persistence simplification parameter.
Let $R$ denote the range of a given scalar field. 
Using persistence simplification, critical points with persistence less than $\epsilon \cdot R$ are removed from the domain.  
$\epsilon$ is typically chosen based on the shape of a  \emph{persistence graph}, where a plateau in a persistence graph indicates a stable range of scales to separate features from noise. Such a strategy has been used previously in simplifying scientific data (\eg,~\cite{GerberBremerPascucci2010,BremerMaljovecSaha2015}).  
For {\HC}, we use $\epsilon=6\%$, which is slightly left of the first observable plateau in the persistence graph, as we try to maintain a slightly larger number of features; see~\autoref{fig:HC-persistence}.

 \begin{figure}[!ht]
    \centering
    \includegraphics[width=0.98\columnwidth]{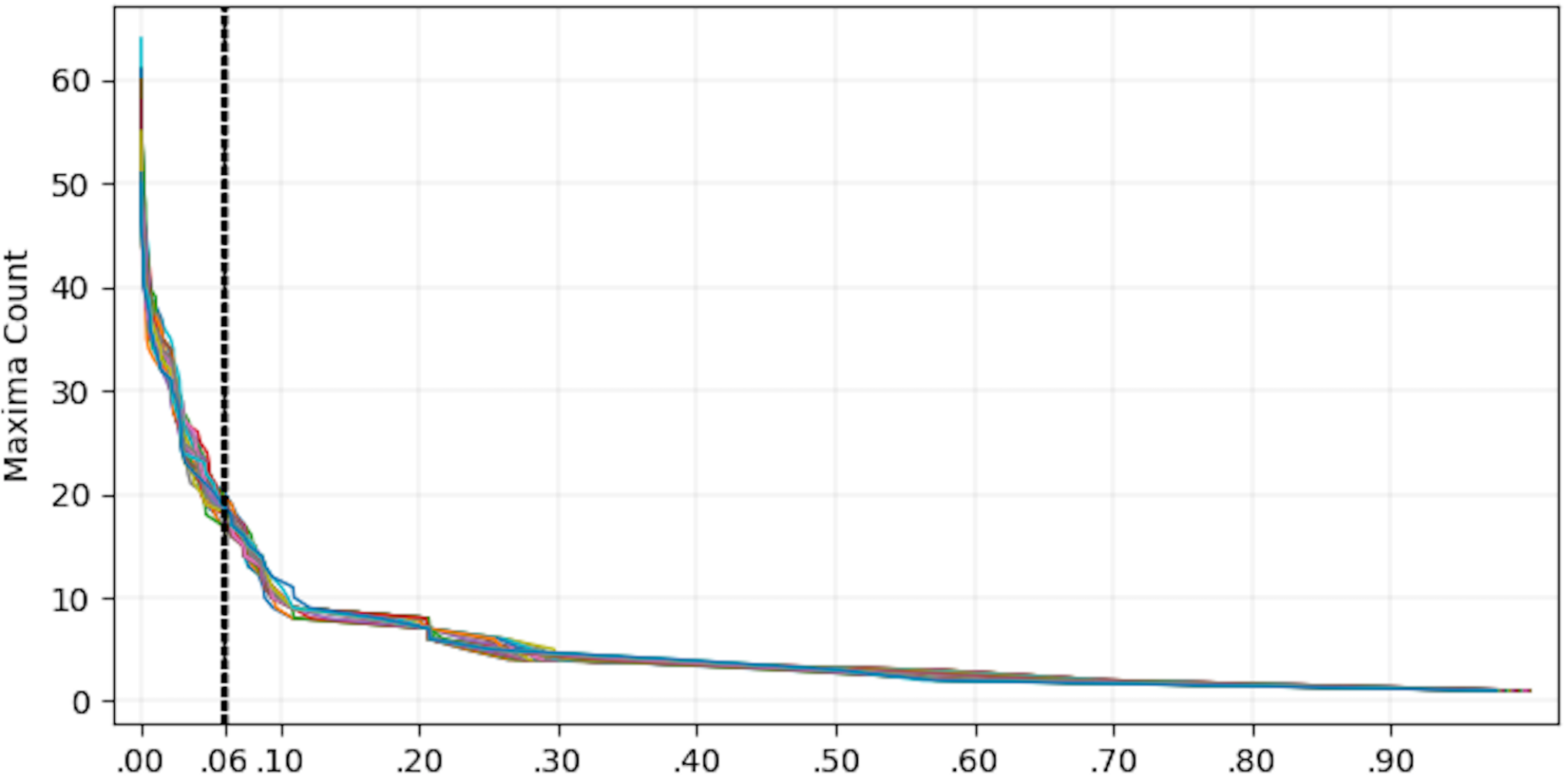} 
    \vspace{-2mm}
    \caption{{\HC}: persistent simplification $\epsilon=6\%$; x-axis is $\epsilon$.} 
    \label{fig:HC-persistence}
\end{figure}
 
\para{In-processing parameter tuning.} 
During parameter tuning, a guiding principle is to reduce oversegmentations and mismatches by minimizing $N$ and $L$.  
We introduce a parameter $L^*$ that represents an upper bound on $L$. 
In this paper, we focus on tracking features surrounding local maxima; therefore, we compute $N$ and $L$ only for local maxima trajectories. 

First, we consider parameter tuning for $W$ and $p$.  
We inspect the behavior of $W$ (or $p$) while keeping other parameters fixed. 
Through extensive experiments across all datasets in this paper, we observe that the shortest path strategy for $W$ generally behaves equal to or better than the lowest common ancestor strategy in  minimizing $N$ and $L$. 
We also observed that the uniform strategy for $p$ performs better than the parent strategy. 
Therefore, for the rest of the paper, $W$ uses the shortest path strategy and $p$ uses the uniform strategy. 

\begin{figure*}[!ht]
\centering
 \includegraphics[width=1.98\columnwidth]{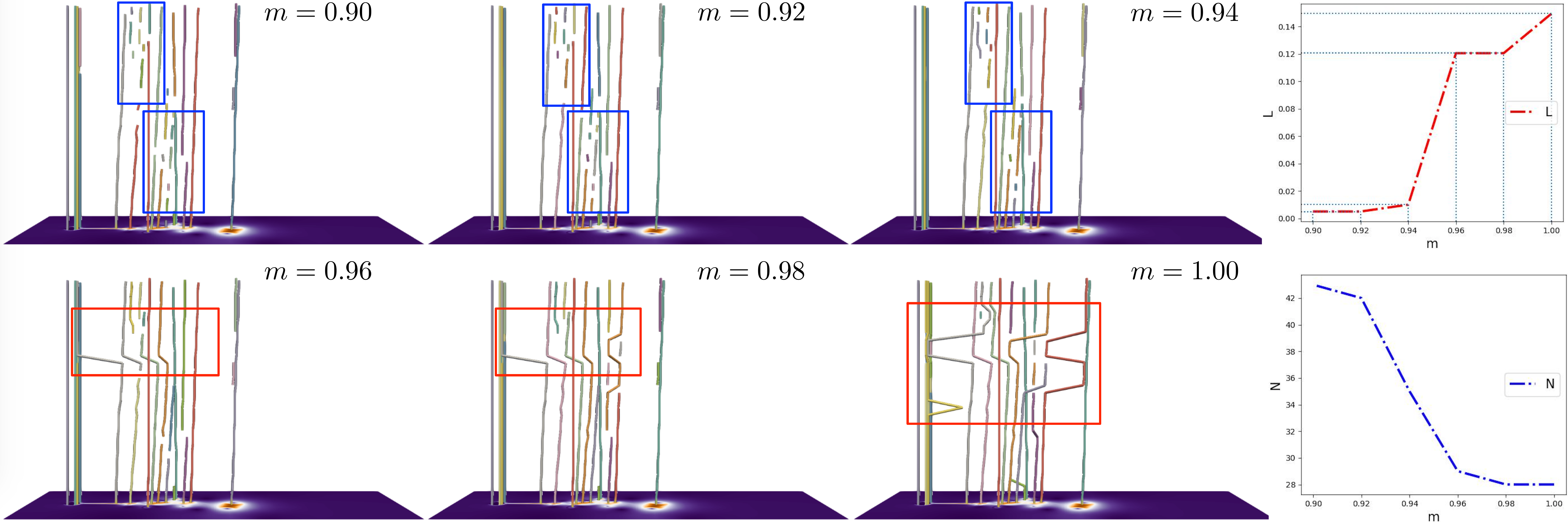}
    \vspace{-2mm}
    \caption{{\HC}. Left: for a fixed $\alpha=0.1$, perform a grid search of $m$ and observe the number of oversegmentations (in blue boxes) and mismatches (in red boxes). Right: the trend among the number of trajectories ($N$) and the maximum matched distance ($L$) as $m$ increases.} 
    \label{fig:HC-tuning}
\end{figure*} 

Second, we study the parameter tuning of $m$ for a fixed $\alpha$.  
$m$ may be considered as an in-processing step for data de-noising, by matching a certain number of features to the dummy nodes during partial optimal transport. 
We use an example in~\autoref{fig:HC-tuning} (left) to demonstrate the process. For a fixed $\alpha = 0.1$, we perform a grid search of $m \in [0.5, 1.0]$ with an increment of $0.01$.
For instance, at $m=0.90$, we see a number of oversegmented trajectories in the blue boxes; such oversegmentations decrease as $m$ increases from $0.90$ to $0.94$ (in the top row). 
On the other hand, obvious mismatches appear in the red boxes for $m \geq 0.96$ (bottom left). 
As $m$ increases from $0.9$ to $1.0$, we observe a decrease in $N$ and an increase in $L$; this is additionally demonstrated in the plots of $N$ and $L$, see~\autoref{fig:HC-tuning} (top right and bottom right).  
If our goal is to choose an appropriate \emph{global} value for $m$, then we are interested in striking a balance between  minimizing $N$ and minimizing $L$; therefore, we may choose $m=0.94$ in this example. 
However, as shown in~\autoref{fig:HC-tuning}, at $m=0.94$, there are still oversegmentations within the blue box, indicating that a \emph{locally adaptive} value of $m$ might be more appropriate in practice. 

Our final strategy aims to automatically adjust the value of $m$ between adjacent time steps to reduce $N$, without increasing $L$ drastically.   
Specifically, we perform a 2D grid search of $\alpha$ and $m$: 
$\alpha \in [0.0,1.0]$ with an increment of $0.1$, and   
$m \in [0.5, 1.0]$ with an increment of $0.01$. 
For each fixed $\alpha$, we apply the following procedure.
First, we plot the curve of $L$ as we increase $m$.
Second, we apply the elbow method and pick the elbow of the $L$ curve as an upper bound on $L$, denoted as $L^*$. 
Finally, for each pair of adjacent steps $t$ and $t+1$, we automatically choose the largest value of $m$ such that $L$ does not exceed $L^*$. 
In other words, $m$ varies adaptively across time steps, see~\autoref{fig:HC-tuning-plot} (top) with marked elbow points. 

As $\alpha$ varies, we plot the number of trajectories $N$ and the maximum matched distance $L$ ($\leq L^*$) at each $\alpha$, as shown in~\autoref{fig:HC-tuning-plot} (bottom). 
We look for a proper value of $\alpha$ to minimize both $N$ and $L$.
However, $N$ and $L$ may not be minimized at the same $\alpha$. 
In this scenario, we look for an $\alpha$ such that it minimizes $N$ while keeping $L$ to be small enough to minimize the number of mismatches. 
Using this strategy, we set $\alpha=0.1$, with a corresponding $L^*=0.00997$. 

\begin{figure}[!ht]
    \centering
    \includegraphics[width=0.98\columnwidth]{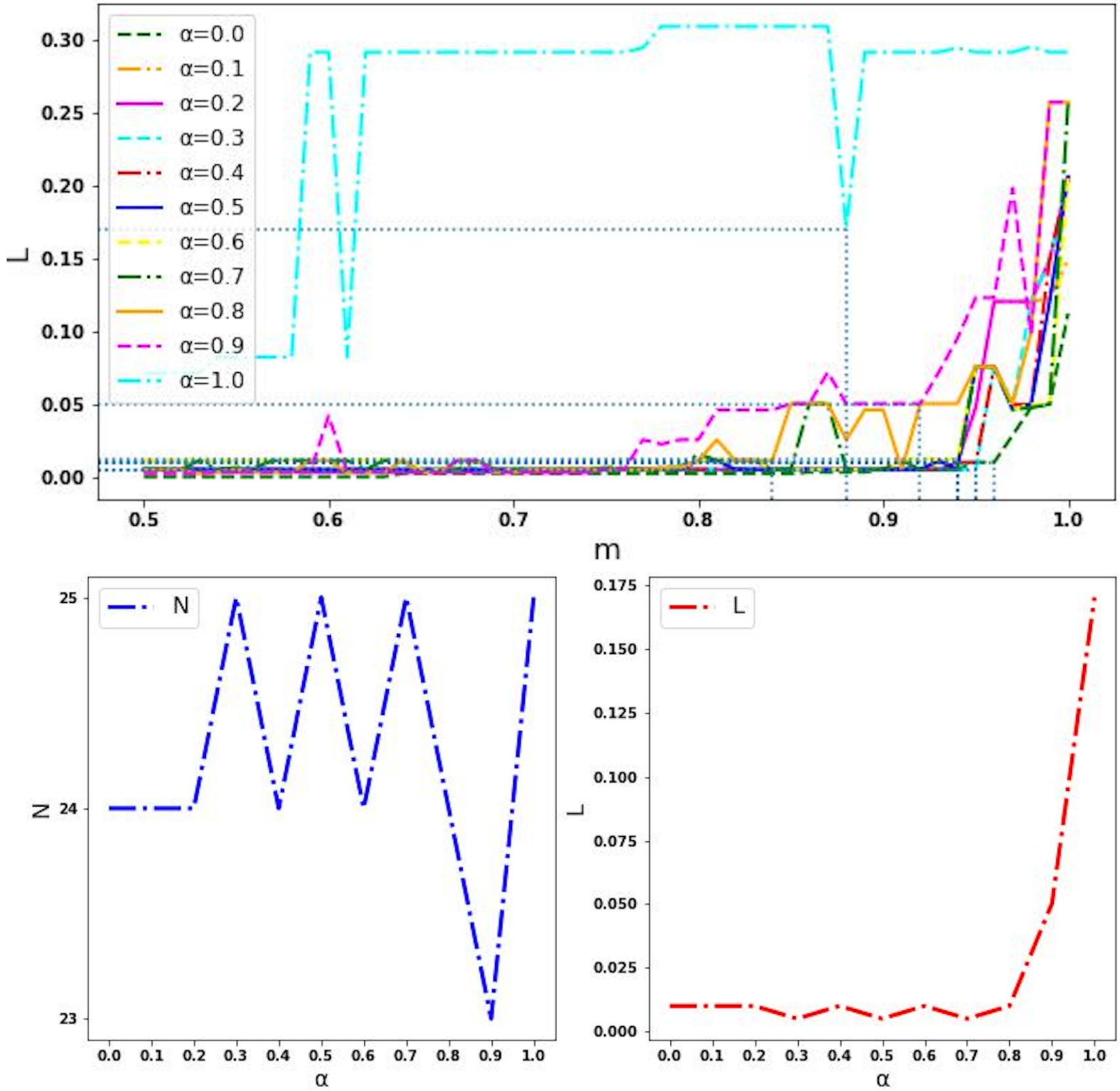} 
    \vspace{-2mm}
    \caption{{\HC}. (a) $L$ as we change the global $m$ for each $\alpha$; elbows of curves are marked with dotted horizontal and vertical lines, (b) $N$ and $L$ with respect to $\alpha$ (using adaptive $m$).} 
    \label{fig:HC-tuning-plot}
\end{figure}

\begin{figure*}[!ht]
    \centering
    \includegraphics[width=1.98\columnwidth]{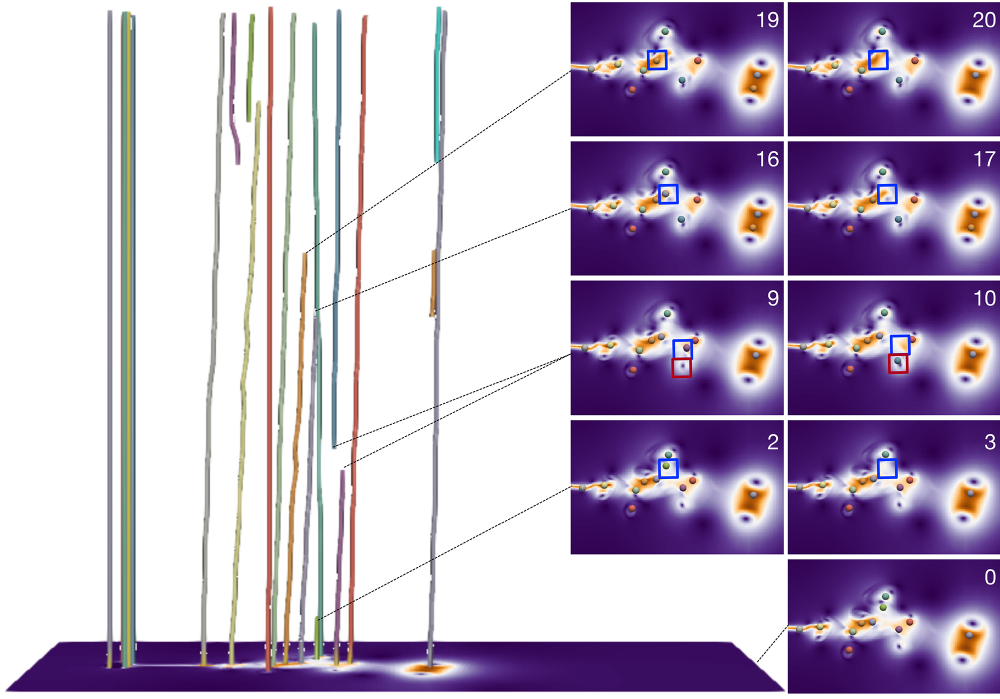}
    \vspace{-2mm}
    \caption{{\HC}. Tracking result (left) with views of scalar fields (right) that capture topological changes in the time-varying scalar field at selected time steps. The appearances and disappearances of critical points are highlighted in red and blue boxes, respectively.} 
    \label{fig:HC-tracking}
\end{figure*}

\subsubsection{Tracking Result}
\label{sec:hc-tracking-results}

\autoref{fig:HC-tracking} shows our final tracking result on the left with views of scalar fields on the right that highlight the appearances and disappearances of critical points. 
In \autoref{fig:HC-tracking} (left), the $xy$-plane visualizes the scalar field at $t=0$, and the $z$-axis shows the trajectories for all the local maxima and the global minimum as time increases. 
Most trajectories are shown to be straight lines as only minor topological changes occur in this dataset.  
Meanwhile, our framework successfully captures the appearances and disappearances of critical points. 
As shown in~\autoref{fig:HC-tracking} (right), for time steps $2 \to 3, 10 \to 11, 16 \to 17$ and $19 \to 20$, critical points disappear in the blue boxes, resulting in the termination of trajectories; for time steps $9 \to 10$, a critical point appears in a red box, resulting in the start of a new, green trajectory. 

\subsubsection{Comparison with Previous Approaches} 
\label{sec:hc-compare}

We compare the tracking results for our pFGW framework with two other state-of-the-art feature tracking approaches, referred to as  Global Feature Tracking ({GFT})~\cite{saikia2017global, saikia2017fast} and Lifted {Wasserstein} Matcher~\cite{SolerPlainchaultConche2018} ({LWM}); see the supplementary material for parameter tuning of GFT and LWM, respectively. 

\para{Implementations.}
Our pFGW framework utilizes the libraries implemented in TTK~\cite{TiernyFavelierLevine2018, gueunet2017task, lukasczyk2020localized} for merge tree computation. 
{GFT} is implemented in \textit{C++} and is available at~\cite{GFT}.  
It computes the merge trees and region segmentations, and outputs the tracking results between critical points at adjacent time steps. 
GFT allows tracking between saddles and local extrema, whereas pFGW (in our experiments) only focuses on tracking between local extrema. 
Therefore, we adjust the postprocessing of GFT to remove trajectories involving saddles. 
LWM is implemented as an embedded library in TTK.
Results of all three methods are visualized via  ParaView~\cite{AhrensGeveciLaw2005} with VTK~\cite{schroeder2006VTK}.

Since neither LWM nor GFT includes details on their parameter tuning, we apply the same parameter tuning strategy as pFGW to both LWM and GFT, that is, minimizing the number of trajectories and  the maximum matched distances; see the supplementary material for details.  

Furthermore, all three methods apply the same persistence-based simplification during preprocessing. However, since GFT is defined on a regular grid of squares, and pFGW and LWM use identical simplicial meshes, we expect minor inconsistencies on the simplified datasets between GFT and other two methods.

\para{Tracking results comparison.}
All three tracking results are shown in~\autoref{fig:HC-comparison} (top).
All three methods produce $24$ trajectories, but there are noticeable differences in GFT-produced trajectories (comparing red and blue boxes, respectively).  
We evaluate these results quantitatively based on observable oversegmentations and mismatches. 
There are obvious oversegmentations from {GFT} compared to the other two methods: a trajectory in the red box is broken in GFT, but remains continuous in pFGW and LWM. 

\begin{figure*}[!ht]
    \centering
    \includegraphics[width=1.8\columnwidth]{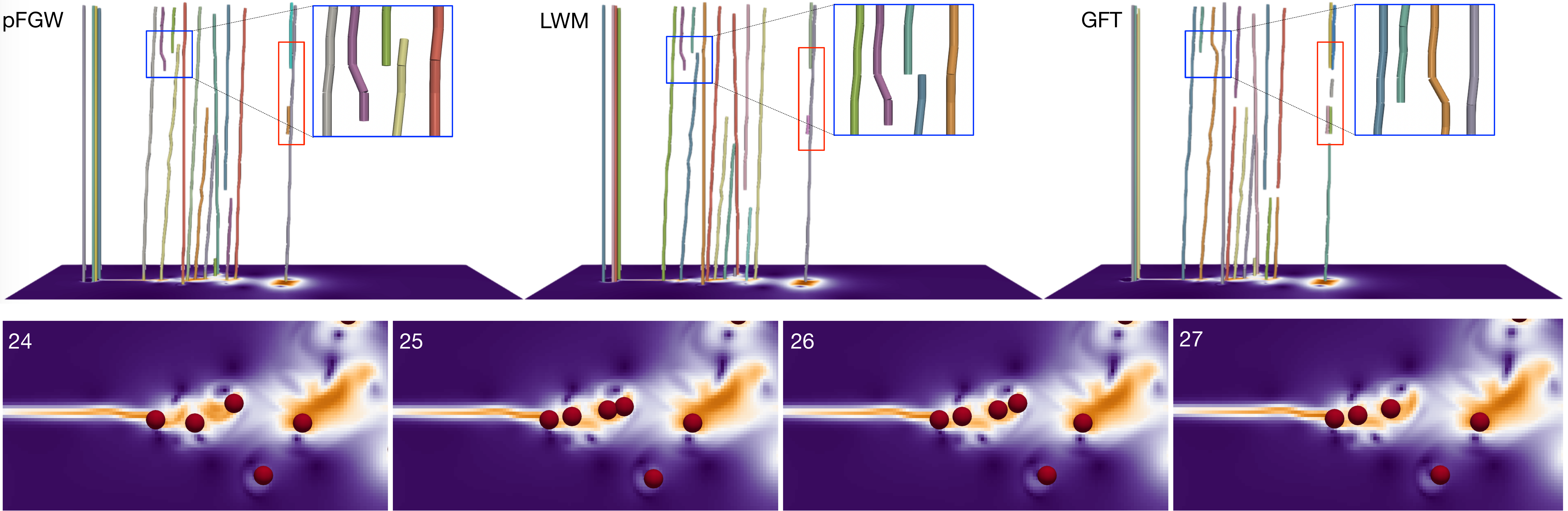}
    \vspace{-4mm}
    \caption{{\HC}. Top: from left to right, pFGW (ours), LWM, and GFT, respectively. 
    Bottom: the appearances and disappearances of local maxima in the blue boxes on top.} 
    \label{fig:HC-comparison}
    \vspace{-4mm}
\end{figure*}

As for mismatches, GFT produces a different tracking result from pFGW and LWM in the blue box, from time steps $24 \to 27$; the corresponding scalar fields are shown in~\autoref{fig:HC-comparison} (bottom).
We interpret the topological changes as follows: 
a critical point appears from $24 \to 25$, another critical point appears from $25 \to 26$, and a critical point disappears from $26 \to 27$. 
The trajectories in pFGW and LWM correctly reflect these topological changes, whereas those in GFT consider these changes to be the movements of critical points.
Therefore, pFGW and {LWM} perform similarly, but {GFT} performs slightly worse for the {\HC} dataset. 

\subsection{Unsteady Cylinder Flow}

For the {\UCF} dataset, we employ the same parameter tuning strategy detailed in~\autoref{sec:tuning}. 
We use a persistence simplification level at $\epsilon=1\%$. 
We set $\alpha=0.1$ and $L^*=0.03768$, see the supplementary material for details.  
 
\subsubsection{Tracking Results} 
 
\begin{figure}[!ht]
    \centering \includegraphics[width=1.0\columnwidth]{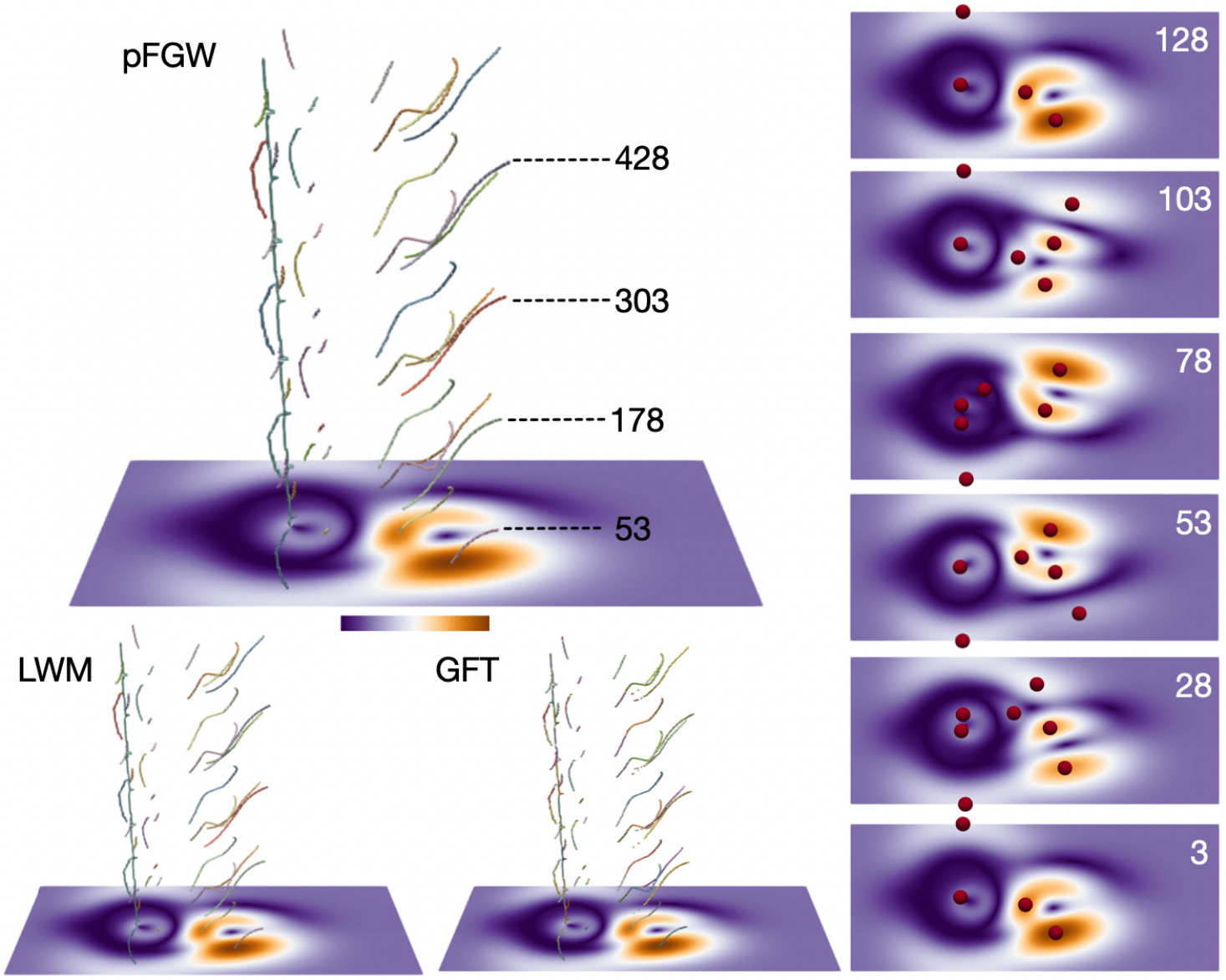}
    \vspace{-6mm}
    \caption{{\UCF}. Left: comparing tracking results for  pFGW, LWM, and GFT, respectively. 
    Right: snapshots of scalar fields within a single period between $t=3$ and $t=128$.} 
    \label{fig:UCF-comparison}
\end{figure}

Our tracking result using pFGW is highly periodic, where the extracted trajectories exhibit repetitive patterns that include the appearances, disappearances, and movements of local maxima over time, see~\autoref{fig:UCF-comparison} (left). 
We show a few time steps at $t=53, 178, 303$, and $428$ to highlight a periodicity of $\approx 125$. 
Furthermore, as shown in~\autoref{fig:UCF-comparison} (right), six snapshots show the evolution of the scalar field within a single period between $t = 3$ and $t = 128$, where the scalar field at $t = 128$ is mostly identical to the one at $t=3$. 

\subsubsection{Comparison with Previous  Approaches}
We compare our pFGW framework against the LWM and GFT methods, which give rise to $44$, $44$, and $108$ trajectories, respectively, see~\autoref{fig:UCF-comparison}.    

When considering mismatches, the trajectories from all three methods are visually similar, where there are no obvious mismatches for any of these methods. 
In particular, the (normalized) maximum matched distances across the three methods are the same, $L = 0.03768$. 

When considering oversegmentations, GFT produces $108$ trajectories, whereas pFGW and LWM each produces $44$ trajectories. 
Correspondingly, GFT shows many more broken trajectories visually in comparison with pFGW and LWM. 

\subsection{2D von K\'arman Vortex Street Dataset}
\label{subsection:vortex-street}

We then study the {\VS} dataset. 
We set $\epsilon=1\%$, $\alpha=0.1$, and $L^*=0.02537$; see the supplementary material for details.  

\subsubsection{Tracking Results}

\begin{figure}[!ht]
    \centering
    \includegraphics[width=1.0\columnwidth]{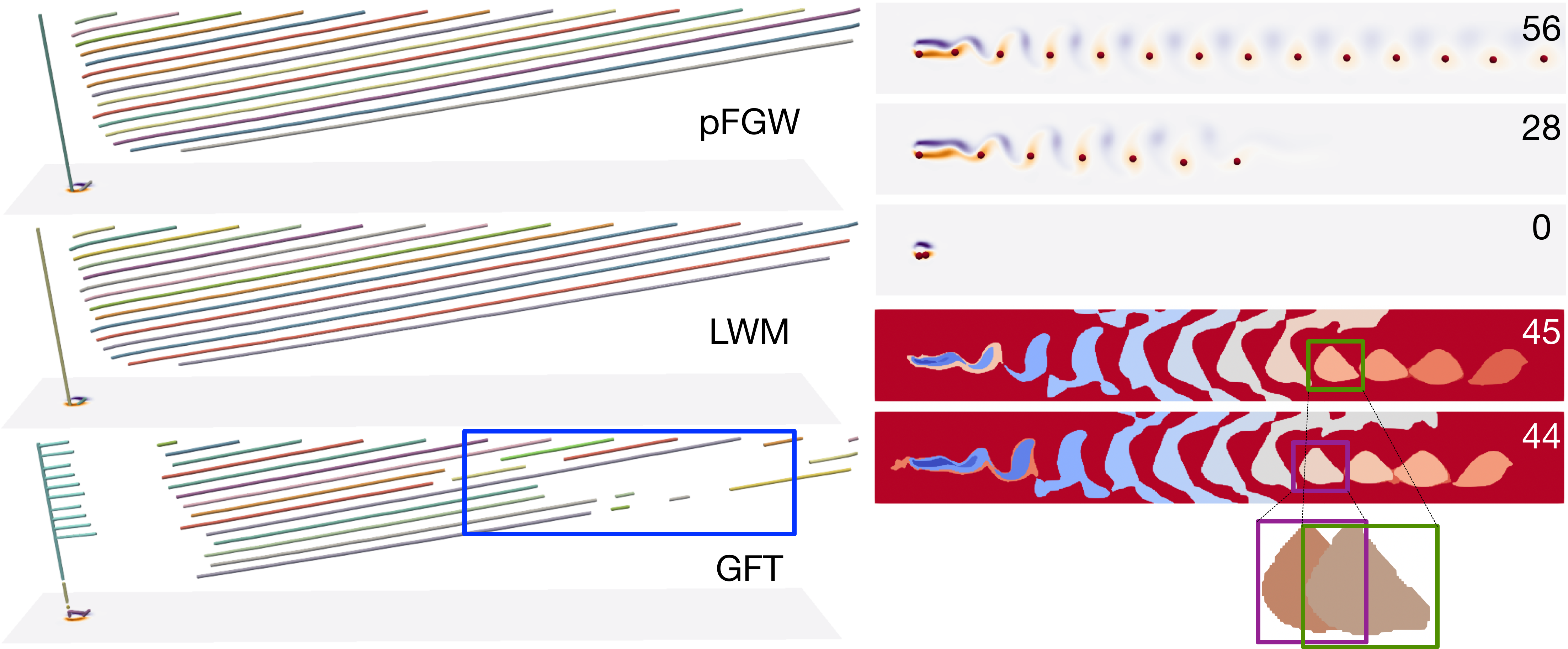}
    \vspace{-6mm}
    \caption{{\VS}. Left: comparing tracking results for  pFGW, LWM, and GFT, respectively. Right top: snapshots of scalar fields at $t=0, 28$, and $56$. Right bottom:  merge-tree-based segmentation of the scalar fields at $t=44$ and $45$. } 
    \label{fig:VS-comparison}
\end{figure}

The tracking results for {\VS} using pFGW, LWM, and GFT are shown in~\autoref{fig:VS-comparison} (left), in which there are 17, 17, and 27 trajectories, respectively. 
The results for pFGW and LWM are mostly identical, whereas the results from GFT show a number of oversegmentations  and missing trajectories at later time steps (\eg, see the blue box).  
A few snapshots of the scalar field are shown in  ~\autoref{fig:VS-comparison} (right top), where local maxima are well aligned horizontally and moving rightward at an almost constant speed. 
This characteristic leads to a large number of straight-line trajectories, as shown in~\autoref{fig:VS-comparison} (left). 
Meanwhile, a critical point remains stable in location to the left of the cylinder, whose trajectory is shown as a single straight line on the leftmost part of~\autoref{fig:VS-comparison} for both pFGW and LWM. 
 
\subsubsection{Comparison with Previous Approaches}
Our pFGW method and the LWM method perform similarly on {\VS} in terms of reducing oversegmentations and mismatches. 
Meanwhile, similar to {\HC} and {\UCF}, GFT typically introduces more oversegmentations in comparison with pFGW and LWM; in addition, certain trajectories may be missing due to insufficient feature overlaps between adjacent time steps.  
In~\autoref{fig:VS-comparison} (right bottom), we show merge-tree-based segmentation of the scalar field at time steps $44$ and $45$. 
Here, the corresponding features at $t=44$ and $t=45$ move rapidly to the right (see the purple and green boxes, respectively).  
Although the features associated with these adjacent time steps are visually similar, their overlap is quite small. 
Such insufficient feature overlaps appear to impact the tracking results significantly.   

\subsection{Ionization Front Dataset}
We study the  {\IF} dataset by setting $\epsilon = 10\%$, $\alpha=0.4$, and $L^*=0.02693$. 
A few snapshots of the scalar field at time steps $0, 30, 60$ and $90$ are shown in~\autoref{fig:IF-data}, as the instability progresses towards the right. 
\begin{figure}[!ht]
    \centering
    \includegraphics[width=1.0\columnwidth]{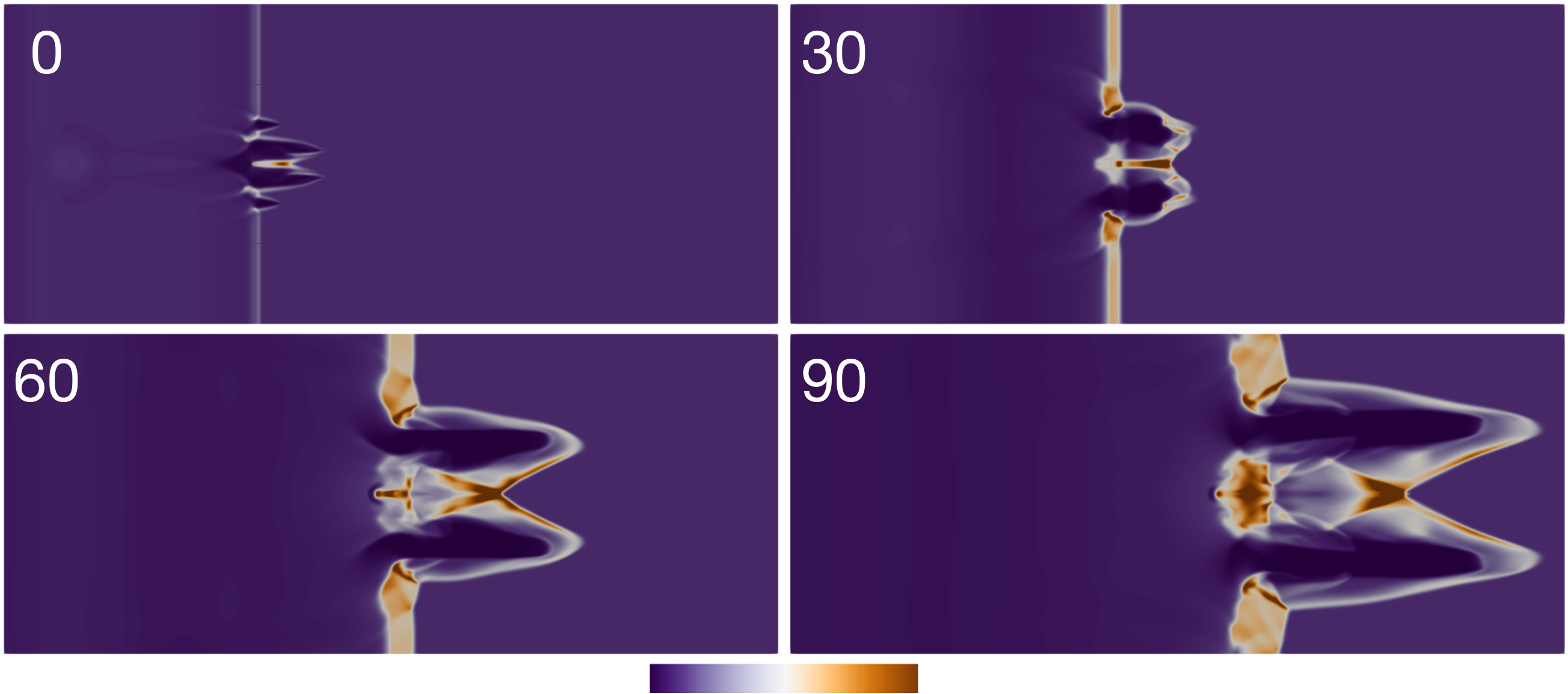}
    \vspace{-8mm}
    \caption{A few snapshots of {\IF} dataset.} 
    \label{fig:IF-data}
\end{figure}

\subsubsection{Tracking Results}

\begin{figure}[!ht]
    \centering
    \includegraphics[width=0.98\columnwidth]{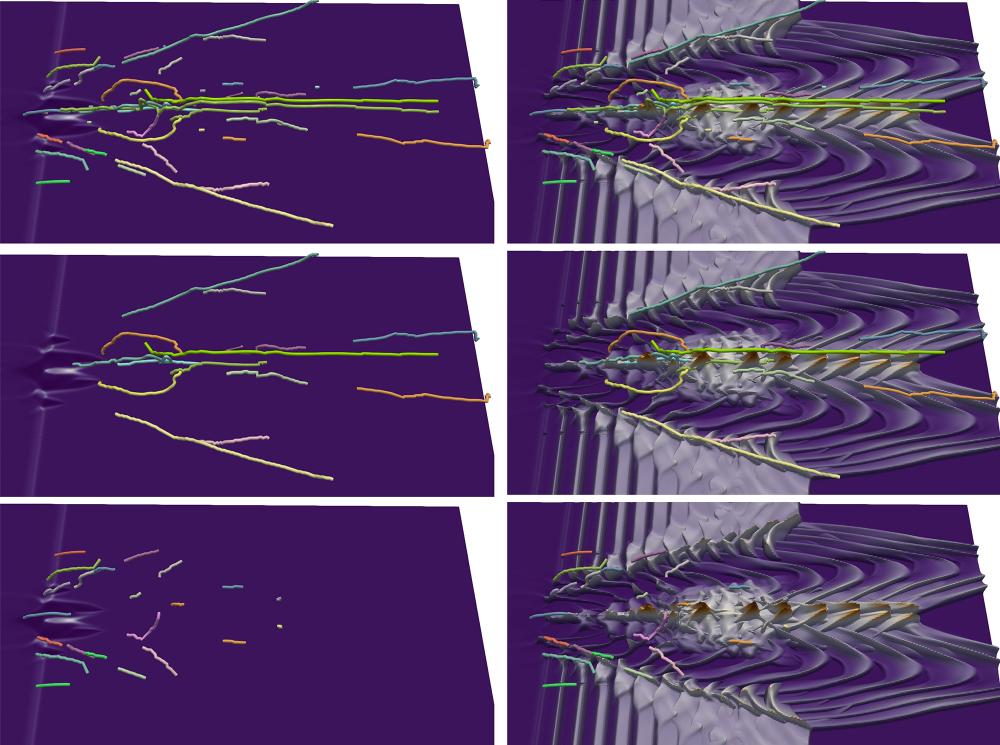}
    \vspace{-2mm}
    \caption{{\IF}. Left: pFGW trajectories are shown with the scalar field at time step $0$. Right: pFGW trajectories are visualized with the landscape of the time-varying scalar field. Top: all trajectories; middle: long-term trajectories; bottom: short-term trajectories. } 
    \label{fig:IF-tacking}
\end{figure}

We demonstrate our pFGW tracking results in~\autoref{fig:IF-tacking} (left), where trajectories are shown with the scalar field at $t=0$. 
We then visualize these trajectories with the landscape of the time-varying scalar field in~\autoref{fig:IF-tacking} (right), which is constructed by stacking the original scalar field at  time steps $0$, $10$, $20$, …, $100$, and $110$. 
Such a landscape clearly shows the rightward propagation of the ionization front. 
The results shown in~\autoref{fig:IF-tacking} (top) thus contain a number of trajectories that capture such a trend.  

We further split these trajectories into two sets: trajectories that last longer than $29$ time steps (\emph{long-term trajectories}) in~\autoref{fig:IF-tacking} (middle), and those that last between $5$ and $29$ steps (\emph{short-term trajectories}) in~\autoref{fig:IF-tacking} (bottom). 
We ignore trajectories shorter than $5$ time steps as they do not capture the global trend of the data.  
A number of the long-term trajectories appear to follow the direction of the radiation waves, whereas some short-term trajectories capture local interactions among them.  

\subsubsection{Comparison with Previous Approaches}
In terms of oversegmentations, pFGW, LWM, and GFT give rise to 51, 52, and 92 trajectories, respectively.  
pFGW produces slightly fewer trajectories than LWM, whereas GFT oversegments and produces the largest number of trajectories, see~\autoref{fig:IF-comparison}. 
In particular, GFT produces noticeably broken long-term trajectories, implying that it fails to track some major features consistently. 

In terms of mismatches, trajectories from all three methods interpret the evolution of features in a similar way. However, pFGW produces the smallest (normalized) maximum distance of 0.02693, whereas LWM and GFT give rise to a (normalized) maximum distance of 0.03840. 

\begin{figure}[!ht]
    \centering
    \includegraphics[width=1.0\columnwidth]{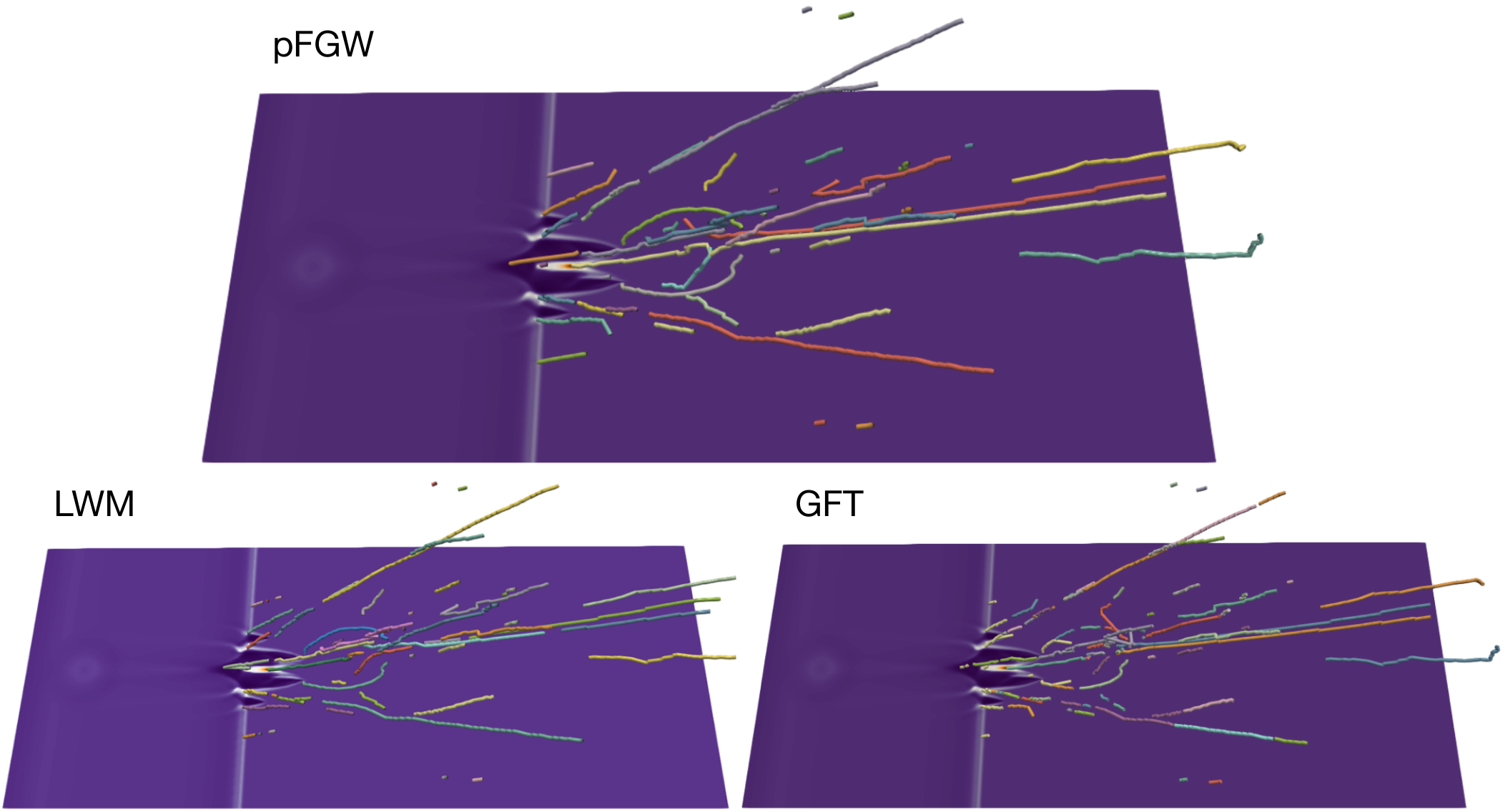}
    \vspace{-6mm}
    \caption{{\IF}. Comparing tracking results for pFGW, LWM, and GFT, respectively.} 
    \label{fig:IF-comparison}
    \vspace{-6mm}
\end{figure}

\subsection{Cloud Dataset}
\label{sec:cloud}

\subsubsection{Tracking Results }
For our cloud tracking task, \autoref{fig:CL-thickness} (left) illustrates the scalar field of interest, namely, the cloud optical thickness (at the 1st time step), where areas with high cloud optical thickness are shown in white, orange and brown.   
We track the cloud by tracking the movement of the local maxima of such a field. 

\begin{figure}[!ht]
    \centering
    \includegraphics[width=1.0\columnwidth]{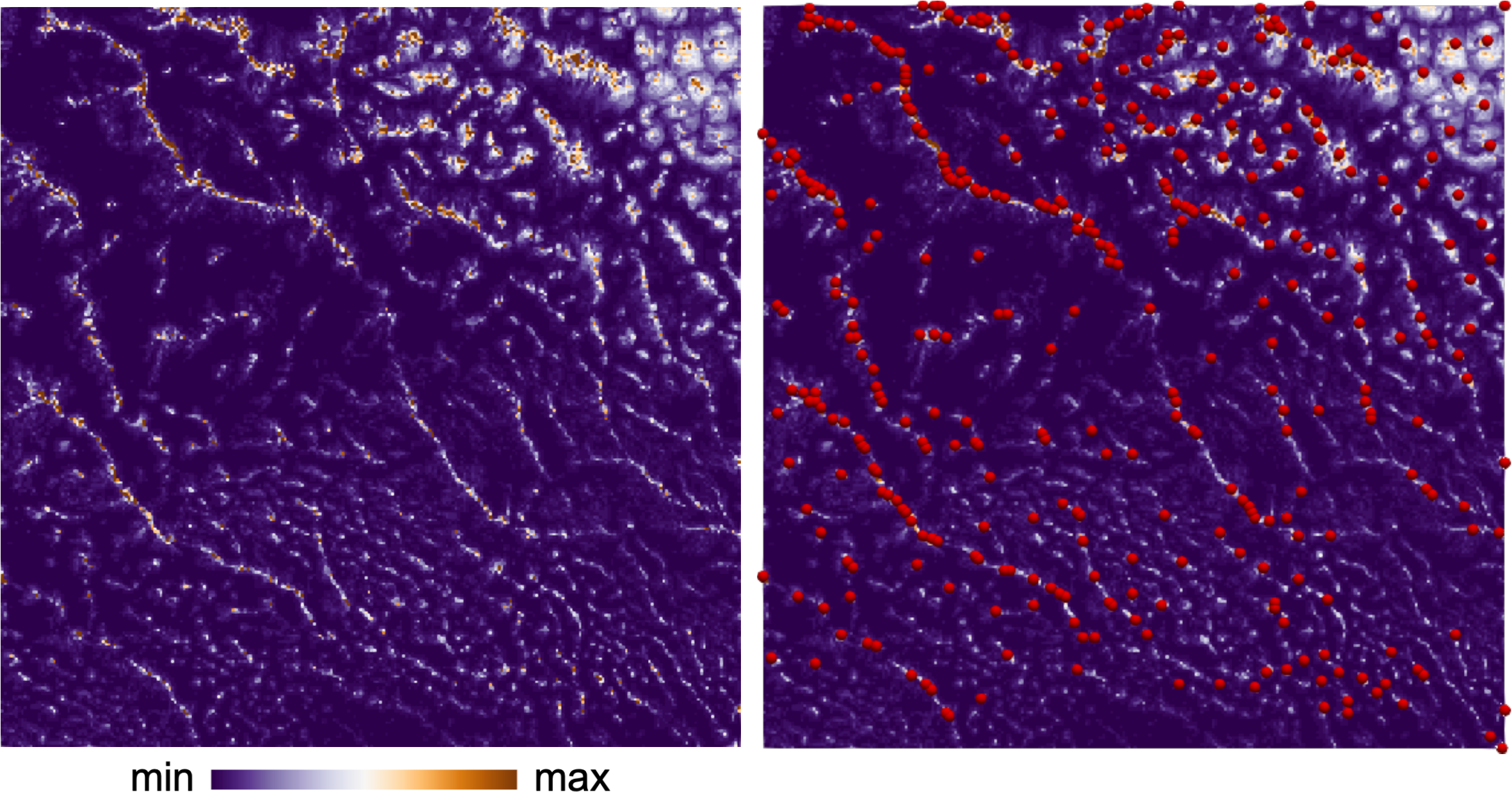}
    \vspace{-6mm}
    \caption{{\CL}. Visualization of a cloud optical thickness field (left) along with its local maxima in red (right).} 
    \label{fig:CL-thickness}
\end{figure}

As shown in \autoref{fig:CL-comparison} (right), local maxima (in red) are densely distributed in areas with high cloud optical thickness. 
Such characteristics present multiple challenges for feature tracking.  
First, the regions that contain local maxima may be very small, leading to insufficient feature overlaps between adjacent time steps. 
Second, densely distributed features frequently appear and disappear, making it challenging to track individual features. 

We present the tracking results in~\autoref{fig:CL-comparison}, in which the majority of the features move toward the left side of the (observable) domain.
For pFGW, we set $\epsilon=50\%$, $\alpha=0.1$ and $L^*=0.0653$. 

\begin{figure}[!ht]
    \centering
    \includegraphics[width=1.0\columnwidth]{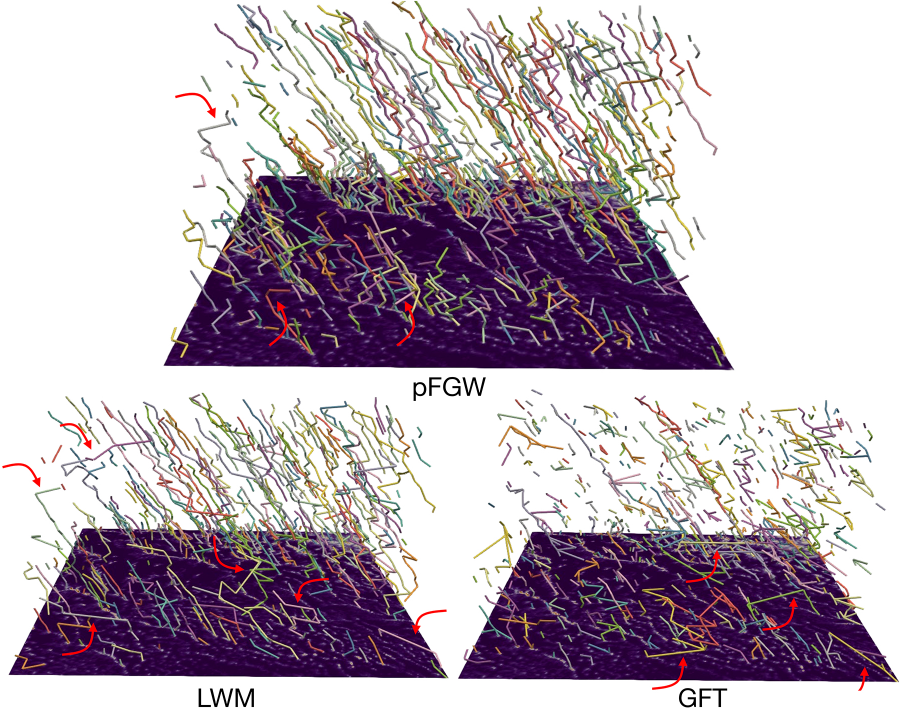}
    \vspace{-6mm}
    \caption{{\CL}. Tracking results for GFT, LWM, and pFGW, respectively. Red arrows highlight some (not all) trajectories containing mismatches.} 
    \label{fig:CL-comparison}
    \vspace{-2mm}
\end{figure}

\subsubsection{Comparison with Previous Approaches}
Overall, pFGW produces the largest number of  trajectories and the smallest number of isolated local maxima (i.e., trajectories that last for a single time step), comparing with the other two approaches. 
It produces $584$ trajectories plus $274$ isolated local maxima. 
In contrast, GFT produces $476$ trajectories plus $1390$ isolated local maxima, whereas LWM produces $302$ trajectories plus $1236$ isolated local maxima. 

All three methods produce mismatches, as indicated by red arrows in~\autoref{fig:CL-comparison}.  
However, pFGW produces long trajectories with the fewest number of mismatches. 
In particular, pFGW has the best maximum matched distance in comparison with LWM and GFT, respectively. 
Statistically, the largest (normalized) maximum matched distances in the results of pFGW, LWM, and GFT are  $0.0653$, $0.1828$, and $0.2473$, respectively. 
\autoref{fig:CL-matched-distance} displays the distributions of maximum matched distances from trajectories across three methods, where pFGW produces zero trajectories with a maximum matched distance larger than $0.075$. 
This shows that pFGW is comparatively most resistant to mismatches. 

\begin{figure}[!ht]
    \centering
    \includegraphics[width=1.0\columnwidth]{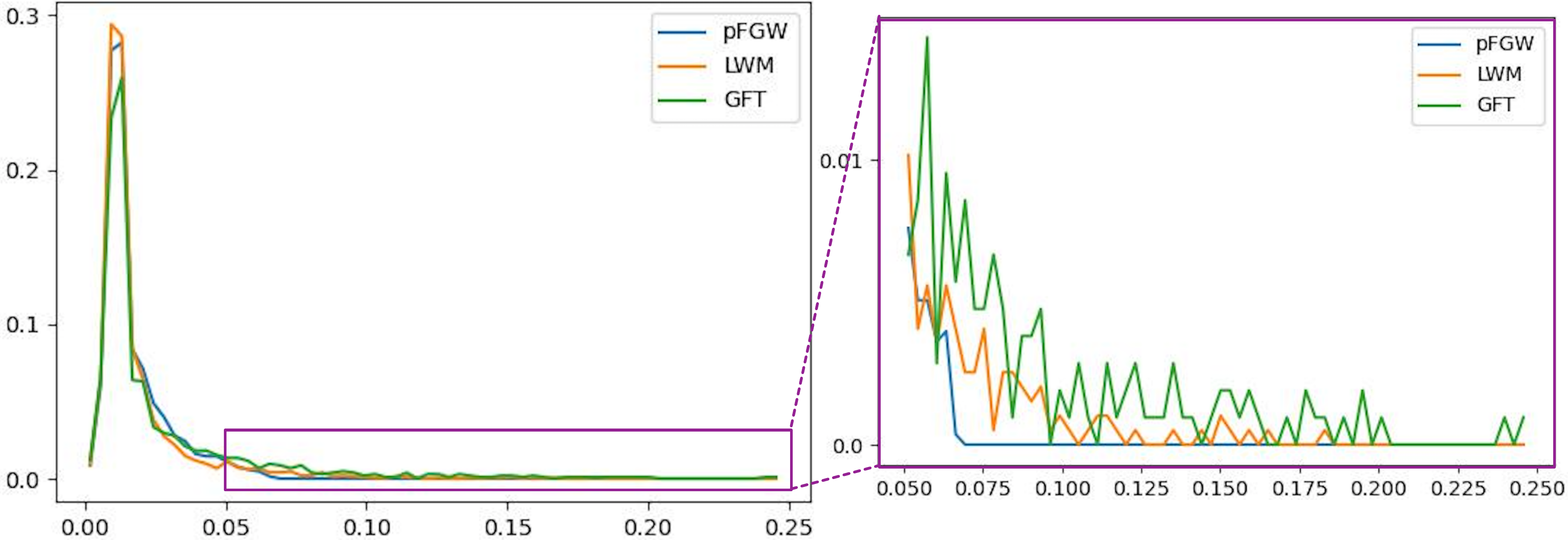}
    \vspace{-6mm}
    \caption{{\CL}. Left: distributions of maximum matched distances of trajectories obtained with pFGW, LWM, and GFT, respectively; x-axis is the maximum matched distance, y-axis is the percentage of the number of trajectories. Right: a zoomed-in view of the tail of the distribution.} 
    \label{fig:CL-matched-distance}
\end{figure}

Upon further inspection, GFT suffers severely from oversegmentations, that is, it produces many short trajectories and isolated local maxima. 
This is likely due to insufficient feature overlaps between adjacent time steps. 
LWM also fails to find trajectories for many local maxima, resulting in a large number of isolated local maxima. 
Under the current configuration of LWM, the cost of matching a local maximum to its diagonal projection (causing the disappearance of a feature) is dominated by the distance between the maximum and its pairing saddle. 
When a local maximum is very close to its pairing saddle, LWM tends to match the local maximum to its diagonal projection (rather than looking for its corresponding local maximum in an adjacent time step), which leads to an unpaired local maximum. 
On the other hand, in LWM, as a large number of features appear and disappear, mismatches between local maxima occur when the cost of matching a local maximum to its diagonal projection (causing the disappearance of a feature) outweighs the cost of matching the same local maximum to a faraway local maximum (causing mismatches). 
Whereas LWM relies on the locations of pairing  saddles to determine the appearances and disappearances of local maxima, our pFGW approach imposes more constraints on the relations among the critical points using merge trees, thus producing the smallest number of mismatches comparatively.

\subsection{3D Isabel Dataset}

\begin{figure}[!ht]
    \centering
 \includegraphics[width=0.98\columnwidth]{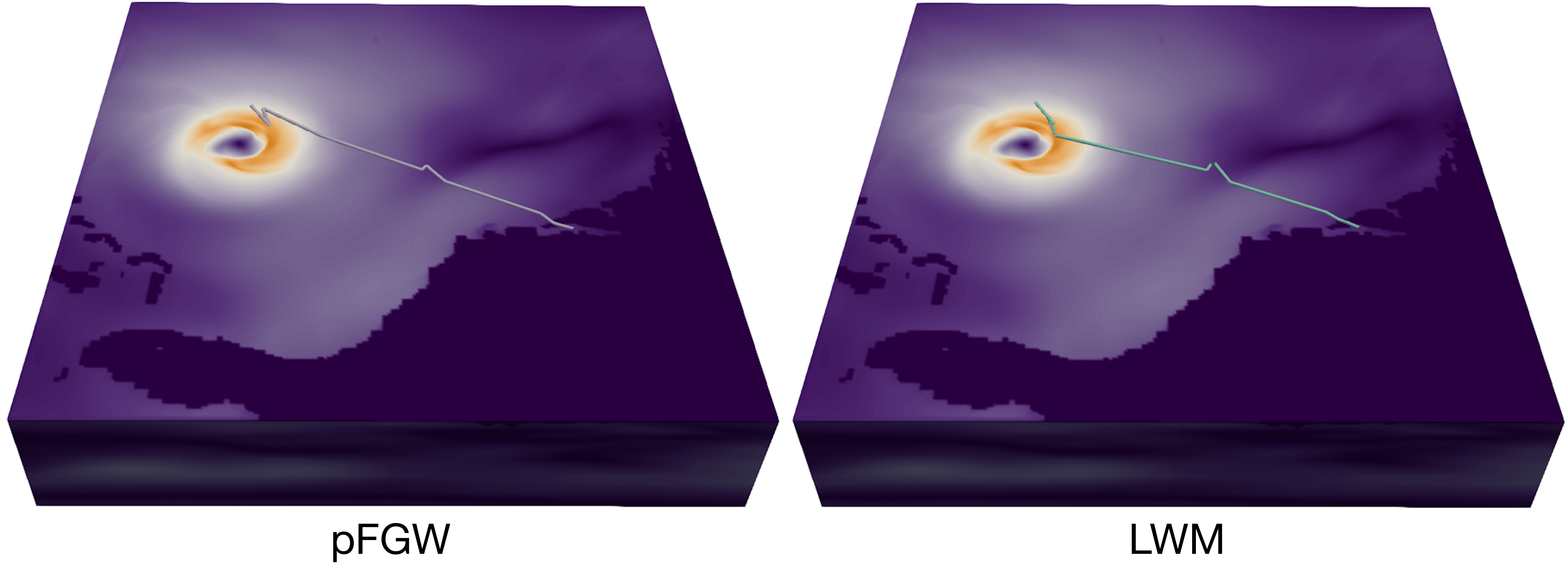}
    \vspace{-2mm}
    \caption{{\IS}. Tracking results for pFGW (left) and LWM (right).} 
    \label{fig:IS-tracking}
\end{figure}

For the 3D {\IS} dataset, we apply both pFGW and LWM to track the trajectory of the global maximum, which highlights the movement of the main hurricane. 
This dataset contains a discrete set of time steps with large gaps; thus, it is not suitable for feature tracking based on region overlaps (such as GFT). 
For pFGW, we use $\epsilon=10\%$, $\alpha=0.6$, $L^* = 0.4010$.  
As shown in~\autoref{fig:IS-tracking}, both pFGW and LWM successfully track the movement of the hurricane. 
These results highlight the robustness of topology-based feature tracking in 3D.

\subsection{3D Viscous Fingering Dataset}
\label{sec:viscous-finger}

\subsubsection{Tracking Results}

We focus on trajectories below the water surface for the 3D {\VF} dataset. 
The tracking results are shown in~\autoref{fig:VF-comparison} (top).  
For pFGW, we set $\epsilon = 1\%$, $\alpha = 0.1$, and $L^* = 0.1369$.
Due to high feature density, we highlight long-term trajectories (that last at least 20 time steps) in~\autoref{fig:VF-comparison} (bottom).   

\begin{figure}[!ht]
    \centering
    \includegraphics[width=1.0\columnwidth]{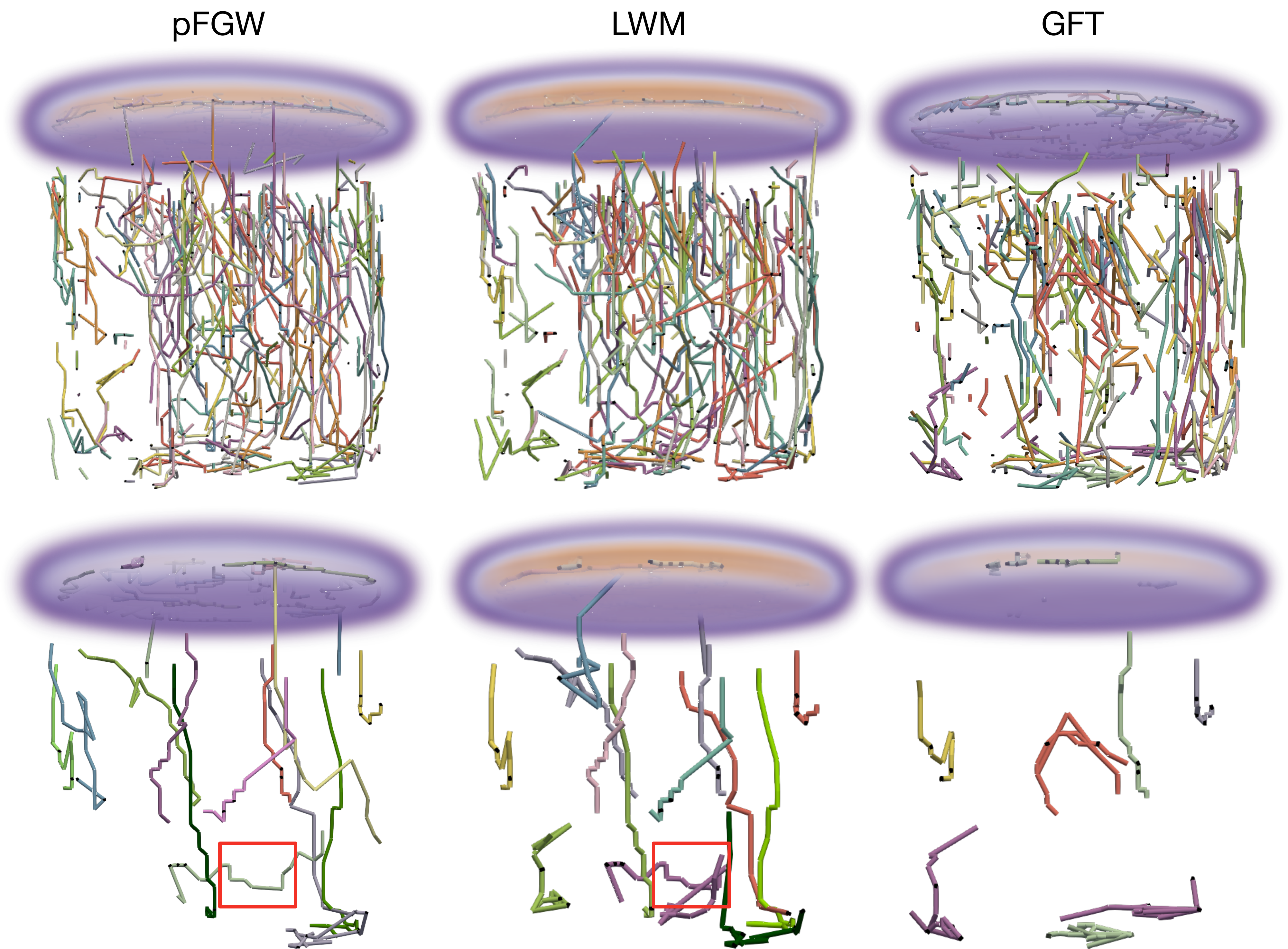}
    \vspace{-6mm}
    \caption{{\VF}. Top: tracking results for GFT, LWM, and pFGW, respectively. Bottom: trajectories that last at least 20 time steps. Red boxes contain a region of interest for analyzing mismatches.}
    \label{fig:VF-comparison}
    \vspace{-4mm}
\end{figure}

\subsubsection{Comparison with Previous Approaches}

As shown in~\autoref{fig:VF-comparison} (bottom), GFT does not produce as many long-term  trajectories as in the case of pFGW and LWM. 
pFGW produces the smallest number of isolated local maxima. 
In total, pFGW gives 424 trajectories plus 158 isolated local maxima; LWM produces 283 trajectories plus 619 isolated local maxima; and GFT leads to 462 trajectories plus 378 isolated local maxima. 
Even though LWM produces fewer trajectories than pFGW, these trajectories contain more mismatches that incorrectly connect faraway local maxima. 
Statistically, pFGW produces the best maximum matched distance: the largest (normalized) maximum distances for pFGW, LWM, and GFT are $0.1369$, $0.3214$, and $0.3499$, respectively. 

We give a case study where LWM produces mismatches that incorrectly connect faraway local maxima in~\autoref{fig:VF-case-study}. 
Here, we compare trajectories in a region of interest  enclosed by red boxes in~\autoref{fig:VF-comparison}.  
Across time steps $71 \to 76$, local maxima tracked by the same trajectories are colored the same. 
In~\autoref{fig:VF-case-study} (left), pFGW correctly identifies three trajectories (purple, yellow, and green), including the long-term green trajectory.  
In~\autoref{fig:VF-case-study} (right), LWM incorrectly tracks these local maxima along the purple trajectory. 

For example, in LWM, the purple trajectory between time steps $71 \to 72$ and $75 \to 76$ contain mismatches because the matched local maxima belong to different superlevel set components with minimum overlaps. 
In addition, LWM also terminates the green trajectory too early because the white maximum at time step $75$ and the purple maximum at time step $76$ belong to the same superlevel set component. 
In comparison, pFGW produces more reasonable trajectories: local maxima of different superlevel set components are not connected to the same trajectory; and the continuity of the green trajectory is preserved. 

\begin{figure}[!ht]
    \centering
    \vspace{-4mm}
    \includegraphics[width=0.7\columnwidth]{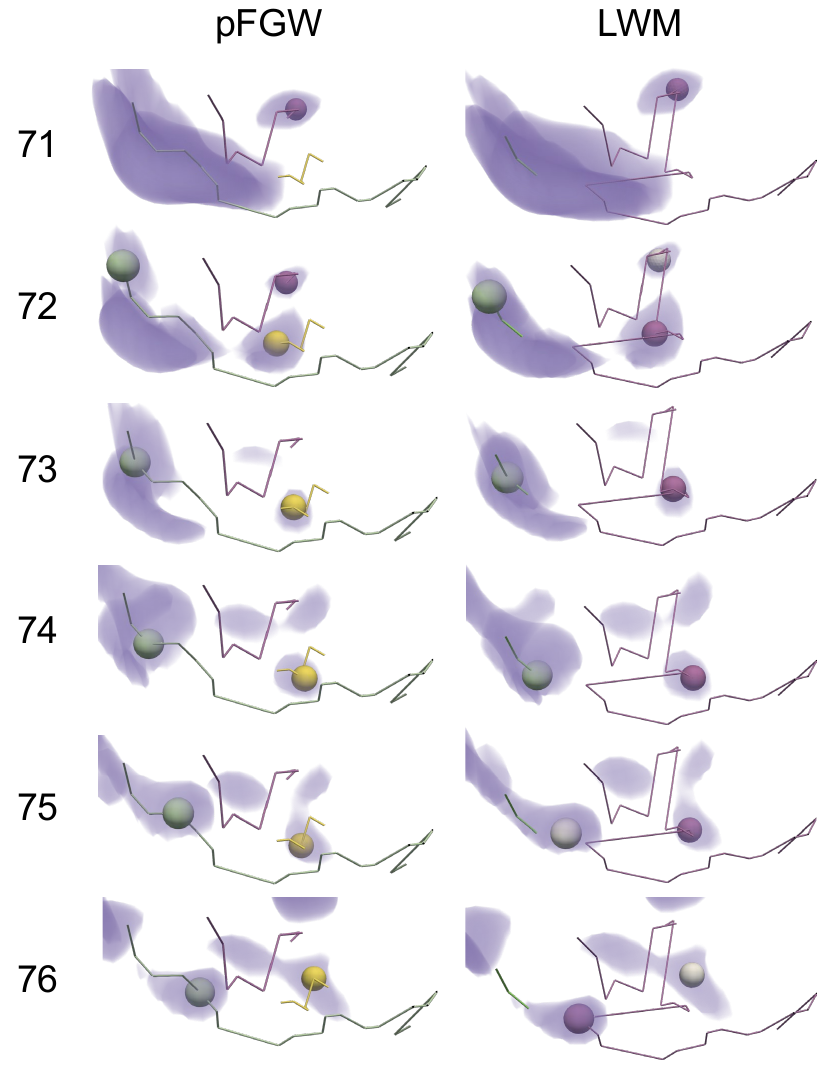}
    \vspace{-4mm}
    \caption{{\VF}. Comparing trajectories extracted by pFGW and LWM in a region of interest enclosed by red boxes in~\autoref{fig:VF-comparison}.}
    \label{fig:VF-case-study}
\end{figure}
 
In this case study for LWM, the distances between the mismatched local maxima are quite small. 
Furthermore, these critical points all have small persistence. 
Therefore, persistence information and critical point locations are not sufficient for LWM to avoid these mismatches, where as pFGW performs better by imposing additional topological constraints on the critical points via merge trees.

\subsection{Runtime Analysis}
We perform runtime analysis for all three approaches  (GFT, LWM, and pFGW) under the fine-tuned parameter configurations, as shown~\autoref{table:runtime}. 
All results are obtained from a laptop with a 12th Gen Intel(R) Core(TM) i9-12900H 2.50 GHz CPU with 32 GB memory. Both GFT and LWM are implemented in C++, whereas the pFGW is implemented in Python.
For the {\UCF}, {\VS}, {\IF}, {\IS}, and {\VF} datasets, pFGW achieves a similar runtime with LWM. 
GFT is generally slower than the other two methods except on the {\VF} dataset.  
pFGW is the slowest among the three for the {\HC} dataset. 
Overall, all three methods take $\leq 0.01$ second to compute the feature matching between a pair of adjacent time steps when the number of nodes is below $100$.
We do not include the runtime for merge tree generation as it is part of the data preprocessing. 
GFT spends more time on merge tree generation than LWM and pFGW, since it requires extra information on merge tree segmentation. 

In terms of computational complexity, minimizing the pFGW distance between merge trees requires $\mathcal{O}(n_1 n_2^2 + n_1^2 n_2)$ per iteration, where $n_1$ and $n_2$ are the size of merge trees. In our experiments, the pFGW distance converges within $20$ iterations for all datasets.

\begin{table}[!ht]
\resizebox{1.0\columnwidth}{!}{
\begin{tabular}{c|c|c|c|c|c}
\hline
\textbf{Dataset} & \# of nodes & Time steps &  Method & Total time (sec) & Avg. time \\ \hline
\multirow{3}{*}{\HC}  & \multirow{3}{*}{40} & \multirow{3}{*}{31}  & GFT & 0.120  & 0.0040 \\
 & &  & LWM & 0.045  & 0.0015 \\
 & &  & pFGW & 0.146  & 0.0049 \\ \hline
\multirow{3}{*}{\UCF} & \multirow{3}{*}{16} & \multirow{3}{*}{499} & GFT & 1.049  & 0.0021 \\
 & &  & LWM & 0.325  & 0.0007 \\
 & &  & pFGW & 0.414  & 0.0008 \\ \hline
\multirow{3}{*}{\VS}  & \multirow{3}{*}{30} & \multirow{3}{*}{59}  & GFT & 0.148  & 0.0026 \\
 & &  & LWM & 0.095  & 0.0016 \\
 & &  & pFGW & 0.098  & 0.0017 \\ \hline
\multirow{3}{*}{\IF} & \multirow{3}{*}{40} & \multirow{3}{*}{123} & GFT & 0.577  & 0.0047 \\
 & &  & LWM & 0.333  & 0.0027 \\
 & &  & pFGW & 0.318  & 0.0026 \\ \hline
\multirow{3}{*}{\CL} & \multirow{3}{*}{769} & \multirow{3}{*}{34} & GFT & 4.180  & 0.1267 \\
 & &  & LWM & 1.464  & 0.0444 \\
 & &  & pFGW & 4.043  & 0.1225 \\ \hline
\multirow{3}{*}{\IS} & \multirow{3}{*}{14} & \multirow{3}{*}{12}  & GFT & 0.378 & 0.0344  \\
 & &  & LWM & 0.176  & 0.0160 \\
 & &  & pFGW & 0.012  & 0.0011 \\ \hline
\multirow{3}{*}{\VF} & \multirow{3}{*}{96} & \multirow{3}{*}{120} & GFT & 0.879  & 0.0074 \\
 & &  & LWM & 1.041  & 0.0087 \\
 & &  & pFGW & 1.036  & 0.0087 \\ \hline
\end{tabular}}
\vspace{2mm}
\caption{Runtime (in seconds) of feature tracking with {GFT}, {LWM}, and pFGW, respectively. Average time is computed for a pair of adjacent time steps.}
\label{table:runtime}
\vspace{-8mm}
\end{table}

\begin{figure*}[ht]
    \centering  
    \includegraphics[width=1.8\columnwidth]{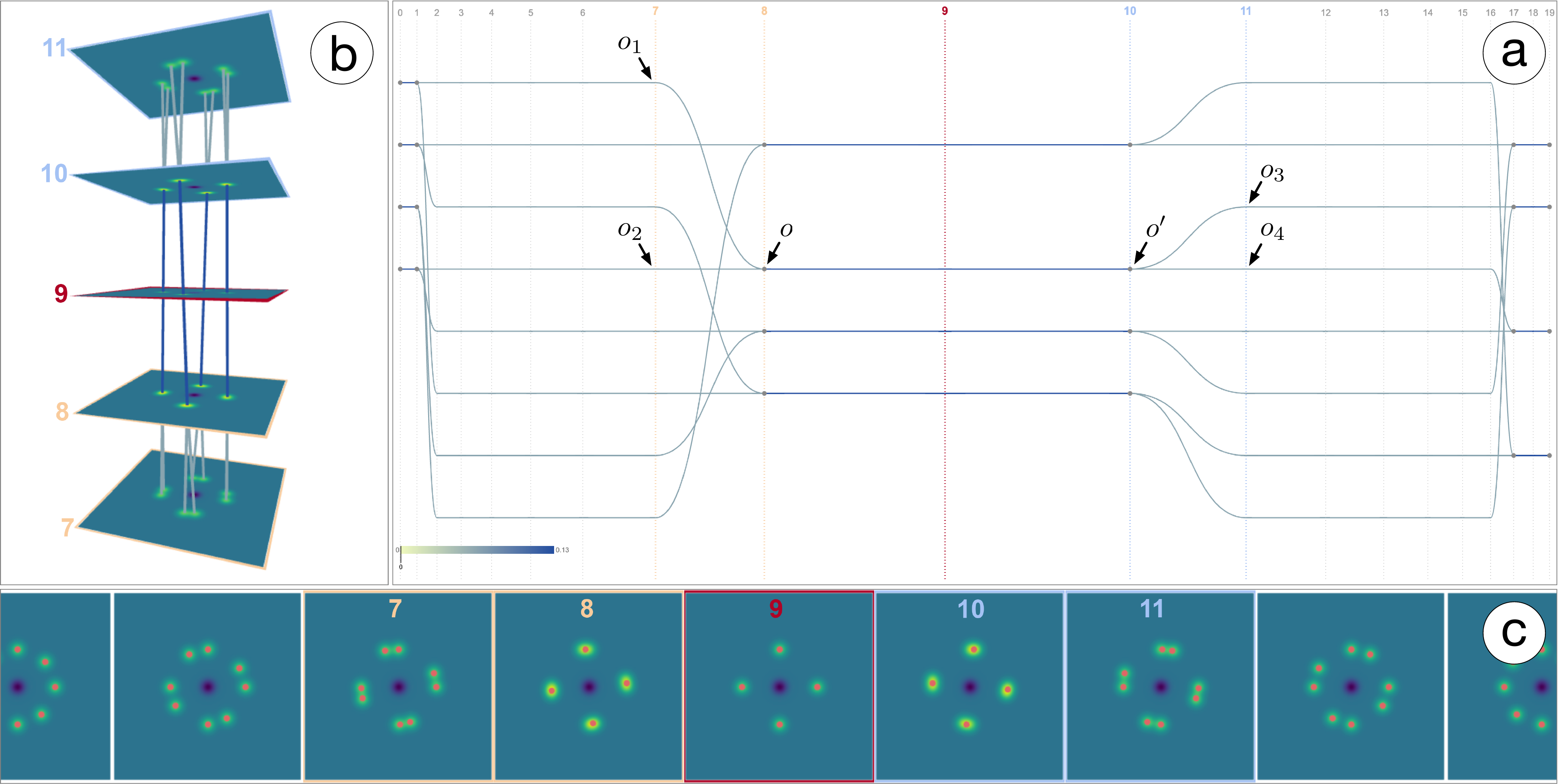}
    \vspace{-2mm}
    \caption{A visual demo of probabilistic feature tracking: 
    the graph view (a), the track view (b), and the data view (c).} 
    \label{fig:tracking-interface}
    \vspace{-2mm}
\end{figure*}

\begin{figure*}[ht]
    \centering
    \includegraphics[width=1.8\columnwidth]{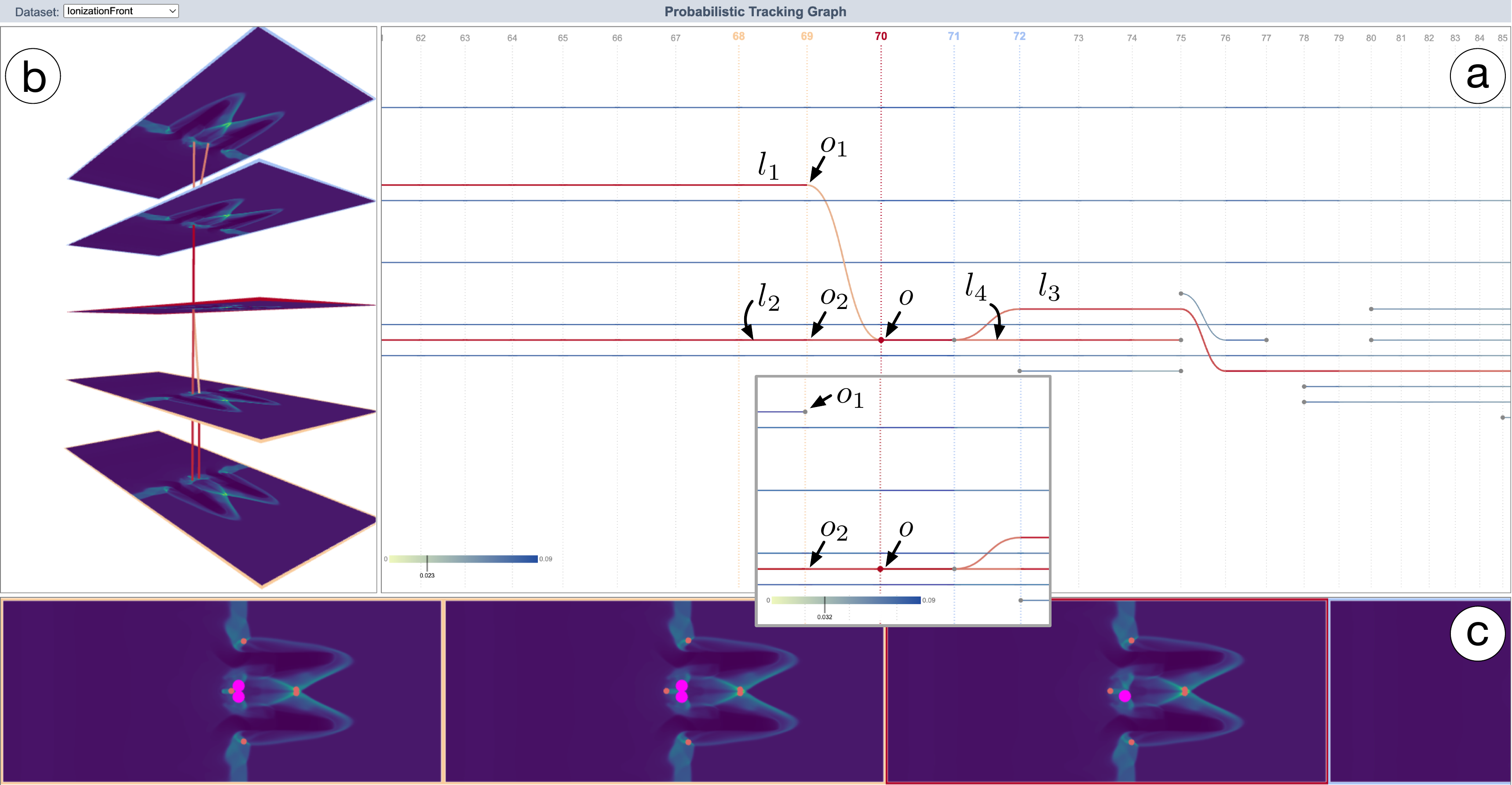}
    \vspace{-2mm}
    \caption{Selecting a particular feature $o$ to highlight relevant tracks (in red) and features (in magenta). The visual demo showcases uncertainty in tracking across different probability thresholds (see the box insert).} 
    \label{fig:tracking-focus-feature}
    \vspace{-2mm}
\end{figure*}

\section{Probabilistic Tracking Graphs}
\label{sec:tracking-graphs}

A direct consequence of our pFGW method is that it enables richer representations of tracking graphs, referred to as \emph{probabilistic tracking graphs}. 
The partial optimal transport provides a probabilistic coupling between features at adjacent time steps, which are then visualized by weighted tracks of these tracking graphs. 

We provide a visual demo for probabilistic feature tracking for several 2D time-varying datasets. 
We illustrate its visual interface using a synthetic dataset. 
As shown in~\autoref{fig:tracking-interface}, the synthetic  dataset is constructed as a mixture of nine Gaussian functions: one negative Gaussian function stays fixed at the center, eight Gaussian functions are positioned on a cycle, four of which remain stationary, whereas the other four move clockwise around the center. 
We focus on tracking the local maxima across time. 

As shown in~\autoref{fig:tracking-interface}, the \emph{graph view} {(a)} visualizes a tracking graph whose feature tracks are equipped with probabilistic tracking information. 
The \emph{track view} {(b)} displays tracks across five consecutive time steps in 3D spacetime centered around  the selected time step. The \emph{data view} {(c)} presents the scalar fields at the same five time steps.  
With multiple views, users can explore the probabilistic feature tracking results from global and local  perspectives.

The graph view {(a)} visualizes a tracking graph that captures the evolution of features across all time steps.
Vertical \emph{time bars} are positioned along the x-axis to represent time steps in increasing order, whereas \emph{tracks} associated with individual features are laid out horizontally in a way that minimizes edge crossings. 
Nodes at the intersection of time bars and tracks represent features that appear, disappear, merge, or split. 
If feature $i$ from time $t$ is coupled with feature $j$ from time $t+1$ with a nonzero measure in the coupling matrix $C$, an edge is drawn between these two features in the tracking graph, whose color and opacity encodes the value of $C(i,j)$ as indicated in the color bar. 
$C(i,j)$ is a probability measure, where higher value implies a higher probability of matching feature $i$ with feature $j$. 
Users could filter the tracks in the tracking graphs by scrolling the color bar, in order to explore the tracking graphs at different probability thresholds.

In the tracking graph shown in~\autoref{fig:tracking-interface} (a), when the four moving Gaussian functions coincide with the four stationary ones, their corresponding features merge together at time step $8$.
Subsequently, these merged features split at time step $10$. 
The tracking graph depicts such events as probabilistic merges and splits. 
From time steps $7 \to 8$, two features $o_1$ and $o_2$ merged into feature $o$ with the equal probability. 
Similarly, from time steps $10 \to 11$, a single feature $o'$ (that corresponds to $o$) splits into two ($o_3$ and $o_4$) with equal probability. 
These merging and splitting events are also encoded in the data view and the track view, see~\autoref{fig:tracking-interface} (b) and (c). 
In this case, $o_2$ at time step $7$ corresponds to $o_4$ at time step $11$, $o$ at time step $8$ corresponds to $o'$ at time step $10$; however, $o_1$ at time step $7$ does not match to $o_3$ at time step 11, since there are ambiguities in matching due to symmetry.

We now visually demonstrate the probabilistic tracking graph for the {\IF} dataset. In the example shown in~\autoref{fig:tracking-focus-feature}, we set the probability threshold at $0.023$ (main view) and $0.032$ (inserted view), respectively. 
To investigate the data of interest, users could select a specific time step $t$ by clicking its corresponding time bar, which updates views {(a)}, {(b)}, and {(c)}. 
For the graph view {(a)}, the selected time bar is highlighted in red, with the previous two (at $t-2$, $t-1$) and subsequent two (at $t+1$, $t+2$) time bars colored in orange and blue, respectively. 
We smoothly adjust intervals between time steps based on the fisheye technique using animation, so that the focus area surrounding the selected time bar is magnified and the area away from the focus is compressed. 
For the track view {(b)}, we render five scalar fields centered by the selected time step and highlight the tracks  among them in a 3D spacetime, while supporting zooming and rotation. 
For the data view {(c)}, we visualize the five 2D scalar fields side by side, where tracked features are highlighted in red. 

Furthermore, our visual demo allows users to explore tracks associated with specific features.  As illustrated in ~\autoref{fig:tracking-focus-feature}, users can select a feature of interest (denoted by $o$), which sits at the intersection of four tracks $l_1, l_2, l_3$, and $l_4$; all of which are highlighted in red while maintaining their   opacity. The track view {(b)} then displays these four tracks, whereas the data view {(c)} highlights the corresponding features (in magenta) along these tracks.
In particular, two features, $o_1$ and $o_2$ at time step $69$ are coupled with feature $o$ at time step $70$ with relatively high probabilities. By increasing the tracking probability threshold, shown in the box insert, feature $o_1$ will stop its track at time step $69$, whereas features $o_2$ and $o$ remain matched with each other. Our visual demo showcases such uncertainty in tracking. 

The visual demo is implemented using \textit{JavaScript} for the front-end, where the tracking graphs are visualized with \textit{D3.js} and the scalar fields are visualized using  \textit{WebGL}. The computational back-end is built with \textit{Python} and \textit{Flask}.

\section{Conclusion}
\label{sec:conclusion}

In this paper, we provide a flexible framework for tracking topological features in time-varying scalar fields. 
Our framework builds upon tools from topological data analysis (\ie, merge trees) and partial optimal transport. 
In particular, we model a merge tree as a measure network, and define a new partial fused Gromov-Wasserstein distance between a pair of merge trees. 
Such a distance gives rise to a partial matching between topological features in time-varying data, thus enabling flexible topology tracking for scientific simulations, as demonstrated by our extensive experiments.

On the other hand, our framework is not without limitations. 
First, we focus on feature tracking using merge trees, that is, we aim to preserve \emph{sublevel set} relations between features (\ie, critical points) that are captured by merge trees. Other topological descriptors such as Reeb graphs and Morse complexes may capture different topological relations such as \emph{level set} or \emph{gradient} relations. 
We would like to explore topology tracking with partial optimal transport using other types of topological descriptors, which are left for future work. 
Second, we provide experimental justifications for parameter tuning; understanding parameter tuning from a theoretical standpoint seems elusive. For future work, given the efficiency of our implementation, we would like to perform experiments involving datasets from large-scale simulations.

\ifCLASSOPTIONcompsoc
  \section*{Acknowledgments}
\else
  \section*{Acknowledgment}
\fi
This project was partially supported by DOE DE-SC0021015, NSF IIS 2145499, IIS 1910733, and DMS 2107808. 
\update{We would like to thank Andi Walther for sharing the {\CL} dataset and Dwaipayan Chatterjee for processing the dataset.} 


\ifCLASSOPTIONcaptionsoff
  \newpage
\fi

\bibliographystyle{IEEEtran}
\bibliography{GWMT-refs}
\vskip -2.2\baselineskip plus -1fil
\vspace{-2mm}
\begin{IEEEbiography}[\vspace{-6mm}{\includegraphics[width=1in,height=1.25in,clip,keepaspectratio]{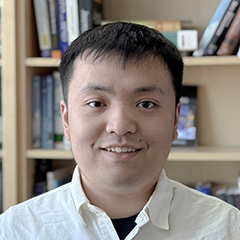}}]{Mingzhe Li} is a Ph.D. student at the School of Computing and the Scientific Computing and Imaging (SCI) Institute, University of Utah. His recent work combines optimal transport with topological data analysis in visualizing time-varying scientific simulations.  
\end{IEEEbiography}
\vskip -2.2\baselineskip plus -1fil
\vspace{-6mm}
\begin{IEEEbiography}[\vspace{-6mm}{\includegraphics[width=1in,height=1.25in,clip,keepaspectratio]{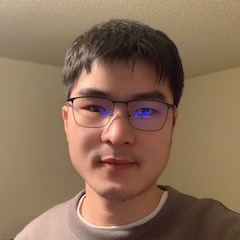}}]{Xinyuan Yan} is a Ph.D. student at the School of Computing and the SCI Institute, University of Utah. His current research focuses on visual analytics, set visualization, and explainable AI.
\end{IEEEbiography}
\vskip -2.2\baselineskip plus -1fil
\vspace{-6mm}
\begin{IEEEbiography}[\vspace{-6mm}{\includegraphics[width=1in,height=1.25in,clip,keepaspectratio]{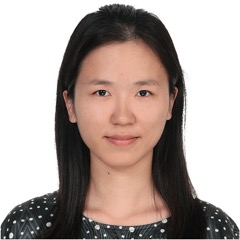}}]{Lin Yan} is an Assistant Professor of Computer Science at the Iowa State University. 
She received her Ph.D. in computing from the University of Utah in 2022. 
She studies complex data from scientific simulations by combining techniques from topological data analysis, statistics, machine learning, and visualization. 
\end{IEEEbiography}
\vskip -2.2\baselineskip plus -1fil
\vspace{-6mm}
\begin{IEEEbiography}[\vspace{-6mm}{\includegraphics[width=1in,height=1.25in,clip,keepaspectratio]{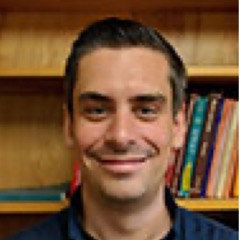}}]{Tom Needham} is  an  Assistant  Professor of Mathematics at Florida State University. He received his Ph. D. in Mathematics from University of Georgia. He works on applications of geometry and topology to problems in data science, computer vision and signal processing.
\end{IEEEbiography}
\vskip -2.2\baselineskip plus -1fil
\vspace{-6mm}
\begin{IEEEbiography}[\vspace{-6mm}{\includegraphics[width=1in,height=1.25in,clip,keepaspectratio]{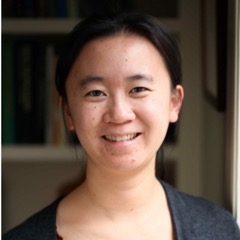}}]{Bei Wang} is  an  Associate  Professor  at  the  School  of  Computing and a faculty member at the SCI Institute,  University  of  Utah.  She  received  her  Ph.D. in  Computer  Science  from  Duke  University.  Her research  interests  include  topological  data  analysis,  data  visualization,  computational  and applied topology,  computational  geometry,  machine learning, and data mining. 
\end{IEEEbiography}
\clearpage
\appendices
\section{A New Stability Theorem for GW Distance}
\label{sec:stability-details}

In this section, we consider merge trees arising from functions on simplicial complexes and their associated measure network representations. Our goal is to prove the stability theorem, \autoref{thm:stability-main}, which is a result in the tradition of topological data analysis (\eg, \cite{cohen2007stability,chazal2009gromov,skraba2020wasserstein}). That is, we would like to show that a small change in the function data produces a small change in merge tree representations, as measured by the GW distance. We first remind the reader of technical details and notations before restating and proving the theorem.

Let $X$ be a finite, connected, geometric simplicial complex with a vertex set $V$. Let $f:X \to \R$ be a function obtained by starting with a function $f:V \to \R$ on the vertex set and extending linearly over higher dimensional simplices.
These are the types of functions we deal with in our visualization pipelines. Let $T_f$ denote the merge tree for $f$, defined as a quotient space $T_f = X/\sim$. Of course, when dealing with merge trees computationally, they are  represented by finite sets of points. A \emph{labeling} of a merge tree $T$ is a map from a finite set $\pi:S \to T$, which is surjective onto the set of leaf nodes of $T$ \cite{GasparovicMunchOudot2019}. Given a labeling, one can construct a \emph{least common ancestor matrix} $W$, indexed over $S \times S$, where $W(s,t)$ is the height of the highest point along the unique geodesic path from $\pi(s)$ to $\pi(t)$ in $T$. It is shown in \cite[Lemma 2.9]{GasparovicMunchOudot2019} that a merge tree with labeling is uniquely encoded by its least common ancestor matrix. We will specifically choose a labeling of $T_f$ given by restricting the quotient map $X \to T_f$ to the finite vertex set $V$. In the following part of this section, let $\pi_f:V \to T_f$ denote this labeling map and let  $W_f$ denote the least common ancestor matrix for this labeling. 

Let $p$ be a probability distribution over the vertex set $V$. We will assume that $p$ is \emph{balanced}, in the sense that for any $u,v,w \in V$, we have $p(u) \cdot p(v) \leq p(w)$; this property holds for the uniform distribution, for example. We then define the measure network representation of $T_f$ to be $G_f=(V,p,W_f)$, with $W_f$ denoting the least common ancestor matrix, as defined above. We can also define a family of weighted norms on the space of functions $f:V \to \R$ by 
\[
\|f\|_{L^q(p)} := \left(\sum_{v \in V} |f(v)|^q p(v)\right)^{1/q}.
\]

\begin{remark}
    The reader might observe that the vertex set of $G_f$ is the same as that of the mesh $X$. Generically, this will be the case for the merge tree associated to $f$, as a generic function will take distinct values on all vertices, so that no vertices are identified when constructing the merge tree. Even if vertices do get identified, the measure network representation $G_f$ will be \emph{weakly isomorphic} to the merge tree, in the sense that the GW distance between $G_f$ and the merge tree is zero---see Definition 2.3 and Theorem 2.4 of \cite{ChowdhuryMemoli2019}. This means that the measure network construction $G_f$ used in the stability result is a valid representation of the merge tree structure used throughout the paper.
\end{remark}

We now restate \autoref{thm:stability-main}.

\noindent\textbf{Theorem~\ref{thm:stability-main}.}
\emph{
Let $f,g:X \to \R$ be functions defined as above and let $p$ be a balanced probability distribution. Then}
\begin{equation}
\label{eqn:stability}
d^{GW}_q(G_f,G_g) \leq \frac{1}{2} |V|^{2/q} \|f - g\|_{L^q(p)}.
\end{equation}

The proof of the theorem will use Lemma \ref{lemma:lca}.

\begin{lemma}
\label{lemma:lca}
The least common ancestor matrix $W_f$ of $f:X \to \R$ can be expressed as
\[
W_f(v,w) = \min_{u_1,u_2,\ldots,u_n} \max_{u_i} f(u_i), 
\]
where the minimum is over paths $v = u_1,u_2,\ldots,u_n = w \in V$ where each pair of consecutive vertices is connected by an edge of $X$. 
\end{lemma}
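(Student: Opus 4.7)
The plan is to prove equality in both directions, using the characterization of $W_f(v,w)$ as the lowest height at which $v$ and $w$ become identified in the merge tree. Concretely, since $T_f = X/\!\sim$, the height of the least common ancestor of $\pi_f(v)$ and $\pi_f(w)$ coincides with
\[
W_f(v,w) = \inf\{a \in \R : v,w \text{ lie in the same connected component of } f^{-1}(-\infty,a]\},
\]
and because $X$ has finitely many vertices this infimum is attained. Let $h := \min_{u_1,\ldots,u_n} \max_i f(u_i)$, where the minimum is over edge paths in $V$ from $v$ to $w$.

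For the inequality $W_f(v,w) \le h$, I would fix an optimal vertex path $v = u_1,\ldots,u_n = w$ realizing $h$. Since $f$ is the linear extension of its vertex values, on each edge $[u_i,u_{i+1}]$ the function $f$ is affine, so every point on the edge has value at most $\max\{f(u_i),f(u_{i+1})\} \le h$. Concatenating these edges gives a continuous arc from $v$ to $w$ contained in $f^{-1}(-\infty,h]$, so $v$ and $w$ lie in the same sublevel component at height $h$, yielding $W_f(v,w) \le h$.

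The harder direction is $W_f(v,w) \ge h$; equivalently, given $a = W_f(v,w)$ I must exhibit a vertex edge path with all vertex values $\le a$. The key bridging step, which I expect to be the main obstacle, is to upgrade sublevel-set connectivity to subcomplex connectivity. Define the full subcomplex $X_a := \{\sigma \in X : f(u) \le a \text{ for every vertex } u \text{ of } \sigma\}$; I claim that $v$ and $w$ lie in the same component of $X_a$ iff they lie in the same component of $f^{-1}(-\infty,a]$. The nontrivial inclusion uses the fact that an interior point $x$ of a simplex $\sigma$ with $f(x) \le a$ must have some vertex $u$ of $\sigma$ with $f(u) \le f(x) \le a$ (since $f(x)$ is a convex combination of the vertex values), and the straight segment from $x$ to $u$ stays in $f^{-1}(-\infty,a]$ by linearity. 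Consequently any path in $f^{-1}(-\infty,a]$ can be homotoped, component by component, onto a walk in the $1$-skeleton of $X_a$, producing the required edge path $v = u_1,\ldots,u_n = w$ with $f(u_i) \le a$ for all $i$.

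Combining the two inequalities gives $W_f(v,w) = h$, proving the lemma. The only subtlety beyond the obstacle above is to ensure the ``edge path'' discretization of a continuous path in $X_a$ is legitimate, which follows from standard simplicial approximation in the $1$-skeleton of a connected subcomplex of a finite complex.
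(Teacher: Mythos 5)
Your proposal is correct and follows essentially the same route as the paper's proof: both reduce the problem to replacing a continuous path in the sublevel set $f^{-1}(-\infty,a]$ by an edge path through the lowest vertices of the simplices it visits, using linearity of $f$ on each simplex. The only cosmetic difference is that you characterize $W_f(v,w)$ directly via sublevel-set connectivity, whereas the paper first establishes the equivalent min--max formula $W_f(v,w)=\inf_\gamma \max_t f(\gamma(t))$ over continuous piecewise linear paths by projecting to and lifting from the merge tree.
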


\begin{proof}
We first show that 
\[
W_f(v,w) = \inf_{\gamma} \max_{t} f(\gamma(t)),
\]
where the infimum is over continuous piecewise linear paths $\gamma:[0,1] \to X$ with $\gamma(0) = v$ and $\gamma(1) = w$. Indeed, any path $\gamma$ from $v$ to $w$ projects under the quotient map to a continuous path $\bar{\gamma}:[0,1] \to T_f$ from $\pi_f(v)$ to $\pi_f(w)$ ($\pi_f$ denoting the restriction of the quotient map $X \to T_f$ to the vertex set, as above) with property that $f(\gamma(t)) = f(\bar{\gamma}(t))$ ($f$ denoting both the map $f:X \to \R$ and the induced map $f:T_f \to \R$, by abuse of notation). Then, we have
\[
\inf_{\gamma} \max_{t} f(\gamma(t)) = \inf_{\bar{\gamma}} \max_{t} f(\bar{\gamma}(t)) \geq \inf_{\eta} \max_t f(\eta(t)),
\]
where the latter infimum is over all continuous paths $\eta$ joining $\pi_f(v)$ to $\pi_f(w)$ in $T_f$. Since any such path will pass through the highest point along the unique geodesic path from $\pi_f(v)$ to $\pi_f(w)$, this infimum is equal to $W_f(v,w)$. Conversely, the geodesic path $\eta$ from $\pi(v)$ to $\pi(w)$ can be lifted to a piecewise linear path $\hat{\eta}$ joining $v$ to $w$ in $X$ with $f(\eta(t)) = f(\hat{\eta}(t))$ for all $t$. It follows that
\[
W_f(v,w) = \max_t f(\eta(t)) = \max_t f(\hat{\eta}(t)) \geq \inf_\gamma \max_t f(\gamma(t)).
\]

To complete the proof, we claim that any piecewise linear path $\gamma$ joining $v$ to $w$ in $X$ can be replaced by a path that passes only through edges of $X$, without increasing the maximum height along the path. Indeed, this can be done constructively. First, break the image of the path into a sequence of concatenated paths $\gamma_1, \gamma_2,\ldots,\gamma_n$, where each $\gamma_i$ is contained in the relative interior of a simplex $X_i$ of $X$. By construction, $X_i$ will be either a face of $X_{i+1}$, or vice versa. Moreover, $X_1 = v$ and $X_n = w$ (\ie, $\gamma_1$ and $\gamma_n$ are constant paths). To each $\gamma_i$, associate the vertex $u_i$ of $X_i$, which takes the lowest height. It follows that $u_i$ and $u_{i+1}$ are either equal or are connected by an edge for all $i$; we can ensure an edge path without repeats by simply removing repeated consecutive vertices. Since values of $f$ on $X$ are defined by linear interpolations of vertex values, $f(u_i)$ is at most the max value of $f$ over $\gamma_i$. This completes the proof of the claim. It follows that 
\[
W_f(v,w) = \inf_{\gamma} \max_{t} f(\gamma(t)) \geq \min_{u_1,u_2,\ldots,u_n} \max_{u_i} f(u_i).
\]
Since any edge path defines, in particular, a piecewise linear path, this inequality is actually an equality and it completes the proof of the lemma.
\end{proof}

\begin{proof}~(Proof of \autoref{thm:stability-main})
The proof follows from a sequence of inequalities
\begin{align}
    2d^{GW}_q(G_f,G_g) &\leq \|W_f - W_g\|_{L^q(p \times p)} \label{eqn:stability1} \\
    &\leq |V|^{2/q} \|(W_f - W_g) \ast (p \times p)^{1/q}\|_\infty \label{eqn:stability2} \\
    &\leq |V|^{2/q} \|(f-g) \ast p^{1/q}\|_\infty \label{eqn:stability3} \\
    &\leq |V|^{2/q} \|f - g \|_{L^q(p)}. \label{eqn:stability4}
\end{align}
We explain the notation and give a proof for each inequality below.

The right-hand side of \eqref{eqn:stability1} uses a weighted $L^q$ norm as defined above, for functions on $V \times V$. That is, for $F:V \times V \to \R$, we have
\[
\|F\|_{L^q(p \times p)} = \left(\sum_{v,w \in V} |F(v,w)|^q p(v)p(w)\right)^{1/q}.
\]
The inequality follows because the right-hand side is equal to the result of evaluating the $d^{GW}_q$ loss on the coupling induced by the identity map on $V$.

For $F:V \times V \to \R$, we define
\[
\left(F \ast (p \times p)^{1/q}\right)(v,w) := F(v,w)\left(p(v) \cdot p(w)\right)^{1/q}.
\]
Then, \eqref{eqn:stability2} follows from
\begin{align*}
\|W_f - W_g\|_{L^q(p\times p)} &= \|(W_f - W_g) \ast (p \times p)^{1/q}\|_{q} \\
&\leq |V|^{2/q} \|(W_f - W_g) \ast (p \times p)^{1/q}\|_\infty,
\end{align*}
where $\|\cdot\|_q$ denotes the standard $q$-norm (identifying the space of functions on $V\times V$ with $\R^{|V|^2}$) and the last inequality is a standard estimate on $q$-norms.

Similar to the above, for a function $h:V \to \R$, we define
\[
\left(h \ast p^{1/q}\right)(v) = h(v)p(v)^{1/q}.
\]
To establish \eqref{eqn:stability3}, we refer to Lemma \ref{lemma:lca}. Let $v,w \in V$ be arbitrary vertices and assume without loss of generality that $W_f(v,w) \geq W_g(v,w)$. We have 
\begin{align*}
    |W_f(v,w) - W_g(v,w)| &= W_f(v,w) - W_g(v,w) \\
    &= \min_{u_1,\ldots,u_n} \max_{u_i} f(u_i) - \min_{x_1,\ldots,x_n} \max_{x_i} g(x_i).
\end{align*}
Let $x_1^\ast,\ldots,x_n^\ast$ be a path realizing $W_g(v,w)$ and suppose that among these $x_i^\ast$'s, $f(x_i^\ast)$ is maximized at $x_k^\ast$. Then \[\min_{u_1,\ldots,u_n} \max_{u_i} f(u_i) \leq \max_{x_i^\ast} f(x_i^\ast) \leq f(x_k^\ast)\] and \[\min_{x_1,\ldots,x_n} \max_{x_i} g(x_i) = \max_{x_i^\ast} g(x_i^\ast) \geq g(x_k^\ast),\] hence
\begin{align*}
    \min_{u_1,\ldots,u_n} \max_{u_i} f(u_i) - \min_{x_1,\ldots,x_n} \max_{x_i} g(x_i) &\leq \max_{x_i^\ast} f(x_i^\ast) - \max_{x_i^\ast} g(x_i^\ast) \\
    &\leq f(x_k^\ast) - g(x_k^\ast).
\end{align*}
Using the assumption that $p$ is balanced, we have
\begin{align*}
    |(W_f - W_g)(p \times p)^{1/q}(v,w)| &\leq \left(f(x_k^\ast) - g(x_k^\ast)\right) \left(p(v) \cdot p(w)\right)^{1/q} \\
    &\leq \left(f(x_k^\ast) - g(x_k^\ast)\right) p(x_k^\ast)^{1/q} \\
    &\leq \max_{u \in V} |(f(u) - g(u))p(u)^{1/q}| \\
    &= \|(f-g)\ast p^{1/q}\|_\infty.
\end{align*}

Finally, \eqref{eqn:stability4} follows from a general bound on $q$-norms:
\begin{align*}
    \|(f-g) \ast p^{1/q}\|_\infty \leq \|(f - g) \ast p^{1/q}\|_q = \|f-g\|_{L^q(p)}.
\end{align*}
This equation completes the proof.
\end{proof}

Let us make some remarks about the result. First, if $p$ is chosen to be uniform, then the same proof gives an improved bound with a smaller power of $|V|$:
\[
d^{GW}_q(G_f,G_g) \leq \frac{1}{2} |V|^{1/q} \|f - g\|_{L^q(p)}.
\]

Next, the proof appears to be inefficient in that no optimization over couplings is used and that we pass through an $\infty$-norm rather than working directly with $q$-norms throughout. However, \eqref{eqn:stability} is asymptotically tight up to a universal constant, as is shown by the following example. Let $V = \{v_1,\ldots,v_{2n+1}\}$ and form $X$ by joining $v_i$ to $v_{i+1}$ by an edge for $i = 1,\ldots,2n$. Define functions $f,g:V \to \R$ by setting $g(v_i) = 0$ for all $i$ and 
\[
f(v_i) = \left\{\begin{array}{cl}
0 & i \neq n+1 \\
1 & i = n+1.
\end{array}\right.
\]
Let $p$ be a probability distribution on nodes with 
\[
p(v_i) = \left\{\begin{array}{cl}
p_0 & i \neq n+1 \\
p_1 & i = n+1,
\end{array}\right.
\]
where specific values of $p_0$ and $p_1$ are to be determined. Let $G_f(V,p,W_f)$ and $G_g(V,p,W_g)$ be the associated measure network representations of the merge trees.

\begin{proposition}
With $f$, $g$, $G_f$ and $G_g$ as above, there exist choices of $p_0$ and $p_1$ such that $p$ is a balanced distribution and
\[
\left(\frac{d_q^{GW}(G_f,G_q)}{\frac{1}{2}\|f - g \|_{L^q(p)}} \right)^q = C \cdot |V|^2 + c_{|V|}, 
\]
for a universal constant $C$, where $c_{|V|}$ represents a set of lower order terms in $|V|$. 
\end{proposition}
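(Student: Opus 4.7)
The plan is to exploit two features of this construction: that $g \equiv 0$ eliminates the infimum in the GW definition entirely, and that $W_f$ takes only the values $0$ and $1$. Because $W_g$ vanishes identically, the integrand in \eqref{eq:gw} factorizes: for any admissible $C \in \Ccal(p,p)$,
\begin{align*}
\sum_{i,j,k,l} |W_f(i,k) - W_g(j,l)|^q C_{i,j} C_{k,l}
&= \sum_{i,k} W_f(v_i,v_k)^q \Bigl(\sum_j C_{i,j}\Bigr)\Bigl(\sum_l C_{k,l}\Bigr) \\
&= \sum_{i,k} W_f(v_i,v_k)^q p(v_i) p(v_k),
\end{align*}
which is independent of $C$. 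Hence $d_q^{GW}(G_f,G_g)^q = (1/2)^q \sum_{i,k} W_f(v_i,v_k)^q p(v_i)p(v_k)$ exactly. Combined with $\|f-g\|_{L^q(p)}^q = p_1$ (since $f-g$ is supported only at $v_{n+1}$), the ratio to be estimated simplifies to
\[
\left(\frac{d_q^{GW}(G_f,G_g)}{\tfrac{1}{2}\|f-g\|_{L^q(p)}}\right)^q \;=\; \frac{\sum_{i,k} W_f(v_i,v_k)^q p(v_i)p(v_k)}{p_1}.
\]

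Next I would apply Lemma~\ref{lemma:lca} on the path graph: $W_f(v_i,v_j) = 1$ iff the unique edge-path between $v_i$ and $v_j$ passes through $v_{n+1}$, equivalently $\min(i,j) \leq n+1 \leq \max(i,j)$. Partitioning the $(2n+1)^2$ ordered pairs by how many endpoints equal $v_{n+1}$ (and, among those with neither endpoint at $v_{n+1}$, by whether they lie on the same side of $v_{n+1}$) yields
\[
\sum_{i,k} W_f(v_i,v_k)^q p(v_i)p(v_k) \;=\; 2n^2 p_0^2 + 4n p_0 p_1 + p_1^2.
\]

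The remaining task is to choose $p_0,p_1$ making the ratio $(2n^2 p_0^2 + 4n p_0 p_1 + p_1^2)/p_1$ as large as possible, subject to the normalization $2np_0 + p_1 = 1$ and the balance constraint $\max(p_0,p_1)^2 \leq \min(p_0,p_1)$. The natural choice saturates balance at $p_1 = p_0^2$ (so that $\max(p_0,p_1) = p_0$ and $\min(p_0,p_1) = p_0^2$), for which $2np_0 + p_0^2 = 1$ has a unique positive solution with $p_0 = \Theta(1/n)$. Substituting and using $4np_0 = 2(1-p_0^2)$ gives
\[
\frac{2n^2 p_0^2 + 4n p_0 p_1 + p_1^2}{p_1} \;=\; 2n^2 + 4n p_0 + p_1 \;=\; 2n^2 + 2 - p_0^2,
\]
which equals $\tfrac{1}{2}|V|^2 + O(|V|)$ since $|V| = 2n+1$, establishing the claim with universal constant $C = 1/2$.

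The crux of the argument is the factorization step, which owes everything to the degeneracy $W_g \equiv 0$; after that, the proof is just bookkeeping on the path graph plus a single algebraic choice saturating balance. The main subtlety, such as it is, is that the mass must be concentrated away from $v_{n+1}$ (so that $p_1 \ll 1$ inflates the ratio) while still respecting balance, and $p_1 = p_0^2$ is the precise choice that accomplishes both.
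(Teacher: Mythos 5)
Your proof is correct and follows essentially the same route as the paper's: the same identity-coupling evaluation, the same count $2n^2p_0^2 + 4np_0p_1 + p_1^2$ of the nonzero entries of $W_f$, and the same asymptotic bookkeeping in $|V|=2n+1$. Two minor refinements are worth noting: your factorization using $W_g \equiv 0$ actually \emph{proves} that every coupling gives the same cost (the paper merely asserts that the identity coupling is optimal), and your choice $p_1 = p_0^2$ (saturating balance) yields $C = \tfrac{1}{2}$, whereas the paper takes $p_1 = 2p_0^2$ and gets $C = \tfrac{1}{4}$ --- either is fine since the proposition only asks for some universal constant.
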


\begin{proof}
The coupling induced by the identity map is an optimal coupling of $G_f$ and $G_g$, hence
\begin{align*}
2d^{GW}_q(G_f,G_g) &= \|W_f - W_g\|_{L^q(p \times p)} = \|W_f\|_{L^q(p \times p)} \\
&= \left(\sum_{v,w} W_f(v,w)^q p(v)p(w) \right)^{1/q} \\
&= \left(2 p_0^2 n^2 + p_1^2 + 4 p_0 p_1 n \right)^{1/q},
\end{align*}
where the last quantity is obtained by counting entries equal to one in the least common ancestor matrix $W_f$. On the other hand,
\begin{align*}
    \|f - g\|_{L^q(p)} &= \|f\|_{L^q(q)} \\
    &= \left(\sum_{v} f(v)^q p(v) \right)^{1/q} = p_1^{1/q}.
\end{align*}
Since $|V| = 2n+1$, we have
\begin{align*}
 2 p_0^2 n^2 + p_1^2 + 4 p_0 p_1 n &= \frac{1}{2} |V|^2 p_0^2 + (p_0^2 + 2p_0 p_1)|V| \\
& \qquad \qquad \qquad + (\frac{1}{2}p_0^2 + p_1^2 + 2p_0 p_1)
\end{align*}
and it follows that 
\[
\left(\frac{d_q^{GW}(G_f,G_q)}{\frac{1}{2}\|f - g \|_{L^q(p)}} \right)^q = \frac{1}{2} |V|^2 \frac{p_0^2}{p_1} + c_{|V|}.
\]
The claim is proved (setting $C = \frac{1}{4}$) if we can choose $2p_0^2 = p_1$ such that $2n \cdot p_0 + p_1 = 1$ and $p_0,p_1 > 0$, so that $p$ defines a valid probability density. Solving the system of equations with the given constraints, we get
\[
p_0 = -n + \sqrt{n^2 + 2}.
\]
Moreover, we can check inequalities to show that the resulting probability density is balanced; the least obvious is $p_1^2 \leq p_0$, which holds as soon as $n \geq 2$.
\end{proof}

Finally, we prove Corollary \ref{cor:stability}, which we restate here for convenience.

\noindent\textbf{Corollary~\ref{cor:stability}.}
\emph{
Let $f,g:X \to \R$ be functions defined as above and let $p$ be a balanced probability distribution. Let $G_f$ (respectively, $G_g$) denote the representation of the merge tree $T_f$ ($T_g$) defined by the shortest path strategy. Then}
\begin{equation*}
d^{GW}_q(G_f,G_g) \leq \left(|V|^{2/q} + 2\right) \|f - g\|_{L^q(p)}.
\end{equation*}

\begin{proof}
Throughout this proof, we use $W_f$ and $W_g$ to represent least common ancestor matrices, and we use $D_f$ and $D_g$ to denote matrices of shortest path distances. Observe that for any $v,w \in V$ (the vertex set of $X$), 
\begin{equation}
\label{eqn:distance_formula}
D_f(v,w) = 2 W_f(v,w) - f(v) - f(w).
\end{equation}
Using notation from the proof of \autoref{thm:stability-main}, it follows that 
\begin{align}
    d^{GW}_q(G_f,G_g) &\leq \|D_f - D_g\|_{L^q(p \times p)} \label{eqn:cor_stab_1} \\
    &= \| (2 W_f - f - f) - (2 W_g - g - g)\|_{L^q(p \times p)} \label{eqn:cor_stab_2} \\
    &\leq 2 \|W_f - W_g\|_{L^q(p \times p)} + 2\|f - g\|_{L^q(p)} \label{eqn:cor_stab_3} \\
    &\leq |V|^{2/q} \|f-g\|_{L^q(p)} + 2\|f - g\|_{L^q(p)}, \label{eqn:cor_stab_4}
\end{align}
and the claim follows, once we justify the steps. \eqref{eqn:cor_stab_1} holds by the same reasoning as in the proof of \autoref{thm:stability-main}, \eqref{eqn:cor_stab_2} is an application of \eqref{eqn:distance_formula}, \eqref{eqn:cor_stab_3} is the triangle inequality for the $L^q$ norm, together with marginalizing out a copy of $p$ in the second term, and \eqref{eqn:cor_stab_4} follows from the proof of  \autoref{thm:stability-main}.
\end{proof}

\section{Experiments for the Stability Theorem}
\label{sec:stability-experiments}

In this section, we present experiments that demonstrate our new stability theorem for GW distance in Appendix~\ref{sec:stability-details}. 
First, we create a scalar field $f$ using a mixture of Gaussian functions. 
Second, we introduce different levels of noise $\iota$ to the scalar field, $f_{\iota}$.  
In particular, we add noise sampled uniformly from $\iota \in \{1\%, 2\%, \dots, 10\%\}$ of the range of $f$, see~\autoref{fig:stability-perturbations}.  
For each instance of $\iota$, we generate $20$ random scalar fields. 

We now compute the GW distance between the original scalar field $f$ and the noisy scalar fields $f_{\iota}$. 
We use the shortest path and lowest common ancestor   strategies for encoding edge information and uniform strategy for encoding the node information (so that the probability distribution over the vertex set is always \textit{balanced}). 

\begin{figure}[!ht]
    \centering
    \includegraphics[width=0.98\columnwidth]{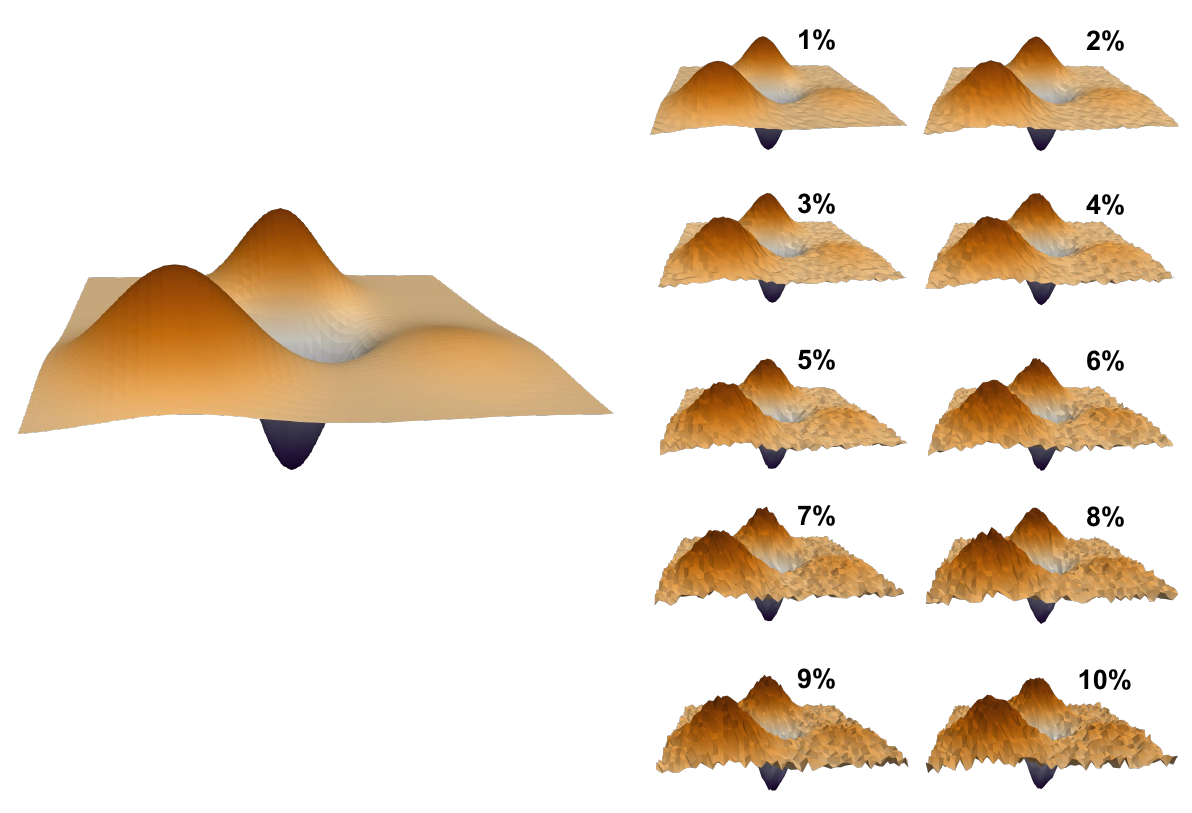}
    \vspace{-2mm}
    \caption{Left, a scalar field consists of multiple Gaussian functions. Right: ten datasets with different levels of noise.} 
    \label{fig:stability-perturbations}
\end{figure}

As stated in~\autoref{thm:stability-main}, when the probability distribution over the vertex set is \textit{balanced}, the upper bound for the GW distance using lowest common ancestor strategy is $\frac{1}{2} |V|^{2/q} \|f - g\|_{L^q(p)}$; referred to as the \emph{loose bound}. 
Furthermore, if the probability distribution on the node is uniform, there is a tighter upper bound of $\frac{1}{2} |V|^{1/q} \|f - g\|_{L^q(p)}$; referred to as the \emph{tight bound}.  

Our experimental results are shown in \autoref{fig:stability-statistics}. 
In (a), we compute the GW distance between $f$ and $f_{\iota}$ across varying $\iota$ using the lowest common ancestor strategy. 
For each fixed $\iota$, we compute the GW distances across $20$ instances and plot the corresponding box plot. 
We also plot the box plots associated with the tight and the loose bound, respectively. 
The line plots connect the mean values of these box plots. 
As shown in (a) and the zoomed-in view (b), the average GW distance (in blue) is upper bounded by the tight bound (in red), which is then upper bounded by the loose round (in yellow); thus validating~\autoref{thm:stability-main}.   

On the other hand, as shown in~\autoref{fig:stability-statistics}(c), using the shortest path strategy, the bound for the GW distance $\left(|V|^{2/q} + 2\right) \|f - g\|_{L^q(p)}$ is consistently higher than the GW distance itself in all our experiments; thus validating Corollary~\ref{cor:stability}. 

\begin{figure}[!ht]
    \centering
    \includegraphics[width=0.98\columnwidth]{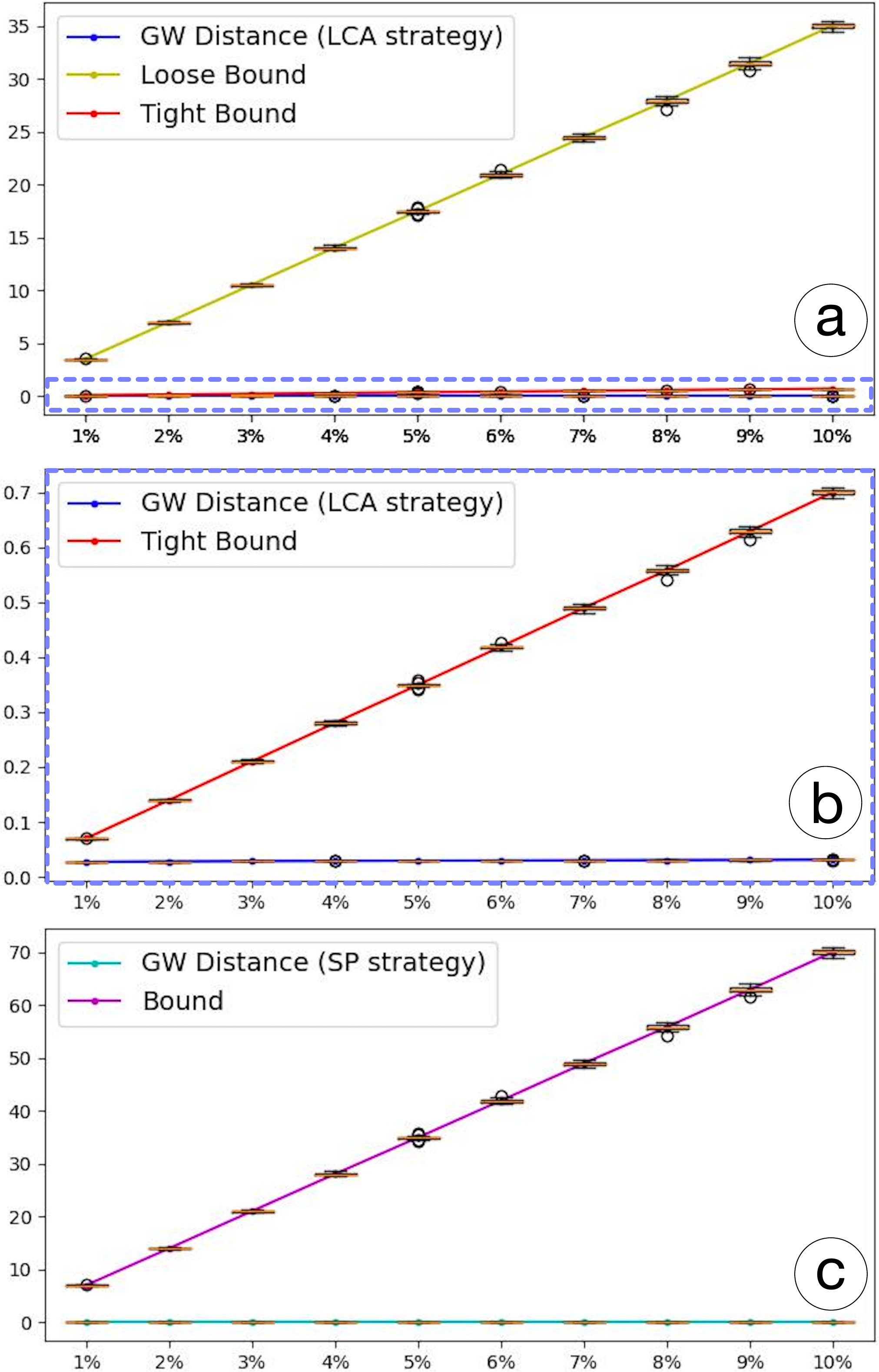}
    \vspace{-2mm}
    \caption{Box plots for the GW distance, the loose bound, and the tight bound under the lowest common ancestor (LCA) strategy (a-b), and the shortest path (SP) strategy (c).} 
    \label{fig:stability-statistics}
\end{figure}
\section{Parameter Tuning Details}
\label{sec:parameter-tuning-details}

We include details about parameter tuning for GFT and LWM described in~\autoref{sec:results} for reproducibility. 
We also provide all parameters across the three methods. 

\para{Parameters for {GFT}.}
$\lambda$ is a parameter used in {GFT} to balance the weight between region overlap size and histogram similarity. We follow the GitHub implementation (\url{https://github.com/hsaikia/mtlib}) to determine $\lambda$ using a binary search. 
Furthermore, GFT ignores matched feature pairs that have small region overlap and low histogram similarity (referred to as their \emph{combined similarity}). 
We sort all matched pairs across all time steps based on their combined similarity.  
We keep a certain percentage of the matched pairs based on the similarity.

\begin{figure*}[ht]
    \centering
    \includegraphics[width=0.98\textwidth]{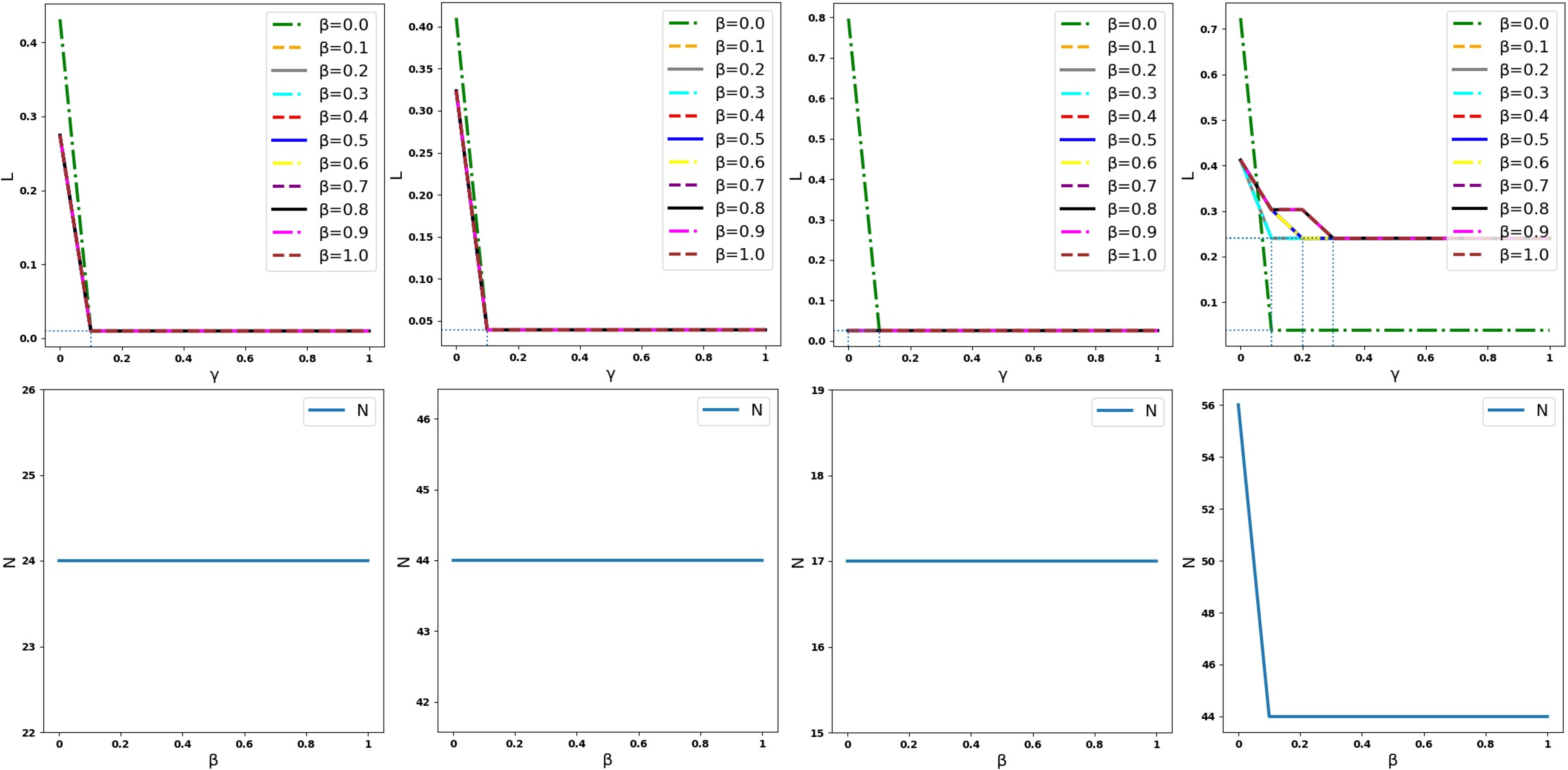}
    \vspace{-2mm}
    \caption{Parameter tuning of the {LWM} approach. From left to right: for the {\HC}, {\UCF}, {\VS}, and {\IF} datasets, respectively. For each dataset, the top shows the maximum matched distances $L$ at each choice of $\beta$ as $\gamma$ changes. We use the elbow points to find the optimal value of $\gamma$ that produces the lowest $L$. The bottom  shows the number of trajectories $N$ as $\beta$ changes with the optimal $\gamma$.} 
    \label{fig:lwm-tuning}
\end{figure*}

\para{Parameters for {LWM}.}
There are five parameters introduced in {LWM}: $\alpha$, $\beta$, $\gamma_x$, $\gamma_y$, $\gamma_z$. $\alpha$ and $\beta$ are weights associated with the Wasserstein distance, whereas $\gamma_x$, $\gamma_y$, $\gamma_z$ are the weights for geometric distances between critical points on the corresponding axes.
According to~\cite{SolerPlainchaultConche2018}, when tracking local maxima, there are guidelines to follow: $\gamma_x=\gamma_y=\gamma_z=\gamma$, $\alpha=0.1\beta$. Hence, we use grid search to determine $\gamma$ and $\beta$. 

\autoref{fig:lwm-tuning} evaluates  \emph{oversegmentations} and \emph{mismatches}   during the parameter tuning process. 
We first apply a grid search for $\gamma$ and $\beta$: fixing $\beta$ for each curve, we compute the maximum matched distance with different values of $\gamma$, and find the optimal value of $\gamma$ at the elbow points. 
Using these optimal $\gamma$, we then compute the number of trajectories {\wrt} $\beta$. 

Following this strategy for the {\HC}, {\UCF}, and {\VS} datasets, we set $\gamma_x=\gamma_y=\gamma_z=\gamma=0.1$, $\alpha=0.1$, $\beta=1$. 
However, for {\IF}, we notice that when $\beta \geq 0.1$, the maximum matched distance is much higher than the one with $\beta=0$, which implies significant mismatches when $\beta \geq 0.1$. Therefore, regardless of the number of trajectories, we set $\gamma_x=\gamma_y=\gamma_z=0.1$, $\alpha=\beta=0$. It is noticeable that in {LWM}, $\beta$ does not affect much on feature tracking for these four test datasets (see~\autoref{fig:lwm-tuning} bottom). The geometric information appears to be the dominant factor in matching.

\para{Parameter settings for all datasets.} 
We summarize parameter settings for all datasets. 

For {\HC} dataset, pFGW sets $\epsilon=6\%$, $\alpha=0.1$, $L^*=0.00997$; GFT retains $5\%$ of the matched pairs; LWM uses $\gamma=0.1$, $\beta=1$. 

For {\UCF} dataset, pFGW sets $\epsilon=1\%$, $\alpha=0.1$ and $L^*=0.03768$. 
For {\VS} dataset, pFGW sets $\epsilon=1\%$, $\alpha=0.1$, and $L^*=0.02537$.  
Both {\UCF} and {\VS} datasets use the same GFT and LWM configurations as {\HC}. 

For the {\IF} dataset: pFGW sets $\epsilon = 10\%$, $\alpha=0.4$, and $L^*=0.02693$; GFT retains $5\%$ of the matched pairs; LWM uses $\gamma=0.1$, $\beta=0.0$. 

For the {\IS} dataset: we did not include GFT results; LWM sets $\gamma=1$ and $\beta=1$. For pFGW, we use $\epsilon=10\%$, $\alpha=0.6$, $L^* = 0.4010$.  $L^*$ value is large because there are large gaps between time instances in this dataset and the hurricane also makes large movement across these time steps. 

For the {\CL} dataset: pFGW uses $\epsilon=50\%$, $\alpha=0.1$ and $L^*=0.0653$; 
GFT retains $2\%$ of all possible matches. 
After extensive parameter tuning, LWM sets $\beta=0.00$ and $\gamma=1.0$, which emphasize the importance of critical point location. 

For the {\VF} dataset: pFGW uses $\epsilon = 1\%$, $\alpha = 0.1$, and $L^* = 0.1369$. 
LWM sets $\beta = 0.00$ (producing the fewest isolated local maxima), and $\gamma=1.0$. 
GFT retains $10\%$ of all possible matches.

\begin{figure*}[!ht]
    \centering
    \includegraphics[width=0.98\textwidth]{subsampling.pdf}
    \vspace{-2mm}
    \caption{Tracking results of {\HC} (top), {\UCF} (middle), and {\IF} (bottom) dataset under subsampling. From left to right: the original pFGW, pFGW, LWM, and GFT with subsampling, respectively.} 
    \label{fig:subsampling}
\end{figure*}

\section{Subsampling and Robust Tracking}
For both the {\VS} and {\IS} datasets, we observe that topology-based feature tracking (such as pFGW and LWM) behaves better than geometry-based methods (such as GFT) when there are not sufficient region overlaps between adjacent time steps.  
In this section, we further examine the robustness of the three methods by subsampling time steps from previous datasets. 
We generate \emph{subsampled datasets} by sampling a single instance for every $6$, $15$, and $10$ time steps for {\HC}, {\UCF}, and {\IF} datasets, respectively. 

\subsection{Qualitative Comparisons}
The tracking results for these subsampled datasets are shown in~\autoref{fig:subsampling}, using the original tracking pFGW results (first column) as a reference.  
We expect the tracked trajectories to be similar for a robust tracking method, with and without subsampling. 

For the subsampled {\HC} dataset in~\autoref{fig:subsampling} (top), all three methods preserve the overall shape of trajectories, whereas pFGW demonstrates a slight advantage. 
In particular, some trajectories obtained by pFGW are missing by LWM (c.f., red boxes), whereas GFT produces oversegmentations (c.f., blue boxes). 

For the subsampled {\UCF} dataset in \autoref{fig:subsampling} (middle), LWM introduces obvious mismatches by matching geometrically distant critical points, whereas GFT creates a great number of broken trajectories on the left.
In comparison, pFGW produces better tracking results without oversegmentations or mismatches.

For the subsampled {\IF} dataset in \autoref{fig:subsampling} (bottom), all three methods show their limitations on tracking. LWM is able to preserve only a subset of long-term features and misses other features. 
GFT fails to preserve any major trajectories under subsampling. 
In comparison,  pFGW is able to replicate major patterns of the trajectories, especially the long-term ones. However, we can also see some mismatches in its tracking results. 

Based on these visualizations, pFGW is better than the other methods shown for preserving trajectories under subsampling while minimizing oversegmentations and mismatches. 
To evaluate these results quantitatively, we now discuss quantitative comparisons.   

\subsection{Quantitative Comparisons}
We utilize the notion of the Jaccard index to study the similarity between two sets of trajectories. 
Let $a$ and $b$ denote a pair of trajectories, each of which contains a finite number of critical points sampled at discrete time steps. 
We define the overlap between $a$ and $b$ as their Jaccard index, 
$$J(a, b)=\frac{|a \cap b|}{|a \cup b|}.$$

Let $A$ and $B$ be two sets of trajectories produced by two tracking methods, respectively. 
For each trajectory $a \in A$, define its matched trajectory 
$\pi(a) \in B$ such that $\pi(a) = \argmax_{b \in B} J(a, b).$ 
For any $a$, $\pi(a)$ may not be unique. 

We then introduce two measures that quantify the similarity between $A$ and $B$: 
$$S(A, B)=\frac{\sum_{a \in A}J(a, \pi(a))}{|A|}$$
$$S_W(A, B)=\frac{\sum_{a \in A}J(a, \pi(a)) |a|}{\sum_{a \in A} |a|}$$
$S$ captures the average overlap of all trajectories in $A$ against their matched ones in $B$, whereas $S_W$ is a weighted version of $S$ considering the lengths of trajectories in the summations. 
$S$ and $S_W$ are not symmetric and have optimal values of $1$ when $A = B$.     

In a subsampled dataset, a number of critical points may be missing from the original dataset. 
Let $A$ be the set of sub-trajectories from the original tracking results restricted to the subsampled time steps. 
Let $B$ be the set of trajectories obtained from the subsampled dataset. 
$S(A, B)$ and $S_W(A, B)$ describe how well a tracking algorithm preserves the trajectories against subsampling, whereas $S(B, A)$ and $S_W(B, A)$ indicate how well a tracking algorithm avoids mismatches in the subsampled dataset. 
In our experiment, we ignore (sub)trajectories of length $1$ as they are isolated critical points.  

\begin{table}[!ht]
\begin{center}
\resizebox{1.0\columnwidth}{!}{
\begin{tabular}{c|c|cccc}
\hline
{Dataset} & Method 
& $S(A, B)$ & $S(B, A)$ & $S_W(A, B)$ & $S_W(B, A)$\\ \hline
\multirow{3}{*}{\HC} 
& GFT & 0.763 & 0.806 & 0.804 & 0.828  \\
& LWM & 0.877 & \textbf{0.926} & 0.930 & \textbf{0.967}           \\
& pFGW & \textbf{0.902} & 0.902 & \textbf{0.944} & 0.940  \\ \hline
\multirow{3}{*}{\UCF}          
& GFT & 0.670 & 0.858 & 0.716 & 0.839           \\
& LWM  & 0.344 & 0.587 & 0.488 & 0.617           \\
& pFGW & \textbf{0.907} & \textbf{0.990} & \textbf{0.957} & \textbf{0.991} \\ \hline
\multirow{3}{*}{\IF}  
& GFT & 0.314 & 0.452 & 0.228 & 0.419           \\
& LWM & 0.341 & 0.556 & 0.413 & 0.535           \\
& pFGW & \textbf{0.552} & \textbf{0.624} & \textbf{0.588} & \textbf{0.598} \\ \hline
\end{tabular}
}
\end{center}
\caption{Similarity measures between a pair of tracking results with and without subsampling. For a fixed tracking method, $A$ denotes the trajectories restricted to subsampled time steps. $B$ denotes the trajectories from the subsampled dataset. The highest scores are in bold.}
\label{table:subsampling}
\vspace{-6mm}
\end{table}

The quantitive evaluation results are provided in~\autoref{table:subsampling}. 
pFGW is shown to have better performance than GFT and LWM in terms of capturing original trajectories under subsampling in almost all cases. 
In particular, for the {\UCF} and {\IF} datasets, pFGW obtains significantly higher similarity measures than GFT and LWM. 
These results align well with our observations in~\autoref{fig:subsampling} that pFGW is better at preserving trajectories and avoiding mismatches for subsampled datasets.

Drawbacks of GFT and LWM are also evident in~\autoref{table:subsampling}. 
For GFT, $S(A,B)$ and $S_W(A, B)$ over the {\UCF} dataset are low because GFT fails to maintain continuity of trajectories on the left. 
For the {\IF} dataset, GFT does not maintain long-term trajectories, leading to low similarity measures. 
For LWM, similarity measures are lowest for the {\UCF} dataset due to significant mismatches in the tracking results. 
LWM maintains only a few long-term features for the {\IF} dataset, leading to measures lower than those from pFGW. 

To summarize, based on both qualitative and quantitative evaluations, GFT appears to lose its ability to track features when there are not sufficient time resolutions for geometry-based tracking, for instance, the subsampled {\IF} dataset. 
Whereas LWM captures major features during tracking, it is not as robust as pFGW in tracking features for datasets with low time resolutions. 
For example, for the subsampled {\IF} datasets, LWM misses a large portion of the original trajectories. 
For the {\UCF} dataset, LWM generates many obvious mismatches. Such drawbacks are also clearly reflected in the similarity measures. 
In comparison, our pFGW method performs quite well in robustly tracking features on datasets with low time resolutions.


\end{document}